\newcommand{\Z}{\mathcal{Z}}
\newcommand{\kla}{\left ( }
\newcommand{\mer}{\right ) }
\renewcommand{\for}{\begin{eqnarray*}}
\newcommand{\mel}{\end{eqnarray*}}
\def\fr{\begin{align*}}
\newcommand{\kl}{\pl \le \pl}
\newcommand{\gl}{\pl \ge \pl}
\newcommand{\lel}{\pl = \pl}
\newcommand{\ten}{\otimes}
\newcommand{\pl}{\hspace{.1cm}}
\newcommand{\ran}{\rangle}
\newcommand{\lan}{\langle}
\newcommand{\om}{\omega}
\newcommand{\al}{\alpha}
\newcommand{\si}{\sigma}
\newcommand{\E}{{\mathcal E}}
\newcommand{\V}{{\mathcal V}}
\newcommand{\A}{{\mathcal A}}
\newcommand{\B}{{\mathcal B}}
\newcommand{\D}{{\mathcal D}}
\newcommand{\M}{{\mathcal M}}
\newcommand{\R}{{\mathcal R}}
\renewcommand{\S}{{\mathcal S}}
\newcommand{\T}{{\mathcal T}}
\newcommand{\X}{{\mathcal X}}
\newcommand{\Y}{{\mathcal Y}}
\newcommand{\W}{{\mathcal W}}
\newcommand{\mz}{{\mathbb M}}
\newcommand{\N}{{\mathcal N}}
\newcommand{\I}{{\mathcal I}}
\newcommand{\U}{{\mathcal U}}
\newcommand{\C}{\mathcal{C}}
\newtheorem{lemma}{Lemma}[section]
\newtheorem{prop}[lemma]{Proposition}
\newtheorem{theorem}[lemma]{Theorem}
\newtheorem{cor}[lemma]{Corollary}
\newtheorem{rem}[lemma]{Remark}
\newtheorem{definition}[lemma]{Definition}
\newcommand{\re}{\begin{rem}\rm}
\newcommand{\mar}{\end{rem}}
\newtheorem{exam}[lemma]{Example}
\newcommand{\ket}[1]{|{#1}\rangle}
\newcommand{\ketbra}[1]{|{#1}\rangle\langle{#1}|}
\newcommand{\qd}{\end{proof}\vspace{0.5ex}}
\newcommand{\prf}{\begin{proof}[\bf Proof:]}
\newcommand{\xspace}{\hbox{\kern-2.5pt}}
\newtheorem*{theorem*}{Theorem}
\title{Unifying Entanglement with Uncertainty via Symmetries of Observable Algebras}
\author{Li Gao, Marius Junge, Nicholas LaRacuente}
\DeclareMathOperator{\tr}{tr}
	\newcommand{\RR}{\mathbb{R}}
	\newcommand{\CC}{\mathbb{C}}
	\newcommand{\BB}{\mathbb{B}}
	\newcommand{\id}{\hat{1}}
	\newcommand{\s}{\mathfrak{S}}
	\renewcommand{\t}{\mathfrak{T}}
	\newcommand{\Hilbert}{\mathcal{H}}
\begin{document}
\maketitle

\begin{abstract}
Strong subadditivity goes beyond the tensored subsystem and commuting operator models. As previously noted by Petz and later by Araki and Moriya, two subalgebras of observables satisfy a generalized SSA-like inequality if they form a commuting square. We explore the interpretation and consequences in finite dimensions, connecting various entropic uncertainty relations for mutually unbiased bases with the positivity of a generalized conditional mutual information (CMI), and with inequalities on relative entropies of coherence and asymmetry. We obtain a bipartite resource theory of operations under which the two subalgebras are respectively invariant and covariant, with CMI as a monotone, and generalized non-classical monotones based on squashed entanglement and entanglement of formation. Free transformations support conversion between entanglement and uncertainty-based configurations, as ``EPR $\leftrightarrow$ 2UCR." Our theory quantifies the common non-classicality in entanglement and uncertainty, implying a strong conceptual link between these fundamentally quantum phenomena.
\end{abstract}

\section{Introduction / Background}
\noindent The strong subadditivity (SSA) of quantum entropy, proved by Lieb and Ruskai \cite{lieb_proof_1973} in 1973, is one of most fundamental inequalities in quantum information. It states that for any tripartite density $\rho^{ABC}$, the conditional mutual information (CMI) defined as follows is always non-negative:
\begin{align}
\label{eq:origssa}
I(A:B|C)_\rho \equiv H(AC)_\rho +H(BC)_\rho - H(C)_\rho - H(ABC)_\rho \geq 0 \pl,
\end{align}
where $H$ is the entropy of the reduced density. Some other important theorems, such as the data processing inequality of quantum relative entropy, are known to be equivalent to SSA. Quantum entanglement is another of the most historically important notions in the theory. The squashed entanglement \cite{christandl_squashed_2004} uses CMI to quantify entanglement,
\begin{equation}
E_{sq}(A:B)_\rho \equiv \inf_{\tilde{\rho}^{ABC} : \tilde{\rho}^{AB} = \rho^{AB}} I(A:B|C)_{\tilde{\rho}}.
\label{eq:esquash}
\end{equation}
Of similarly historic importance is the uncertainty principle for complementary observables. Quantum information can express the uncertainty principle in terms of entropy \cite{wehner_entropic_2010, coles_entropic_2017}, notably in the relatively recent uncertainty relations with quantum memory \cite{berta_uncertainty_2010},
\begin{align}
\label{eq:ucrmemory}
H(\X|B)_\rho + H(\Z|B)_\rho \ge \log \frac{1}{c} + H(A|B)_\rho,
\end{align}
in which $\X$ and $\Z$ are two complementary measurement bases of the same quantum system $A$, allowing quantum correlations between $A$ and a memory system $B$, and $c$ is the maximum overlap between bases $\X$ and $\Z$. The uncertainty relations with memory suggest links between entanglement and uncertainty. Petz and Araki-Moriya's algebraic strong subadditivity \cite{petz_certain_1991, araki_equilibrium_2003}, which we recall in Corollary \ref{cor:ssa}, suggest further connections that we discuss in section \ref{sec:connection}.

The nature of entanglement remains a matter of debate \cite{earman_puzzles_2015} even almost a century after Einstein, Podolsky \& Rosen's objection to its conflict with the completeness and locality in physical theories \cite{einstein_can_1935}. The usual tensor system $A \otimes B$ may have spatial separation, though it is neither necessary nor implied for this. Several experiments \cite{boschi_experimental_1998, michler_experiments_2000} have observed entanglement-like features of teleportation and Bell violation between different aspects of the same particle, a tensor product of co-located systems. Conversely, field theories \cite{witten_aps_2018} suggest spacelike separation without tensor products. Meanwhile, commuting operator models \cite{cleve_perfect_2017} may support an operational analog of local operations \cite{crann_state_2019} for systems not in tensor product. Subalgebraic generalizations of entanglement exist in the identical particle setting \cite{balachandran_entanglement_2013, balachandran_algebraic_2013} and even in subspaces without algebraic closure \cite{barnum_subsystem-independent_2004}. We neither claim to resolve the debate nor suggest a new definition. Rather, we quantify a form of nonclassicality that appears to extend beyond entanglement, linking it with uncertainty relations.

In particular, we study bipartite, correlation-like phenomena between algebras that need not commute with each other, requiring a commuting square condition as a looser form of independence. A party holding one algebra may perform operations that are undetectable from the other - symmetries of the other algebra. We generalize conditional mutual information (CMI) and squashed entanglement to this setting. We construct operations under which generalized CMI and its squashed counterpart are positive and non-increasing. We show that these allow conversion between two qubits of maximum uncertainty for each party and one entangled qubit pair. Positivity of generalized CMI implies strong subadditivity and several uncertainty relations for mutually unbiased bases, such as an uncertainty relation with memory (example \ref{exam:ucrmem}) and generalized Maassen-Uffink relation (Theorem \ref{B}).

A pure entangled state appears mixed in any complete, local measurement basis but reveals its purity under joint observables. For observable algebras, the idea extends to incompatible measurement bases on a single system. A pure state that appears mixed for each observable in a given set may be an eigenstate of one generated by them. The goal of this paper is to turn this heuristic into a quantitative correspondence that generalizes to mixed densities.

\section{Results}
\subsection{Notation}
We denote by $\BB(\Hilbert)$ the bounded operators on a Hilbert space $\Hilbert$. We use the capital letters $A,B,C,\cdots$ to index quantum systems $\Hilbert_A,\Hilbert_B,\Hilbert_C$ and denote $|A|=\dim \Hilbert_A$. In this paper, we consider only finite dimensional Hilbert spaces and algebras. A \emph{von Neumann algebra} $\M$ is a $*$-subalgebra of $\BB(\Hilbert)$ and isomorphic to an orthogonal sum of matrix blocks. Let $tr$ be the matrix trace. An operator $\rho\in B(\Hilbert)$ is a density if $\tau(\rho)=1,\rho\ge 0$. For the subalgebra $\M$, we denote by $S_1(\M)$ the densities in $\M$. For the subsystem $A$, $S_1(A)$ denotes the densities on $A$. Given a subalgebra $\N\subset \M$, the conditional expectation $\E_\N:\M \to \N$ is the unique completely positive trace preserving map such that 
\[\tr(ab)=\tr(\E_\N(a)b) \pl, \pl \forall \pl a\in \M, b\in \N\pl.\]
The conditional expectation $\E_\N$ sends every state $\rho$ to its restriction on $\N$. The von Neumann entropy of a density $\rho$ is defined as $H(\rho)=-\tr(\rho \log \rho )$. We denote the subalgebra entropy as $H(\N)_\rho = H(\E_\N(\rho))$. This is consistent with the subsystem notation $H(A)_\rho=H(\rho^A)$. By $D (\rho \| \si) = tr ( \rho \log \rho - \rho \log \si)$ we denote the relative entropy. For a pair of subalgebras $\S, \T \subset \M$, we denote by $\S \T = \braket{\S \cup \T}$ the algebra generated by their union. We use $\CC 1$ to denote the scalar multiple of of identity, which is the observables corresponding to phase. More information about conditional expectations for finite-dimensional von Neumann algebras is in Appendix \ref{sec:alg}.

For a tripartite state $\rho^{ABC}$, we denote by $\A=\BB(\Hilbert_A)$ the subalgebra of matrices on subsystem $A$ and similarly $\A \C=\BB(\Hilbert_A\ten \Hilbert_C)$. Within a qubit system $\BB(\Hilbert)\cong M_2$, we denote by $X$,$Y$ and $Z$ the Pauli matrices, and respectively by $\X$, $\Y$ and $\Z$ as subalgebras generated. For higher dimensional system, we may also use these letters to denote complementary, commutative subalgebras. When needed, we use a subscript to denote the restriction to a subsystem, such as $\X_A\subset \A$ for the $\X$-measurement on $\Hilbert_A$. For an algebra $\M$, the center of $\M$ is the part which commutes with all elements of $\M$, such as a classical subsystem. If $\M$'s center is trivial, we will call $\M$ a factor as in the von Neumann algebra tradition.

\subsection{Squares}
To simplify notation, in this section the capital letters $\A, \B, \C, ...$ will be used for subalgebras, not necessarily corresponding to subsystems. We denote
\[ I\bigg[\begin{array}{cc}\A\!&\M\\\C\!&\B\end{array}\bigg]_{\rho}
\equiv H(\A)+H(\B)-H(\M)-H(\C) \pl , \]
where $\A,\B,\C\subset M$ are subalgebras of a finite dimensional $C^*$ algebra $\M$ with a fixed trace $\tr$, and $\E_\M, \E_\A, \E_\B, \E_\C$ are the usual, unique, trace-preserving conditional expectations.

We recall that $\bigg[\begin{array}{cc}\A\!&\M\\ \C\!& \B\end{array}\bigg]$  is called a \emph{commuting square} \cite{popa_orthogonal_1983} if $\C \subseteq \A \cap \B \subseteq \M$, and $\E_\A \E_\B = \E_\B \E_\A = \E_\C$. When this is the case, we denote 
\begin{equation}
I(\A : \B \subseteq \M)_\rho \equiv I\bigg[\begin{array}{cc}\A\!& \M \\ \C \!& \B \end{array}\bigg]_{\rho} \pl ,
\end{equation}
with $\C$ implicitly given as $\A \cap \B$. Thanks to the pattern
 $ \bigg[\begin{array}{cc} +\!& -\\ -\!& +\end{array}\bigg]$
and Araki-Moriya's SSA \cite{araki_equilibrium_2003} inequality, we have the following observation:
\begin{lemma} (\textbf{Chain Rule}) \label{mo1} Let $\A,\B,\C,\T,\S \subset \M$ . Then
 \[ I\bigg[\begin{array}{cc}\A \!& \M \\ \C \!& \B \end{array}\bigg]
 \lel I\bigg[\begin{array}{cc} \A \!& \M \\ \T \!& \S \end{array}\bigg]+
 I\bigg[\begin{array}{cc}\T \!& \S \\ \C \!& \B \end{array}\bigg] \pl .\]
If moreover, $\bigg[\begin{array}{cc}\T \!& \S \\ \C \!& \B \end{array}\bigg]$ is a commuting square, then
  $  I\bigg[\begin{array}{cc}\A \!& \M \\ \C \!& \B \end{array}\bigg]
 \gl I\bigg[\begin{array}{cc} \A \!& \M \\ \T \!& \S \end{array}\bigg] \pl.$
\end{lemma}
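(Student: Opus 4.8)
The plan is to treat the two assertions separately. The first (the chain rule equality) is a purely formal telescoping identity valid for \emph{arbitrary} subalgebras, requiring no independence hypothesis; the second reduces, by feeding in the first, to the non-negativity of a single commuting-square CMI, which is exactly the algebraic strong subadditivity of Araki--Moriya.

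For the equality, I would unfold both sides using the definition $I\bigg[\begin{array}{cc}\A\!&\M\\\C\!&\B\end{array}\bigg] \lel H(\A)+H(\B)-H(\M)-H(\C)$. The right-hand side of the claimed identity becomes
\[ \big(H(\A)+H(\S)-H(\M)-H(\T)\big) \pl + \pl \big(H(\T)+H(\B)-H(\S)-H(\C)\big) \pl, \]
and the terms $H(\S)$ and $H(\T)$ cancel in pairs, leaving $H(\A)+H(\B)-H(\M)-H(\C)$, which is the left-hand side. No condition on the subalgebras is used; the cancellation is forced by the sign pattern $\bigg[\begin{array}{cc} +\!& -\\ -\!& +\end{array}\bigg]$, which makes the intermediate entropies $H(\S)$ and $H(\T)$ telescope away regardless of how $\T,\S$ are chosen.

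For the inequality, I would substitute the chain rule just proved to write
\[ I\bigg[\begin{array}{cc}\A \!& \M \\ \C \!& \B \end{array}\bigg] \lel I\bigg[\begin{array}{cc} \A \!& \M \\ \T \!& \S \end{array}\bigg] \pl + \pl I\bigg[\begin{array}{cc}\T \!& \S \\ \C \!& \B \end{array}\bigg] \pl, \]
so that the desired inequality is equivalent to $I\bigg[\begin{array}{cc}\T \!& \S \\ \C \!& \B \end{array}\bigg] \gl 0$. Under the standing hypothesis that $\bigg[\begin{array}{cc}\T \!& \S \\ \C \!& \B \end{array}\bigg]$ is a commuting square --- that is, $\C \subseteq \T \cap \B \subseteq \S$ and $\E_\T \E_\B = \E_\B \E_\T = \E_\C$ --- this summand is precisely the commuting-square conditional mutual information $I(\T:\B\subseteq\S)_\rho$, whose non-negativity is the algebraic SSA of Araki--Moriya \cite{araki_equilibrium_2003} (recalled in Corollary \ref{cor:ssa}). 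The inequality then follows at once.

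The real content is therefore not inside this lemma but is packaged into the cited commuting-square positivity: the observation to make is that the second block generated by the chain rule is itself a commuting-square CMI, so its positivity is available off the shelf. The only step demanding care is matching conventions --- checking that the corner ordering of $\bigg[\begin{array}{cc}\T \!& \S \\ \C \!& \B \end{array}\bigg]$ and the conditional-expectation identity $\E_\T \E_\B = \E_\B \E_\T = \E_\C$ line up with the hypotheses of the positivity statement --- after which the argument is immediate. I expect no genuine obstacle beyond this bookkeeping, since the equality is a triviality and the inequality is a direct appeal to the algebraic SSA.
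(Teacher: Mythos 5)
Your proposal is correct and matches the paper's own argument: the equality is exactly the telescoping cancellation of $H(\S)$ and $H(\T)$ forced by the sign pattern $\bigg[\begin{array}{cc} +\!& -\\ -\!& +\end{array}\bigg]$, and the inequality follows by discarding the second summand, whose non-negativity is the Araki--Moriya commuting-square SSA recalled in Corollary \ref{cor:ssa}. No gaps; the bookkeeping check you flag (that the corners of $\bigg[\begin{array}{cc}\T \!& \S \\ \C \!& \B \end{array}\bigg]$ line up with the hypotheses of the positivity statement) is indeed the only thing to verify, and it does line up.
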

\noindent This chain rule is the intuition behind a variety of inequalities, yet it follows immediately from writing out the quantities involved. Note that this does not require von Neumann algebras as the objects of study - we could for instance consider the chain rule between squares of quantum channels, states, or any other objects for which there is a meaningful notion of entropy.
\subsection{Strong Subadditivity, Uncertainty, Coherence and Asymmetry} \label{sec:connection}
In this short section, we note some connections that unify several ideas in quantum information.
\begin{definition}
Given a subalgebra $\N \subseteq \M$, we define the $\alpha$-R\'enyi asymmetry measure of relative entropy with respect to $\N$ as
\begin{equation}
D^{\N}_\alpha(\rho) \equiv \inf_{\sigma \in S_1(\N)} D_\alpha(\rho \| \sigma) \pl.
\end{equation}
In particular, we define $D^{\N}(\rho) \equiv D^{\N}_1(\rho)$.
\end{definition}
\noindent By Lemma \ref{lem:entropyforms}, $D^{\N}(\rho) = H(\E_\N(\rho)) - H(\rho)$. For a pair of subalgebras $\S, \T \subseteq \M$ forming a commuting square, $D_\alpha^{\S}(\rho) \geq D_\alpha^{\S \cap \T}(\E_\T(\rho))$ for all $1/2 \leq \alpha \leq \infty$ by data processing of sandwiched relative R\'enyi entropy \cite{muller-lennert_quantum_2013,wilde_strong_2014}. In section \ref{sec:dnprops}, we show that $D^{\N}$ is a resource monotone. $D^\N$ relates to the Holevo asymmetry as in \cite{marvian_extending_2014}, first introduced in \cite{gour_measuring_2009,vaccaro_tradeoff_2008}. $D^\N$ generalizes this notion from groups to subalgebras $\N$.
\begin{cor} \label{cor:ssa}
Let $\S, \T \subseteq \M$ form a commuting square. Let $\sigma \in S_1(\M)$ be a density such that $\E_\T(\sigma) = \sigma$. Then
\begin{align}
\label{ssa1} 
\begin{split}
I(\S : \T \subset \M)_{\rho} & \equiv - D(\E_\S(\rho) \| \E_\S(\sigma)) - D(\E_\T(\rho) \| \E_\T(\sigma)) + D(\rho \| \sigma) + D(\E_{\S \cap \T}(\rho) \| \E_{\S \cap \T}(\sigma)) \\
	& =  H(\S)_\rho + H(\T)_\rho - H(\M)_\rho - H(\S \cap \T)_\rho \geq 0 ,
\end{split}
\end{align}
with the inequality holding for all densities $\rho\in \M$ iff $\E_{\S}\circ\E_{\T}=\E_{\T}\circ\E_{\S}=\E_{\S \cap \T}$.
\end{cor}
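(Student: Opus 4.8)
The plan is to establish the two displayed equalities in \eqref{ssa1} and then the positivity, treating them in sequence. First I would prove the identity connecting the relative-entropy expression to the entropy expression. By the formula $D^{\N}(\rho) = H(\E_\N(\rho)) - H(\rho)$ recorded just before the corollary (via Lemma \ref{lem:entropyforms}), and more precisely the general decomposition $D(\rho \| \sigma) = -H(\rho) - \tr(\rho \log \sigma)$, I would expand each of the four relative entropies. The crucial simplification is that $\sigma$ is $\E_\T$-invariant, so $\E_\T(\sigma) = \sigma$; combined with the commuting-square hypothesis $\E_\S \E_\T = \E_\T \E_\S = \E_{\S \cap \T}$, this forces $\E_\S(\sigma)$, $\E_{\S \cap \T}(\sigma)$, and $\sigma$ itself to share compatible logarithms. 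The goal is to show that all the cross terms $\tr(\cdot \log \sigma)$ cancel in pairs, leaving exactly $H(\S)_\rho + H(\T)_\rho - H(\M)_\rho - H(\S \cap \T)_\rho$, which is the definition of $I(\S : \T \subset \M)_\rho$.

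The key algebraic step I expect to lean on is that $\tr(\E_\N(\rho) \log \sigma) = \tr(\rho \log \sigma)$ whenever $\log \sigma \in \N$, which holds because $\E_\N$ is trace-preserving and self-adjoint with respect to the trace inner product and fixes $\N$. Since $\sigma = \E_\T(\sigma)$ lives in $\T$, and since $\E_\S(\sigma) = \E_{\S \cap \T}(\sigma)$ by the commuting-square relation applied to $\sigma$, the operators $\log \E_\S(\sigma)$ and $\log \E_{\S \cap \T}(\sigma)$ coincide and lie in $\S \cap \T$. I would verify that under these constraints the four trace-against-$\log\sigma$ terms organize into a telescoping pattern matching the sign pattern $\bigl[\begin{smallmatrix} + & - \\ - & + \end{smallmatrix}\bigr]$, so they sum to zero. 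This reduces the first equality to pure entropy bookkeeping, which is the content already encapsulated in the bracket notation $I\bigl[\begin{smallmatrix}\S & \M \\ \S\cap\T & \T\end{smallmatrix}\bigr]_\rho$.

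For the inequality $I(\S : \T \subset \M)_\rho \geq 0$, the efficient route is to invoke Araki-Moriya's algebraic strong subadditivity (the SSA inequality cited as the justification for the sign pattern $\bigl[\begin{smallmatrix} + & - \\ - & + \end{smallmatrix}\bigr]$) directly on the commuting square. Equivalently, one can derive it from data processing of relative entropy: the first line of \eqref{ssa1} can be regrouped as a difference of relative entropies to which the monotonicity under the conditional expectations $\E_\S$ and $\E_\T$ applies. Concretely, positivity follows because $D(\rho \| \sigma) + D(\E_{\S \cap \T}(\rho) \| \E_{\S \cap \T}(\sigma)) \geq D(\E_\S(\rho) \| \E_\S(\sigma)) + D(\E_\T(\rho) \| \E_\T(\sigma))$ is exactly a strong-subadditivity statement for the square. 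I would cite \cite{araki_equilibrium_2003} for this and note that the existence of the auxiliary $\E_\T$-invariant $\sigma$ is guaranteed (for instance by the trace-normalized identity), so the relative-entropy form is always available.

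The main obstacle, and the only place requiring care, is the final ``iff'' clause: the inequality holds for \emph{all} densities $\rho \in \M$ precisely when $\E_\S \circ \E_\T = \E_\T \circ \E_\S = \E_{\S \cap \T}$. The forward direction is the commuting-square case just handled. For the converse I would argue contrapositively: if the composition $\E_\S \E_\T$ fails to equal $\E_{\S \cap \T}$, then $\bigl[\begin{smallmatrix}\S & \M \\ \S\cap\T & \T\end{smallmatrix}\bigr]$ is not a commuting square, and one can exhibit a density $\rho$ violating positivity. The delicate point is that the chain rule of Lemma \ref{mo1} and the equality characterization are tied to \emph{exact} commutation of the conditional expectations, so I would construct a witness $\rho$ supported where the two conditional expectations disagree — effectively a state for which the recovery map built from $\E_\S$ and $\E_\T$ fails, producing a strictly negative generalized CMI. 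Establishing that such a violating $\rho$ always exists when the commuting-square identity fails is the technical crux; I expect this to follow from the equality conditions in the data-processing inequality together with the structure theory of conditional expectations on finite-dimensional von Neumann algebras recalled in Appendix \ref{sec:alg}.
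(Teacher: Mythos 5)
Your treatment of the equality in \eqref{ssa1} and of positivity matches the paper's proof in substance: the paper likewise expands the four relative entropies, uses $\tr(\E_\T(\rho)\log \E_\T(\sigma)) = \tr(\rho \log \sigma)$ (since $\log \E_\T(\sigma) = \log\sigma \in \T$) and the commuting square to get $\tr(\E_{\S\cap\T}(\rho)\log \E_{\S\cap\T}(\sigma)) = \tr(\E_\S(\rho)\log \E_\S(\sigma))$, so the cross terms cancel pairwise; and it proves positivity by a single application of data processing, writing $I(\S : \T \subset \M)_\rho = D^{\S}(\rho) - D^{\S}(\E_\T(\rho)) \geq 0$, where the inequality is monotonicity of relative entropy under the channel $\E_\T$ combined with $\E_\T \E_\S = \E_{\S\cap\T}$ and Lemma \ref{lem:entropyforms}. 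Merely citing Araki--Moriya for positivity is legitimate but defeats the paper's stated purpose of giving a self-contained finite-dimensional proof; your data-processing regrouping, once made explicit as above, \emph{is} the paper's argument, so up to this point you are on the same route.

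The genuine gap is the converse (``inequality for all $\rho$ implies commuting square''). You propose a contrapositive: if $\E_\S\E_\T \neq \E_{\S\cap\T}$, construct a witness $\rho$ with strictly negative generalized CMI, and you concede that establishing the existence of such a witness is ``the technical crux'' that you ``expect'' to follow from equality conditions in data processing. That expectation is not a proof --- the existence of a violating state is exactly the statement to be proven --- and the data-processing route is not even available here: the identity $I(\S:\T\subset\M)_\rho = D^{\S}(\rho) - D^{\S}(\E_\T(\rho))$ itself relies on the commuting square (without it, $\E_\S\E_\T(\rho)$ need not equal $\E_{\S\cap\T}(\rho)$, and $\E_\T\E_\S(\rho)$ need not lie in $\S$ at all), so there is no DPI whose equality case you can exploit. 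The paper instead argues directly, and no search over all of $\M$ is needed: assume the entropy inequality holds for all densities and plug in any $\sigma \in S_1(\S)$. Then $H(\S)_\sigma = H(\M)_\sigma = H(\sigma)$, so the inequality reads $H(\E_\T(\sigma)) \geq H(\E_{\S\cap\T}(\sigma))$; but since $\E_{\S\cap\T} = \E_{\S\cap\T}\E_\T$, data processing gives the reverse inequality, forcing $D^{\S\cap\T}(\E_\T(\sigma)) = H(\E_{\S\cap\T}(\sigma)) - H(\E_\T(\sigma)) = 0$. Faithfulness of $D^{\S\cap\T}$ (Proposition \ref{prop}, part ii)) then yields $\E_\T(\sigma) = \E_{\S\cap\T}(\sigma)$ for every density $\sigma \in \S$, hence by linearity $\E_\T\E_\S = \E_{\S\cap\T}\E_\S = \E_{\S\cap\T}$ on all of $\M$, and symmetrically for $\E_\S\E_\T$. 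In short: the correct move is to observe that on states already lying in $\S$ the inequality degenerates and pins down $\E_\T|_{\S}$; your contrapositive framing turns this easy, direct argument into an open-ended construction problem that you do not solve.
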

\noindent An earlier versions of this von Neumann algebra strong subadditivity was proven by Petz in 1991 \cite{petz_certain_1991}, and a case of it was shown by Araki and Moriya in 2003 \cite{araki_equilibrium_2003}. We show a simplified form here, give a simplified proof for finite-dimensional algebras in terms of the usual data processing inequality in Appendix \ref{sec:alg}, and show that the inequality holding for all densities implies the commuting square.
\begin{rem} \label{rem:recovery}
As the proof of Corollary \ref{cor:ssa} is essentially data processing, we can strengthen the result to
\begin{equation}
I(\S : \T \subset \M)_{\rho} \geq -2 \log (F(\rho, R_{\E_\S(\rho), \E_\T} \circ \E_\T(\rho))) 
\end{equation}
in which $R$ is the Petz recovery map \cite{petz_sufficient_1986}, the universal recovery map in \cite{junge_universal_2018}, or any other recovery map $R$ for which $D^\S(\rho) - D^\S(\E_\T(\rho)) \geq -2 \log (F(\rho, R \circ \E_\T(\rho)))$ holds. This follows from equation \eqref{eq:ssapf}. We are may switch the roles of $\S$ and $\T$. Furthermore, this suggests that the idea of quantum Markov chains and approximate quantum Markov chains carry through to the subalgebraic setting. When $I(\S : \T \subset \M) = 0$, this corresponds to the existence of a perfect recovery map such that $R_{\E_\S(\rho), \E_\T} \circ \E_\T$ acts as the identity, which we may interpret as independence of $\E_\S(\rho)$ from $\E_\T(\rho)$ conditioned on any overlap contained in $\E_{\S \cap \T}(\rho)$.
\end{rem}
As some immediate consequences:
\begin{itemize}
	\item Strong subadditivity \eqref{eq:origssa} is positivity of $I(\A \C : \B \C \subset \A \B \C)$ in Corollary \ref{cor:ssa}.
	\item \label{exam:ucrmem} The uncertainty principle with memory \eqref{eq:ucrmemory}, in the case of complementary bases $\X_A$ and $\Z_A$ forming a commuting square, is positivity of $I(\X_A \B : \Z_A \B \subset \A \B)$. Via Remark \ref{rem:recovery}, refinements to strong subadditivity via recovery maps carry over to the uncertainty principle with quantum memory, implying a state-dependent tightening.
	\item Quantum coherence is intuitively the non-classicality in a given basis (see \cite{baumgratz_quantifying_2014}, or \cite{streltsov_colloquium:_2017} for a review). For an orthonormal basis $\X$, the relative entropy of coherence is defined as $C_r^{\X}(\rho) = D^\X(\rho)$ with the operational meaning of distillable coherence \cite{winter_operational_2016}. For complementary bases $\X$ and $\Z$, Corollary \ref{cor:ssa} for $I(\X : \Z \subset \A)$ implies the coherence uncertainty relation of \cite{singh_uncertainty_2016}.
	\item In terms of asymmetries, we may rewrite Corollary \ref{cor:ssa} as $D^\S(\rho)+D^\T(\rho)\ge D^{\S \cap \T}(\rho),$ a subadditivity of asymmetry. In particular, we could take $\S$ and $\T$ to be the invariant observables under the actions of groups $G_\S$ and $G_\T$. $\E_\S$ and $\E_\T$ would then be uniform averages over $G_\S$ and $G_\T$ respectively, with $\E_{\S \cap \T}$ the uniform average over both.
\end{itemize}
Replacing tensored subsystems by mutually \textit{commuting} algebras of observables has been studied before \cite{junge_connes_2011, cleve_perfect_2017, witten_aps_2018, crann_state_2019} to understand infinite-dimensional correlations and field theories. Finite-dimensional generalizations of entanglement entropy to observables or subalgebras (\cite{barnum_subsystem-independent_2004, alicki_quantum_2009, petz_algebraic_2010, derkacz_entanglement_2012, balachandran_entanglement_2013, balachandran_algebraic_2013}) have largely focused on entanglement entropy on globally pure states, and on application to systems of indistinguishable particles.

Surprisingly, positivity of $I(\S : \T \subseteq \M)$ holds even for $\S$ and $\T$ that do not commute with each other, such as mutually unbiased bases. We will further show that there are meaningful notions of side-private operations with seemingly reasonable monotones. This strongly suggests that one can define a meaningful notion of a bipartite system as a pair of algebras $\S$ and $\T$ if $[\E_\S, \E_\T] = 0$, even when $[\S, \T] \neq \{0\}$.

\subsection{Operations, Complementarity, and Symmetry}
We will generalize local operations via symmetry. Indeed, local operations are a special case of symmetries - in a bipartite system $AB$, $A'$s local operations are symmetries of the observable subalgebra $\B$. There is thus a conceptual link between symmetry and privacy. As described by Jason Crann, this is really about complementarity and commutant algebras \cite{crann_private_2016}: operations private from $\S$ are those with which all observables in $\S$ commute. Given a subgroup of the unitary group on Hilbert space $G \subseteq U(\Hilbert)$, we denote the commutant of $G$ by $G' = \{a \in \BB(\Hilbert) : a u = u a \text{ } \forall u \in G \}$. Hence $G$ is private to $G'$. For local operations, we would choose $G = U(H_A) \otimes \{\id^B\}$.

We recall that a channel $\Phi$ is \textit{covariant} with a unitary subgroup $G \subseteq U(\Hilbert)$ if $U \Phi(\rho) U^\dagger = \Phi(U \rho U^\dagger)$ for all $U \in G$ \cite{marvian_mashhad_symmetry_2015}. By definition, an asymmetry measure with respect to $G$ is non-increasing under $G$-covariant channels. In particular $\Phi \circ \E_{G'} = \E_{G'} \circ \Phi$ for any $G$-covariant channel $\Phi$. Hence $D_\alpha^{G'}(\rho) \geq D_\alpha^{G'}(\Phi(\rho))$ for any $G$-covariant $\Phi$ by data processing. We extend these ideas beyond groups to matrix algebras. We denote by $\N_\M'$ the commutant of subalgebra $\N \subset \M$ in $\M$, dropping the subscript $\M$ when it is clear from context. In the locality case, $\A$'s local operations are $U(H_A) \otimes \{\id^B\}$-covariant.
\begin{definition}
We call a channel $\Phi : S_1(\M) \rightarrow S_1(\M)$ $\T$-preserving if $\E_\T(\rho) = \E_\T(\Phi(\rho))$. In other words, $\Phi$ is within the symmetries of $\T$.

We call $\Phi$ $\T$-preserving up to isometry if $\E_\T(\Phi(\rho)) = U \E_\T(\rho) U^\dagger$ for some isometry $U : S_1(\M) \rightarrow S_1(\tilde{\M})$ for some algebra $\tilde{\M}$.
\end{definition}
\begin{definition}
Let $\Phi : S_1(\M) \rightarrow S_1(\M)$ be a quantum channel with adjoint $\Phi^\dagger : \M \rightarrow \M$. Let $\N \subset \M$ be a subalgebra. We call $\Phi$ an $(\N \subseteq \M)$-bimodule channel if $a \Phi^\dagger(b) c = \Phi^\dagger(a b c)$ for all $a,c \in \N$, and $b \in \M$. In other words, left and right multiplications of elements of $\M$ by elements of $\N$ commute with an $(\N \subseteq \M)$-bimodule channel.

We call $\Phi : S_1(\M) \rightarrow S_1(\tilde{\M})$ an $(\N \subseteq \M)$-adjusted bimodule channel if it can be written as a bimodule channel preceded by an expansion $\M \rightarrow \M \otimes \R, \N \rightarrow \N \otimes \R$ for some extra algebra $\R$, and followed by tracing out any systems left in complete mixture.
\end{definition}
\begin{rem} \label{rem:bimodcomm} \normalfont
For any $\N \subseteq \M$, $\rho \in S_1(\M)$, $b \in \M$, and $a, c \in \M$ with $\Phi$ an $(\N \subseteq \M)$-bimodule channel,
\begin{equation}
\tr(c \Phi(\rho) a b) = \tr(\Phi(\rho) a b c) = \tr(\rho \Phi^\dagger (a b c)) = \tr(\rho a \Phi^\dagger(b) c) = \tr(\Phi(c \rho a) b) \pl.
\end{equation}
Hence the bimodule property also applies in the Schr\"odinger picture. Furthermore,
\begin{equation}
\tr(a \E_\N(\Phi(\rho)) c) = \tr(\E_\N^\dagger(c a) \Phi(\rho)) = \tr(\Phi^\dagger(c a) \rho) = \tr(c a \Phi^\dagger(\rho)) = \tr(a \E_\N(\Phi(\rho)) c) \pl,
\end{equation}
so $[\Phi, \E_\N] = 0$. $D^\N(\rho) \geq D^\N(\Phi(\rho))$ for all $\rho$ by data processing for any algebra $\M \supseteq \N$. Since this holds for any $\M$, we do not explicitly denote $\M$ in the $D^\N$ notation. These properties extend easily to adjusted bimodule channels, for which $\Phi \E_\N = \E_{\tilde{\N}} \Phi$ for some $\tilde{\N} \subseteq \tilde{\M}$.
\end{rem}
\noindent The bimodule property is an analog of covariance for von Neumann algebras. Bimodule channels are free operations in the von Neumann algebra version of asymmetry.

In the bipartite setting, we revisit the connection between symmetry and privacy. If an operation is $\T$-perserving, it is invisible to any observable in $\T$. There is a challenge in defining an \textit{operational} bipartite information monotone, noting some peculiarities:
\begin{enumerate}
	\item As per Remark \ref{rem:secretshare}, $I(\S : \T \subseteq \M)$ is non-increasing when enlarging $\S$ or $\T$, but not under shrinking. It would seem that greater access to the state implies fewer resources. Indeed, $I(\CC 1 : \CC 1 \subseteq \M)_{\ketbra{\psi}} = \log |M|$ for any pure $\ket{\psi}$, the maximum attainable value, while $I(\M : \M \subseteq \M)_{\ketbra{\psi}} = 0$ has none. This reverses what we expect operationally.
	\item We might instead consider $I(\S' : \T' \subseteq \M)$, the generalized CMI between commutant algebras for $\S$ and $\T$ that form a co-commuting square. This allows lossy channels in the Heisenberg picture by performing a unitary within $\S$ or $\T$, then reducing the algebra, while $\rho^{\M}$ is invariant up to unitary. $I(\S' : \T' \subseteq \M)$ nonetheless may depend on information in $\S' \cap \T'$ that is not in $\S \T$, and it is less intuitive why we should quantify mutual information of commutants.
\end{enumerate}
\noindent We will find that $I(\S : \T \subseteq \S \T)$ has some special properties that suggest it as a canonical form. To see this, we examine complementarity in von Neumann algebras.

A quantum channel represents a physical process of open system time-evolution. As illustrated in figure \ref{fig:stine}, a quantum channel is always equivalent to unitary time-evolution in a larger system, followed by tracing out the environment. This naturally leads each channel $\Phi$ to have a \textit{complementary channel} $\Phi^c$ arising from unitary evolution with an initially pure environment, followed by tracing out the output. 
\begin{figure}
\includegraphics[width=0.8\textwidth]{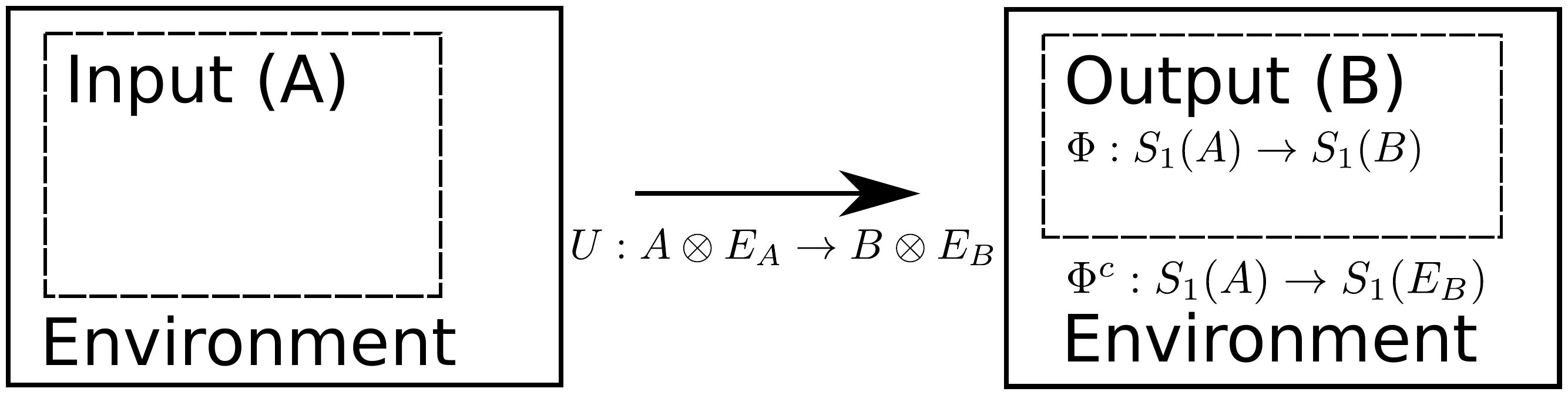}
\caption{Schematic of a quantum channel. The system and its environment undergo unitary evolution together.
\label{fig:stine}}
\end{figure}
As an immediate application of complementarity, we derive an uncertainty principle in the style of Maassen-Uffink \cite{maassen_generalized_1988} but generalized to subalgebras.
\begin{theorem}\label{B}Let $\S, \T \subseteq \A$ form a commuting square, and $B,C$ be two auxiliary systems. Then for any tripartite state $\rho^{ABC}$,
\begin{align}\label{tripa} H(\E_\S^c|\B)_{\rho}+H(\T|\C)_\rho\ge H(\S \cap \T|\C)_\rho\pl,\end{align}
where $H(\E_\S^c|\B)_{\rho} = H((\E_\S^c \otimes \id) \rho^{AB}) - H(B)_{(\E_\S^c \otimes \id)(\rho)}$. For commutative $\S$, $H(\E_\S^c|\B)_{\rho} = H(\S'|\B)_{\rho}$, and we recover the commuting square case of the original Maassen-Uffink relation. In particular, if $\R=\CC$, then $H(\R|C)_\rho=log |A|$.
\end{theorem}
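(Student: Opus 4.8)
The plan is to peel \eqref{tripa} into three essentially independent pieces: a purely entropic rewriting of the $\T$ and $\S\cap\T$ terms as an asymmetry, one application of the commuting-square data-processing inequality to trade $\S\cap\T$ for $\S$, and finally a Stinespring-dilation argument that bounds the complementary-channel term by that asymmetry using only positivity of CMI (Corollary \ref{cor:ssa}). The point is that the two memories $B$ and $C$ never enter a single nonlocal step, which is exactly what lets the tripartite form go through.

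\noindent\textbf{Reduction of the $\T$-side.} Writing $\T\C = \T\otimes\BB(\Hilbert_C)$ and $(\S\cap\T)\C$ for the corresponding generated algebras, we have $H(\T|\C)_\rho = H(\T\C)_\rho - H(C)_\rho$ and $H(\S\cap\T|\C)_\rho = H((\S\cap\T)\C)_\rho - H(C)_\rho$, so the $H(C)$ terms cancel; with $\sigma = \E_{\T\C}(\rho^{AC})$ and $\E_{(\S\cap\T)\C}\sigma = \E_{(\S\cap\T)\C}\rho^{AC}$ (nested conditional expectations), Lemma \ref{lem:entropyforms} gives
\[ H(\T|\C)_\rho - H(\S\cap\T|\C)_\rho = H(\sigma) - H(\E_{(\S\cap\T)\C}\sigma) = - D^{(\S\cap\T)\C}(\sigma). \]
Thus \eqref{tripa} is equivalent to $H(\E_\S^c|\B)_\rho \geq D^{(\S\cap\T)\C}(\E_{\T\C}(\rho^{AC}))$. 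Now, since $\{\S,\T\}$ is a commuting square in $\A$ and $\E_{\S\C}=\E_\S\otimes\id$, one checks $\E_{\S\C}\E_{\T\C}=\E_{\S\cap\T}\otimes\id=\E_{(\S\cap\T)\C}$, so $\{\S\C,\T\C\}$ is a commuting square in $\BB(\Hilbert_A\otimes\Hilbert_C)$ with intersection $(\S\cap\T)\C$. The data-processing inequality for the asymmetry measure (the $\alpha=1$ case of $D_\alpha^{\S}(\rho)\ge D_\alpha^{\S\cap\T}(\E_\T\rho)$ recalled before Corollary \ref{cor:ssa}) then yields $D^{\S\C}(\rho^{AC}) \geq D^{(\S\cap\T)\C}(\E_{\T\C}(\rho^{AC}))$, so it suffices to prove the $\T$-free bound $H(\E_\S^c|\B)_\rho \geq D^{\S\C}(\rho^{AC}) = H(\S\C)_\rho - H(AC)_\rho$.

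\noindent\textbf{Complementary channel and SSA.} Let $V:\Hilbert_A\to\Hilbert_A\otimes\Hilbert_E$ be a Stinespring isometry of $\E_\S$, so $\E_\S = \tr_E(V\cdot V^\dagger)$ and $\E_\S^c = \tr_A(V\cdot V^\dagger)$. Purify $\rho^{ABC}$ by a system $F$ and apply $V$ on $A$ to obtain a pure state $\Phi$ on $AEBCF$. Reading off marginals gives $H(\E_\S^c|\B)_\rho = H(EB)_\Phi - H(B)_\Phi$, $H(\S\C)_\rho = H(AC)_\Phi$, and (since $V$ is an isometry on $A$) $H(AC)_\rho = H(AEC)_\Phi$. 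Substituting and using the purity relations $H(EB)_\Phi = H(ACF)_\Phi$ and $H(AEC)_\Phi = H(BF)_\Phi$ collapses the difference to
\[ H(\E_\S^c|\B)_\rho - D^{\S\C}(\rho^{AC}) = H(F|AC)_\Phi + H(F|B)_\Phi = I(F:E|AC)_\Phi \geq 0, \]
where the middle equality again uses purity ($H(F|B)_\Phi = -H(F|AEC)_\Phi$) and the final inequality is exactly positivity of CMI, i.e. Corollary \ref{cor:ssa} (ordinary SSA). This closes the chain.

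\noindent\textbf{Specializations and main obstacle.} For commutative $\S$ the complementary channel $\E_\S^c$ merely records the classical block data, and computing its dilation identifies $H(\E_\S^c|\B)_\rho = H(\S'|\B)_\rho$, recovering the commuting-square Maassen--Uffink relation; taking $\S\cap\T=\CC 1$ (as for mutually unbiased bases, where the square forces trivial intersection) gives $H(\S\cap\T|\C)_\rho = \log|A|$ and the maximal-uncertainty bound. I expect the one genuinely delicate point to be the bookkeeping in the last step: verifying the entropy identifications for the dilation of a \emph{noncommutative} conditional expectation and pairing the purity complements correctly, together with checking the commutative reduction to $H(\S'|\B)$. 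Everything else is either elementary entropy algebra or a direct citation of results already established above.
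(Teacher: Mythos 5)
Your proof is correct; I checked the entropy bookkeeping (the identifications $H(\E_\S^c|\B)_\rho = H(EB)_\Phi - H(B)_\Phi$, $H(\S\C)_\rho = H(AC)_\Phi$, $H(AC)_\rho = H(AEC)_\Phi$, and the purity pairings all hold), and the skeleton matches the paper's: a Stinespring dilation of $\E_\S$, purity-based identities, and the commuting-square SSA for $(\S\cap\T)\C \subseteq \S\C, \T\C \subseteq \A\C$. Indeed your asymmetry data-processing step $D^{\S\C}(\rho^{AC}) \ge D^{(\S\cap\T)\C}(\E_{\T\C}(\rho^{AC}))$ is, via Lemma \ref{lem:entropyforms}, the same inequality as the paper's invocation of Corollary \ref{cor:ssa} for that square. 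Where you genuinely diverge is the treatment of mixed $\rho^{ABC}$. The paper proves \eqref{tripa} only for pure $\rho^{ABC}$ --- in that case your purifier $F$ is trivial, $I(F:E|AC)_\Phi = 0$, and your chain collapses exactly to theirs --- and then extends to mixed states by asserting that the left side is concave and the right side convex in $\rho$. You instead purify at the outset and absorb the mixedness into the single nonnegative remainder $I(F:E|AC)_\Phi \ge 0$, i.e., one extra application of ordinary tensor-system SSA. Your route buys two things: it removes the concavity/convexity step entirely (the paper states it without proof; notably, your Step 1 rewriting of the right side as $D^{(\S\cap\T)\C}(\E_{\T\C}(\rho^{AC}))$ --- a relative entropy composed with a linear map --- is precisely what makes that convexity claim transparent), and it gives an exact expression for the slack in the $\T$-free bound, which meshes with the recoverability refinements of Remark \ref{rem:recovery}. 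The paper's route keeps the accounting lighter: only one environment appears, and the pure-state computation is a few lines.

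One caveat on your last paragraph: the identification $H(\E_\S^c|\B)_\rho = H(\S'|\B)_\rho$ is exact when the commutative algebra $\S$ is maximal abelian (the Maassen--Uffink setting of a complete measurement basis). For commutative $\S$ whose minimal projections $P_i$ have rank $m_i > 1$, equations \eqref{eq:condexpform} and \eqref{eq:condexpcomp} show that $\E_\S^c$ appends completely mixed factors $\id_{m_i}/m_i$, so that $H(\E_\S^c|\B)_\rho = H(\S'|\B)_\rho + \sum_i q_i \log m_i$ with $q_i = \tr(P_i \rho^A)$. This imprecision sits in the theorem statement itself (the paper's own proof never addresses the specializations), so it is not a gap in your argument, but your instinct that this identification is the delicate point was well placed.
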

\noindent A conditional expectation is a special kind of quantum channel, for which the output space corresponds exactly to a subalgebra of the full matrix algebra.

A related idea to complementarity is purification. Given a state $\rho^A$, we may purify as $\rho^A = \tr_{A^c}(\ketbra{\psi}^{M})$, where $M \supset A$, and $A^c$ is the complement of the $A$ system. Physically, purification reflects the possibility that a mixed state is the marginal of a pure, entangled state of the studied system and its environment. For a bipartite pure state $\ket{\psi}^{AB}$,
\begin{equation}
H(A)_{\ketbra{\psi}} = H(B)_{\ketbra{\psi}} \pl.
\end{equation}
Given a bipartite state $\rho^{AB}$, and purifying in $ABC$,
\begin{equation} \label{eq:condentduality}
H(B|A)_{\ketbra{\psi}} = - H(B|C)_{\ketbra{\psi}} \pl,
\end{equation}
which is a basic form of tripartite entanglement monogamy. One can easily see that we are free to permute $A$, $B$ and $C$ in the above. Given a tripartite state $\rho^{ABC} = \tr_{(ABC)^c}(\ketbra{\psi}^{ABCD})$,
\begin{equation} \label{eq:mutinfduality}
H(A|C) + H(B|C) - H(AB|C) = H(A|D) + H(B|D) - H(AB|D) \pl.
\end{equation}
For pure $\rho^{ABC}$, equation \eqref{eq:mutinfduality} implies equation \eqref{eq:condentduality}. Following Jason Crann's program \cite{crann_private_2016}, we might expect to recover this equality when replacing complements by commutants. This actually fails for $I(\S : \T \subseteq \M)$ and $I(\S' : \T' \subseteq \M)$ when traced subsystems depend on commutative parts of algebras (see Appendix \ref{sec:alg}). In contrast, recalling that $\S$ and $\T$ form a co-commuting square if $\E_{\S'} \E_{\T'} = \E_{\T'} \E_{\S'} = \E_{\S' \cap \T'}$, we have an algebraic version of equation \eqref{eq:mutinfduality}.
\begin{theorem} \label{thm:dual}
Let $\S, \T \subset \M$ be subalgebras forming commuting and co-commuting squares, for which $\M$ is a factor. Let $\rho^{\S \T} = \E_{\S \T}(\ketbra{\psi}^{M})$. Then
\begin{equation}
I(\S : \T \subset \S \T)_{\rho} 
	= I(\S' : \T' \subset \S' \T')_{\ketbra{\psi}} \pl.
\end{equation}
\end{theorem}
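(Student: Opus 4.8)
The plan is to expand both generalized CMIs into subalgebra entropies of the single pure state $\ketbra\psi$ and to match them corner by corner using the commutant calculus available in the factor $\M$. Since $\S,\T,\S\cap\T\subseteq\S\T$ and the tower property gives $\E_\N\E_{\S\T}=\E_\N$ for such $\N$, the expectation $\E_{\S\T}$ defining $\rho$ drops out of the first three terms, so that
\[ I(\S:\T\subset\S\T)_\rho = H(\S)_{\ketbra\psi}+H(\T)_{\ketbra\psi}-H(\S\cap\T)_{\ketbra\psi}-H(\rho), \]
with $H(\rho)=H(\S\T)_{\ketbra\psi}$. On the right-hand side I would invoke the double commutant theorem, valid because $\M$ is a factor, in the form $(\S\T)'=\S'\cap\T'$ and $(\S\cap\T)'=\S'\T'$, to rewrite
\[ I(\S':\T'\subset\S'\T')_{\ketbra\psi}=H(\S')_{\ketbra\psi}+H(\T')_{\ketbra\psi}-H((\S\cap\T)')_{\ketbra\psi}-H((\S\T)')_{\ketbra\psi}. \]
Subtracting, the claim reduces to the vanishing of $\delta(\S)+\delta(\T)-\delta(\S\cap\T)-\delta(\S\T)$, where $\delta(\N):=H(\N)_{\ketbra\psi}-H(\N')_{\ketbra\psi}$.

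The second step is a commutant-duality lemma for a single subalgebra. Decomposing $\Hilbert=\bigoplus_i\Hilbert_i^{\N}\otimes\K_i^{\N}$ with $\N=\bigoplus_i\BB(\Hilbert_i^\N)\otimes 1$ and $\N'=\bigoplus_i 1\otimes\BB(\K_i^\N)$, and writing $\ket\psi=\sum_i\sqrt{p_i}\,\ket{\psi_i}$ adapted to the central projections $z_i$, a block-by-block Schmidt computation of $\E_\N(\ketbra\psi)$ and $\E_{\N'}(\ketbra\psi)$ shows that the shared center entropy $H(\{p_i\})$ and the matched block entropies $H(\tr_{\K_i}\ketbra{\psi_i})=H(\tr_{\Hilbert_i}\ketbra{\psi_i})$ cancel, leaving
\[ \delta(\N)=\sum_i p_i\log\frac{\dim\K_i^\N}{\dim\Hilbert_i^\N}=\bra\psi\log d_\N\ket\psi,\qquad d_\N:=\sum_i\frac{\dim\K_i^\N}{\dim\Hilbert_i^\N}\,z_i. \]
Here $d_\N$ is the central ``dimension-ratio'' element recording, in each block, the multiplicity of $\N$ against its dimension; one checks $d_{\N'}=d_\N^{-1}$, consistent with $\delta(\N')=-\delta(\N)$.

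With this in hand the theorem becomes the operator identity $\log d_\S+\log d_\T=\log d_{\S\cap\T}+\log d_{\S\T}$, which I would establish from the structure of the commuting square. When all four algebras are factors, each $d_\N=\sqrt{\dim\M}/\dim\N$ is a scalar and the identity collapses to the commuting-square dimension count $\dim\S\,\dim\T=\dim(\S\cap\T)\,\dim(\S\T)$ of \cite{popa_orthogonal_1983}, which I would verify directly; the general case is the block-wise refinement of this count on the common central decomposition, with the co-commuting square hypothesis ensuring that $\S',\T'$ themselves assemble into a commuting square so that the central bookkeeping is symmetric under $\N\mapsto\N'$ and $I(\S':\T'\subset\S'\T')$ is a genuine commuting-square CMI. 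I expect the \emph{main obstacle} to be precisely this last step: the elements $\log d_\S$ and $\log d_\T$ are central in $\S$ and $\T$ respectively but need not commute with one another, so the identity cannot be read off termwise; one must exploit the grid structure of the commuting square (equivalently, the amalgamated structure $\S\T\cong\S\otimes_{\S\cap\T}\T$ at the level of traces) to align the three distinct central decompositions, using that $\M$ is a factor to fix the global normalization. Once the operator identity holds, pairing it with $\ketbra\psi$ closes the argument, with the subsystem duality \eqref{eq:mutinfduality} recovered as the special case in which $\S,\T$ are subsystem algebras and every $d_\N$ is trivial.
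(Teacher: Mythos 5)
Your reduction is sound, and it is a genuinely different route from the paper's. The tower property removes $\E_{\S\T}$ from three of the four terms, the double-commutant identities $(\S\T)'=\S'\cap\T'$ and $(\S\cap\T)'=\S'\T'$ are legitimate here, and your commutant-duality lemma is correct: writing $\ket{\psi}=\sum_i\sqrt{p_i}\ket{\psi_i}$ along the central projections of $\N$ and Schmidt-decomposing blockwise does give $H(\N)_{\ketbra{\psi}}-H(\N')_{\ketbra{\psi}}=\bra{\psi}\log d_\N\ket{\psi}$ with $d_\N=\sum_i(\dim\K_i^\N/\dim\Hilbert_i^\N)z_i$. Since the theorem quantifies over all $\ket{\psi}$, it is thereby \emph{exactly equivalent} to the operator identity $\log d_\S+\log d_\T=\log d_{\S\cap\T}+\log d_{\S\T}$. (The paper never touches this structure theory: it proves the single inequality $I(\S:\T\subset\S\T)_\rho\le I(\S':\T'\subset\S'\T')_{\ketbra{\psi}}$ via complementary channels of the conditional expectations, Petz recovery, and data processing, then gets equality by applying the same inequality to the commutant pair, co-commutation being what makes that pair a commuting square.)

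The genuine gap is precisely the step you flag as ``the main obstacle'': the operator identity is never proven, and the route you sketch for it cannot work as stated, because it makes no quantitative use of the co-commuting square hypothesis. Neither the identity nor the scalar count $\dim\S\,\dim\T=\dim(\S\cap\T)\,\dim(\S\T)$ that it is supposed to refine follows from the commuting square alone. Concretely, in $\M=M_3$ take $\S=M_2\oplus\CC$ and $\T=\CC\oplus M_2$, block-diagonal with respect to the partitions $\{1,2\},\{3\}$ and $\{1\},\{2,3\}$. Both conditional expectations are pinchings, and either composition kills all off-diagonal entries, so $\E_\S\E_\T=\E_\T\E_\S=\E_{\S\cap\T}$ with $\S\cap\T$ the diagonal algebra and $\S\T=M_3$: a commuting square. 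But $\S'=\CC 1_2\oplus\CC$, $\T'=\CC\oplus\CC 1_2$, $\S'\cap\T'=\CC 1$, and $\E_{\S'}\E_{\T'}\neq\E_{\CC 1}$ (test on $\diag(1,0,0)$), so it is \emph{not} co-commuting. Here
\begin{equation*}
\dim\S\,\dim\T=25\neq 27=\dim(\S\cap\T)\,\dim(\S\T),\qquad
\log d_\S+\log d_\T=-(\log 2)\,\diag(1,2,1)\neq -(\log 3)\,1=\log d_{\S\cap\T}+\log d_{\S\T},
\end{equation*}
and correspondingly the theorem's conclusion itself fails: for $\ket{\psi}=\ket{1}$ one computes $I(\S:\T\subset\S\T)_\rho=0$ while $I(\S':\T'\subset\S'\T')_{\ketbra{\psi}}=\log 2-\log 3<0$. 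So co-commutation is not ``central bookkeeping symmetry''; it is the entire content of the identity you still owe, and your proposal contains no mechanism by which it enters the dimension/multiplicity count. Any completion along your lines must show concretely how $\E_{\S'}\E_{\T'}=\E_{(\S\T)'}$ constrains the joint block structure of the four algebras (for instance, forcing the span of $\S\T$ to exhaust the generated algebra with matching multiplicity ratios blockwise), and that step is comparable in difficulty to the theorem itself.
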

\noindent Theorem \ref{thm:dual} is a good hint as to the likely right form in the algebraic setting. We now examine the operational resource theory.
\begin{definition}
Let $\S, \T \subseteq \S \T$ form a commuting and co-commuting square. Then we define $I(\S : \T) \equiv I(\S : \T \subseteq \S \T)$ as an entropic measure. In particular, we interpret the following:
\begin{enumerate}
	\item $\s \equiv \S \cap (\S \cap \T)'$ as the algebra of observables accessible to one party, and $\t \equiv \T \cap (\S \cap \T)'$ as that of observables for the other.
	\item $\S \cap \T$ as a frozen memory, which the two parties may not disturb.
\end{enumerate}
We also define the following notation:
\begin{equation} \label{eq:condnotate}
I(\s : \t | \S \cap \T) \equiv I(\S : \T) \pl.
\end{equation}
\end{definition}
\noindent Here we see $\s$ and $\t$ as the two parties, analogous to $A$ and $B$ in the tensor setting for $I(A : B |C)$. $\S \cap \T$ plays the role of the conditioning system $C$. Because $I(A : B | C)$ is not monotonic in either direction under channels on $C$, even in the tensor model there is some indication that a resource theory should not allow the two parties to modify $C$. For tensor factors, we can immediately separate $\S \cap \T$ from $\S$ and $\T$ as a 3rd party. For arbitrary von Neumann algebras, we may have commuting subalgebras that appear in $\S$, $\T$ and $\S \cap \T$. We construct operations as follows:
\begin{definition} (\textbf{State-Modifying Individual Operations})
Let $\Phi : S_1(\S \T) \rightarrow S_1(\tilde{S} \tilde{\T})$ be a quantum channel that is:
\begin{enumerate}
	\item An $\S$-adjusted bimodule.
	\item $\T$-preserving up to isometry.
\end{enumerate}
Then we call $\Phi$ a state-modifying or Schr\"odinger picture $\S$-operation (or $\S$-op for short). We define state-modifying $\T$-operations analogously.
\end{definition}
\noindent We may interpret the $\S$-adjusted bimodule property as implying that the transformation $\rho \rightarrow \Phi(\rho)$ still looks like itself from $\S$'s perspective, in that $\E_\S(\rho) \rightarrow \Phi(\E_\S(\rho))$. Meanwhile, $\T$-preservation prevents $\S$-operations from transmitting any information to $\T$. We may relax preservation by an isometry, representing ways $\Phi$ may change the way in which $\T$ embeds in the larger algebra, as long as these don't change the entropy of $\E_\T$. Proposition \ref{prop:bimod} gives a full description of $\S$-bimodule channels that only enlarge $\T$, implying a complete characterization of state-modifying $\S$-operations.

\begin{theorem} \label{thm:mono}
Under an $\S$-operation for which $\S \rightarrow \tilde{\S}$, $\T \rightarrow \tilde{\T}$, and $\rho \rightarrow \tilde{\rho}$, 
\begin{equation}
I(\S : \T)_\rho \geq I(\tilde{\S} : \tilde{\T})_{\tilde{\rho}} \geq 0 \pl.
\end{equation}
Analogously, $I$ is also non-increasing and positive under any sequence of $\S$-ops and $\T$-ops.
\end{theorem}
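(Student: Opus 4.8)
The plan is to reduce everything to the single identity
\[ I(\S:\T)_\rho = D^\S(\rho) - D^\S(\E_\T(\rho)), \]
which recasts the generalized CMI as the amount of $\S$-asymmetry destroyed by the conditional expectation $\E_\T$. I would first justify this: using $D^\N(\rho) = H(\E_\N(\rho)) - H(\rho)$ (Lemma \ref{lem:entropyforms}), the commuting-square relation $\E_\S\E_\T = \E_{\S\cap\T}$, and $H(\rho) = H(\S\T)_\rho$ for $\rho\in S_1(\S\T)$, one computes $D^\S(\rho) - D^\S(\E_\T(\rho)) = H(\S)_\rho + H(\T)_\rho - H(\S\T)_\rho - H(\S\cap\T)_\rho = I(\S:\T)_\rho$. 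Writing $D^\S(\rho) = D(\rho\,\|\,\E_\S(\rho))$ and $D^\S(\E_\T(\rho)) = D(\E_\T(\rho)\,\|\,\E_{\S\cap\T}(\rho))$ (the relative-entropy form, minimized at the conditional expectation), this is exactly equation \eqref{eq:ssapf}, and $I(\S:\T)_\rho \geq 0$ is data processing under $\E_\T$. Applying the same reasoning to the image data gives $I(\tilde{\S}:\tilde{\T})_{\tilde{\rho}}\geq 0$, provided $\tilde{\S},\tilde{\T}$ again form a commuting square in $\tilde{\S}\tilde{\T}$, which I would read off from the structural description of $\S$-operations in Proposition \ref{prop:bimod}.

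For the monotonicity I would treat the two terms separately. Setting $\tilde{\rho}=\Phi(\rho)$, the $\S$-adjusted bimodule property gives $\E_{\tilde{\S}}\Phi = \Phi\E_\S$ (Remark \ref{rem:bimodcomm}), so by data processing under the channel $\Phi$,
\[ D^{\tilde{\S}}(\tilde{\rho}) = D(\Phi(\rho)\,\|\,\Phi(\E_\S(\rho))) \leq D(\rho\,\|\,\E_\S(\rho)) = D^\S(\rho). \]
Thus the first, added term can only decrease.

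The crux is to show the subtracted term is \emph{preserved}, i.e.
\[ D(\E_{\tilde{\T}}(\tilde{\rho})\,\|\,\E_{\tilde{\S}\cap\tilde{\T}}(\tilde{\rho})) = D(\E_\T(\rho)\,\|\,\E_{\S\cap\T}(\rho)). \]
$\T$-preservation up to isometry gives $\E_{\tilde{\T}}(\tilde{\rho}) = U\E_\T(\rho)U^\dagger$ for an isometry $U$. Because $\S\cap\T\subseteq\S$, the $(\S\subseteq\M)$-bimodule identity restricts to an $(\S\cap\T\subseteq\M)$-bimodule identity, so $\Phi$ also intertwines $\E_{\S\cap\T}$ with the conditional expectation onto its image; using the commuting-square structure of $\tilde{\S},\tilde{\T}$, the identification of the frozen overlap $\widetilde{\S\cap\T}$ with $\tilde{\S}\cap\tilde{\T}$, and the fact that $\E_{\S\cap\T}(\rho)\in\T$ is itself $\T$-preserved, one obtains $\E_{\tilde{\S}\cap\tilde{\T}}(\tilde{\rho}) = U\E_{\S\cap\T}(\rho)U^\dagger$. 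Invariance of relative entropy under the isometry $U$ then yields the claimed equality. I expect this identification — that an $\S$-operation carries the overlap $\S\cap\T$ exactly onto $\tilde{\S}\cap\tilde{\T}$ and acts there by the same isometry appearing in $\T$-preservation — to be the main obstacle, and it is precisely where the commuting-square hypothesis on the images and the explicit form of $\S$-operations in Proposition \ref{prop:bimod} are needed.

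Combining the two terms gives $I(\tilde{\S}:\tilde{\T})_{\tilde{\rho}} = D^{\tilde{\S}}(\tilde{\rho}) - D^{\tilde{\S}}(\E_{\tilde{\T}}(\tilde{\rho})) \leq D^\S(\rho) - D^\S(\E_\T(\rho)) = I(\S:\T)_\rho$, with positivity as established. For a $\T$-operation the identical argument runs with the symmetric identity $I(\S:\T)_\rho = D^\T(\rho) - D^\T(\E_\S(\rho))$, swapping the roles of $\S$ and $\T$. Finally, since the composition of monotone, positivity-preserving maps is again monotone and positivity-preserving, an arbitrary alternating sequence of $\S$-ops and $\T$-ops is non-increasing and positive, which is the last assertion.
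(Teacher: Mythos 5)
Your proposal is correct and takes essentially the same approach as the paper's own proof: your identity $I(\S:\T)_\rho = D^\S(\rho) - D^\S(\E_\T(\rho))$ is the paper's equation \eqref{eq:ssapf}, your data-processing step using $\E_{\tilde{\S}}\Phi = \Phi\E_\S$ is exactly the paper's chain $D(\E_{\S\T}(\rho)\,\|\,\E_\S(\rho)) \geq D(\E_{\S\T}(\Phi(\rho))\,\|\,\E_\S(\Phi(\rho)))$ from Lemma \ref{lem:entropyforms}, data processing, and Remark \ref{rem:bimodcomm}, and your preservation of the subtracted term is the paper's observation that $\T$-preservation (up to isometry) fixes $H(\T)$ and $H(\S\cap\T)$. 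The only difference is bookkeeping: the paper disposes of the dilation and partial-trace parts of the adjusted-bimodule property by citing step 1 of its algebra-modifying operations, whereas you fold them into the intertwining relation of Remark \ref{rem:bimodcomm}; both are legitimate and the paper itself sanctions the latter.
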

\noindent The proof of this Theorem appears in Appendix \ref{sec:opproofs}.
\begin{rem} \normalfont
As in Remark \ref{rem:recovery}, given the Petz recovery map $R_{\E_{\S}(\rho), \Phi}$ or a substitutable recovery map and under a state-modifying $\S$-operation $\Phi$,
\begin{equation}
I(\S : \T)_\rho - I(\S : \T)_{\Phi(\rho)} \geq -2 \log (F(\E_{\S \T}(\rho), R_{\E_{\S}(\rho), \Phi} \circ \Phi(\E_{\S \T}(\rho)))) .
\end{equation}
\end{rem}
\noindent We also have a notion of algebra-modifying individual operations, which allow us to change $\S$ and $\T$. As these are more technical, we introduce and explain them in Appendix \ref{sec:opproofs}. We give an abridged summary here.
\begin{prop} \normalfont \label{prop:extrafree}
Let $\S, \T$ form a commuting square. Then $I(\S : \T)_\rho$ is non-increasing under:
\begin{enumerate}
	\item $\rho \rightarrow U \rho U^\dagger$, and $a \rightarrow U a U^\dagger$ for each $a \in \S \T$. In this case, the transformation on the algebra side cancels that on the state side.
	\item $\rho \rightarrow U \rho U^\dagger$, where $U$ is an isometry that is an adjusted bimodule for $\S, \T, \S \T$. This is effectively a global change of coordinates, which has no effect on entropies.
	\item $\S \rightarrow \tilde{\S}$, where $\tilde{\S} \subseteq \S$ as an algebra, and $\tilde{\S}, \T$ form a commuting square. This is $\S$ or $\T$ dropping access to some observables.
	\item $\S \rightarrow \S \R$, where $\R \subseteq \T$, and $\S \R, \T$ form a commuting square. This allows adding elements of $\S$ or $\T$ to the immutable shared memory, $\S \cap \T$. In this case of non-commutative $\R$, these are frozen and effectively hidden from $\s$ or $\t$. If $\R$ is commutative, then the immutability of $\S \cap \T$ does not necessarily prevent access. Relatedly, the traditional CMI, $I(A : B | C)$, allows some transfer of classical information from $A$ or $B$ to $C$.
	\item Some algebraic changes, such as enlarging $\S$ or $\T$ in a way that doesn't enlarge $\S \cap \T$, are $I$-non-increasing for particular states. We may apply a channel to $\rho$ that ensures the state has such a form while simultaneously changing the algebras, creating a free transformation for all states. We describe this in detail in Lemma \ref{lem:swap}. We may interpret this kind of transformation as a change in assumptions about the environment - a more mixed $\rho$ with larger $\S$ or $\T$ may have the same expectations in $\S$ and in $\T$. \label{extrafree5}
\end{enumerate}
\end{prop}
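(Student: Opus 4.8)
The plan is to reduce every item to three ingredients already in hand: the Chain Rule (Lemma \ref{mo1}) in its additive form, the positivity of $I$ on commuting squares (Corollary \ref{cor:ssa}), and the elementary fact that a trace-preserving conditional expectation is unital and therefore entropy-non-decreasing, so that $\N_1 \subseteq \N_2$ forces $H(\N_1)_\rho \geq H(\N_2)_\rho$. Throughout I would work from the expansion $I(\S : \T)_\rho = H(\S)_\rho + H(\T)_\rho - H(\S\T)_\rho - H(\S\cap\T)_\rho$, inserting an intermediate row via the chain rule to isolate a residual ``extra'' square whose sign I can control.

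Items (1) and (2) are pure invariance statements. For (1), conjugating both the state and every element of $\S\T$ by a fixed unitary $U$ transforms each conditional expectation covariantly, $\E_{U\N U^\dagger}(U\rho U^\dagger) = U \E_\N(\rho) U^\dagger$ (using trace-invariance of conjugation), while $H(U\sigma U^\dagger) = H(\sigma)$; hence all four entropy terms, and thus $I$, are unchanged, so ``the algebra side cancels the state side.'' For (2), an isometric adjusted bimodule $U$ satisfies $U\E_\N(\cdot)U^\dagger = \E_{\tilde\N}(U \cdot U^\dagger)$ by Remark \ref{rem:bimodcomm}, and an isometry preserves the spectrum, hence the entropy, of each restriction; $I$ is again invariant. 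Both require only bookkeeping.

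Items (3) and (4) carry the substance and share one template. For (3), with $\tilde\S \subseteq \S$ and $\tilde\S, \T$ a commuting square, I would first shrink the bottom-left corner from $\S\cap\T$ to $\tilde\S\cap\T \subseteq \S\cap\T$: since entropy rises under restriction to the smaller algebra, $I\left[\begin{smallmatrix}\S & \S\T \\ \tilde\S\cap\T & \T\end{smallmatrix}\right] \leq I(\S : \T)$. I would then apply the chain rule with intermediate row $(\tilde\S, \tilde\S\T)$ to split $I\left[\begin{smallmatrix}\S & \S\T \\ \tilde\S\cap\T & \T\end{smallmatrix}\right] = I\left[\begin{smallmatrix}\S & \S\T \\ \tilde\S & \tilde\S\T\end{smallmatrix}\right] + I(\tilde\S : \T)$, so that once the residual square $\left[\begin{smallmatrix}\S & \S\T \\ \tilde\S & \tilde\S\T\end{smallmatrix}\right]$ is known to be a commuting square, Corollary \ref{cor:ssa} gives $I(\tilde\S : \T) \leq I(\S : \T)$. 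Item (4) is dual: since $\R \subseteq \T$ gives $(\S\R)\T = \S\T$, only the top-left and bottom-left corners move, and a direct expansion shows $I(\S : \T) - I(\S\R : \T) = I\left[\begin{smallmatrix}\S & \S\R \\ \S\cap\T & (\S\R)\cap\T\end{smallmatrix}\right]$, whose bottom-left corner is automatically $\S \cap (\S\R)\cap\T = \S\cap\T$ (as $\S \subseteq \S\R$); this is non-negative once the square is verified to commute.

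Item (5) is not purely algebraic: it pairs an algebra enlargement that is not free for arbitrary states with a simultaneous state-modifying channel, and I would handle it by invoking Theorem \ref{thm:mono} together with the construction of Lemma \ref{lem:swap}, which supplies a channel forcing $\rho$ into the special form under which the enlargement is $I$-non-increasing. The main obstacle throughout is the commuting-square verification underlying (3) and (4): one must show that the residual squares inherit both the intersection identity (the delicate one being $\S \cap \tilde\S\T = \tilde\S$, where only $\supseteq$ is obvious) and the commutation of conditional expectations (e.g.\ $\E_\S \E_{\tilde\S\T} = \E_{\tilde\S\T}\E_\S = \E_{\tilde\S}$) from the standing commuting-square hypotheses. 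That is the genuine algebraic content; once it is secured, the chain rule and entropy monotonicity finish each item.
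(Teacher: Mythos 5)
Your treatment of items (1), (2) and (5) is fine and matches the paper's own (entropy bookkeeping for the unitary and isometric-bimodule cases; deferral to Lemma \ref{lem:swap} and Theorem \ref{thm:mono} for (5)). Your route for item (4) is also sound, and the verification you postpone there is genuinely routine: since $(\S\R)\cap\T\subseteq\T$ and $\S\cap\T\subseteq(\S\R)\cap\T$, the original commuting square gives $\E_{(\S\R)\cap\T}\E_\S=\E_{(\S\R)\cap\T}\E_\T\E_\S=\E_{(\S\R)\cap\T}\E_{\S\cap\T}=\E_{\S\cap\T}$ and symmetrically $\E_\S\E_{(\S\R)\cap\T}=\E_\S\E_\T\E_{(\S\R)\cap\T}=\E_{\S\cap\T}$, so the residual square $\left[\begin{smallmatrix}\S & \S\R \\ \S\cap\T & (\S\R)\cap\T\end{smallmatrix}\right]$ is a commuting square and Corollary \ref{cor:ssa} closes item (4). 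This is essentially the data-processing argument the paper gives for step 3b of Theorem \ref{thm:mono}.

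Item (3) is where there is a genuine gap, and it is not a verification that can be ``secured'' from the stated hypotheses: both your key claim and the abridged statement itself are false as written. Take two qubits $\M=\A\otimes\B$, $\S=\A\B$, $\T=\A\otimes\CC 1$, $\tilde\S=\CC 1\otimes\B$, and $\rho$ maximally entangled. Then $\S,\T$ form a commuting square (nested algebras), $\tilde\S,\T$ form a commuting square, and $\tilde\S\subseteq\S$, yet $I(\S:\T)_\rho=0$ while $I(\tilde\S:\T)_\rho=I(A:B)_\rho=2$; shrinking $\S$ strictly increases $I$, exactly as the paper warns in its peculiarity (1) and Remark \ref{rem:secretshare}. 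Since your chain-rule decomposition is an identity, the failure localizes in your unproven premise: here the residual square $\left[\begin{smallmatrix}\S & \S\T \\ \tilde\S & \tilde\S\T\end{smallmatrix}\right]$ has $I=H(\M)_\rho-H(\tilde\S)_\rho=-2<0$ and does not commute. What the paper actually proves (Definition \ref{def:algebraicops}, step 3a, via Theorem \ref{thm:mono}) carries the additional hypothesis $\tilde\S\cap\T=\S\cap\T$, which rules out this example, and its proof is not your route: with the intersection terms cancelling by that constraint, it argues directly by Lemma \ref{lem:entropyforms} and data processing that $H(\S\T)_\rho-H(\S)_\rho\le H(\tilde\S\T)_\rho-H(\tilde\S)_\rho$, i.e.\ $D(\E_{\S\T}(\rho)\|\E_\S(\rho))\ge D(\E_{\tilde\S\T}(\rho)\|\E_{\tilde\S}(\rho))$, rather than asserting positivity of a residual commuting square. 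To repair your proposal: restore the missing hypothesis $\tilde\S\cap\T=\S\cap\T$ for item (3), and replace the residual-square argument there by the data-processing one (or prove commutation of $\E_\S$ with $\E_{\tilde\S\T}$ under the strengthened hypotheses, which is no longer elementary and is precisely what fails in the counterexample above).
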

\noindent In this section, we have constructed a resource theory (see \cite{brandao_reversible_2015}) to understand individual operations for bipartite observable algebras that may not mutually commute. This is analogous to the resource theory of local operations (LO) under which mutual information does not increase, but we have extended it to potentially interacting or even overlapping systems. Our next step is to construct a version analogous to entanglement, which quantifies non-classical effects.
\subsection{Convex, Non-Classical Monotones} \label{sec:measures}
$I(\S : \T)$ reduces to $I(A : B)$ in the case of non-overlapping tensor-separated subsystems, which counts classical as well as quantum correlations. Inspired by the theory of entanglement monotones \cite{horodecki_quantum_2009}, we generalize ``squashed" and convex roof entanglement measures.
\begin{definition}
Let $\S,\T$ form a commuting square such that $\S \cap \T = \CC 1$. We define the squashed mutual information as
\begin{equation} \label{eq:isqcomm}
I_{sq}(\S : \T)_\rho \equiv \frac{1}{2}\inf_{\tilde{\S}, \tilde{\T}; \tilde{\rho} \in S_1(\tilde{\S} \tilde{\T})} I(\tilde{\s} : \tilde{\t} | \tilde{\S} \cap \tilde{\T})_{\tilde{\rho}} \pl,
\end{equation}
where $\tilde{\S}, \tilde{\T} \subseteq \tilde{\S} \tilde{\T}$ must form a commuting square, and we must still have $\tilde{\s} = \s$, and $\tilde{\t} = \t$ with the expectations of observables in $\s \t$ unchanged.
\end{definition}
\noindent This generalizes the squashed entanglement of \cite{christandl_squashed_2004}. If $\S \cap \T$ were to contain any non-commutative subsystems, it is ambiguous whether $\tilde{\S} \cap \tilde{\T}$ should retain the existing $\S \cap \T$ as a subalgebra, or should be free to replace parts of it that wouldn't affect $\s$ or $\t$. The right answer to this question should result from operational and physical applications, and it may depend on the specific physical setting.
\begin{rem} \label{rem:sqfactor} \normalfont
Were $\tilde{\S} \cap \tilde{\T}$ to contain a non-trivial center, it would appear in $\tilde{\s}$ and $\tilde{\t}$, violating the minimization constraints. Hence we may assume that $\tilde{\S} \cap \tilde{\T}$ is a factor. We may further assume the form $\tilde{\M} \cong \mathbb{M}_{|\S \T|} \otimes (\tilde{\S} \cap \tilde{\T})$, because anything outside of this will be effectively traced out by $\E_{\tilde{\S} \tilde{\T}}$.
\end{rem}
\begin{definition}
Let $\S,\T$ form a commuting square with density $\rho$ such that $\S \cap \T = \CC 1$. We define the following convex roof measure,
\begin{equation}
I_{conv}(\S : \T)_\rho \equiv \frac{1}{2} \inf_{\{(p_x, \rho_x) \}} \sum_{x} p_x I(\S : \T)_{\rho_x} \pl ,
\end{equation}
where $\{p_x\}$ forms a probability distribution such that $\rho = \sum_x p_x \rho_x$, and $\rho_x \in S_1(M)$ for all $x$.
\end{definition}
\begin{rem} \label{rem:entl} \normalfont
For a bipartite state $\rho^{AB}$ on tensored factors $A \otimes B$, $I_{sq}(\A : \B)_\rho = E_{sq}(A : B)_\rho$ as in equation \eqref{eq:esquash}, and $I_{conv}(\A  : \B) = E_F(A : B)_\rho$, where $E_F$ is the entanglement of formation defined in \cite{bennett_mixed-state_1996}. For a bipartite pure state $\ket{\psi}^{AB}$, both $I_{sq}$ and $I_{conv}$ reduce to the entanglement entropy
\begin{equation}
\frac{1}{2} I(\A : \B)_{\ketbra{\psi}} = \frac{1}{2} I(A : B)_{\ketbra{\psi}} = H(A)_{\ketbra{\psi}} = H(B)_{\ketbra{\psi}} \pl .
\end{equation}
\end{rem}
\begin{cor} \label{cor:sqdual}
Let $\S, \T \subseteq \S \T \subseteq \M$ form a commuting square, $\M \cong \mathbb{M}_n$ a minimal such factor, $\rho = \E_{\S \T}(\rho)$, and $\S \cap \T = \CC 1$. We may rewrite
\begin{equation}
I_{sq}(\S : \T)_\rho
= \inf_{\ket{\psi} \in \S \T \C \D} \frac{1}{2} I(\S_\M' \D: \T_\M' \D)_{\ketbra{\psi}} \pl,
\end{equation}
where $\ket{\psi}$ purifies $\rho$ in $\M \C \D$, and $\C$ and $\D$ are factors.
\end{cor}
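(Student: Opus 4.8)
The plan is to exhibit both sides as infima over the \emph{same} family of configurations, using Theorem \ref{thm:dual} to translate each primal extension in the definition of $I_{sq}$ into a purification on the commutant side. First I would apply Remark \ref{rem:sqfactor} to restrict the defining infimum for $I_{sq}(\S:\T)_\rho$ to extensions in which $\tilde{\S}\cap\tilde{\T}$ is a factor and $\tilde{\M}\cong\mathbb{M}_{|\S\T|}\otimes(\tilde{\S}\cap\tilde{\T})$; since $\M$ is the minimal factor containing $\S\T$ we may identify $\mathbb{M}_{|\S\T|}$ with $\M$. Write $\C\equiv\tilde{\S}\cap\tilde{\T}$. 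Because $\S\cap\T=\CC 1$ forces $\s=\S$ and $\t=\T$, the constraints $\tilde{\s}=\s$, $\tilde{\t}=\t$ together with the factor structure of $\C$ force $\tilde{\S}=\S\C$ and $\tilde{\T}=\T\C$ with ambient factor $\tilde{\M}\cong\M\C$. Thus the squashed infimum is exactly an infimum over a factor $\C$ and a state $\tilde{\rho}\in S_1(\S\T\C)$ subject to the marginal constraint $\E_{\S\T}(\tilde{\rho})=\rho$.

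Next I would compute commutants. Purifying $\tilde{\rho}$ introduces a purifying factor $\D$ so that $\M\C\D$ is a factor and $\ket{\psi}$ purifies $\tilde{\rho}$; then $\E_{\S\T}(\ketbra{\psi})=\E_{\S\T}(\tilde{\rho})=\rho$, so $\ket{\psi}$ purifies $\rho$. Because $\S\subseteq\M$ and $\C,\D$ are independent factors, the commutant of $\tilde{\S}=\S\C$ inside $\M\C\D$ is $\S_\M'\otimes\CC 1\otimes\D=\S_\M'\D$: the factor $\C$ has been absorbed into $\tilde{\S}$ and so drops out of the commutant, which accounts precisely for the asymmetry between $\C$ and $\D$ in the statement. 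Likewise $\tilde{\T}'=\T_\M'\D$. Applying Theorem \ref{thm:dual} to $\tilde{\S},\tilde{\T}$ in the factor $\M\C\D$ with $\tilde{\rho}=\E_{\tilde{\S}\tilde{\T}}(\ketbra{\psi})$ gives
\[
I(\tilde{\S}:\tilde{\T})_{\tilde{\rho}}=I(\tilde{\S}':\tilde{\T}')_{\ketbra{\psi}}=I(\S_\M'\D:\T_\M'\D)_{\ketbra{\psi}}\pl,
\]
and the prefactors $\tfrac12$ match on both sides.

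With this identity both inequalities are immediate. For ``$\le$'', every purification $\ket{\psi}$ of $\rho$ in $\M\C\D$ yields, via $\tilde{\S}=\S\C$, $\tilde{\T}=\T\C$, $\tilde{\rho}=\E_{\S\T\C}(\ketbra{\psi})$, a legitimate squashed extension of matching value, so $I_{sq}(\S:\T)_\rho\le\inf_{\ket{\psi}}\tfrac12 I(\S_\M'\D:\T_\M'\D)_{\ketbra{\psi}}$. For ``$\ge$'', every reduced squashed extension $(\C,\tilde{\rho})$ can be purified in $\M\C\D$ for a suitable factor $\D$, and the displayed duality shows its value is attained by that purification; hence the reverse inequality. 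The ranges of the two infima therefore coincide, giving equality.

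The main obstacle is verifying the hypotheses of Theorem \ref{thm:dual} for the enlarged pair: I must check that $\tilde{\S}=\S\C$, $\tilde{\T}=\T\C$ form both a commuting and a co-commuting square in the factor $\M\C\D$. The commuting square passes through the tensor factor, since $\E_{\S\C}\E_{\T\C}=(\E_\S\E_\T)\otimes\mathrm{id}_\C=\E_{\S\cap\T}\otimes\mathrm{id}_\C=\E_\C=\E_{\tilde{\S}\cap\tilde{\T}}$, and the co-commuting square for $\tilde{\S},\tilde{\T}$ reduces, after tensoring by $\D$, to the co-commuting square for $\S,\T$ in $\M$. This last point is where minimality of $\M$ enters: one must argue that a commuting square in the factor $\M$ with $\S\cap\T=\CC 1$ is automatically co-commuting, equivalently that $\S_\M',\T_\M'$ again form a commuting square. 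This is the finite-factor duality between a commuting square and its commutant, and it is the step I expect to require the most care. Once it is secured, the remaining checks that $\tilde{\s}=\s$, $\tilde{\t}=\t$, and that the marginal constraints match are routine.
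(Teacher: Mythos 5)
Your proposal follows the same route as the paper's (very terse) proof: reduce the squashed infimum to tensor extensions via Remark \ref{rem:sqfactor}, purify in $\M\C\D$, and invoke Theorem \ref{thm:dual}. Your identification $\tilde{\S}=\S\C$, $\tilde{\T}=\T\C$, the commutant computation $(\S\C)'=\S_\M'\D$ (which indeed uses that $\C$ is a factor and explains the asymmetry between $\C$ and $\D$), and the check that the commuting square survives tensoring are all correct, and are details the paper omits. You have also correctly isolated the one hypothesis of Theorem \ref{thm:dual} that does not come for free: the \emph{co-commuting} square for $\S,\T$ in $\M$. The gap is that your proposed way of discharging it is false: a commuting square in a minimal factor with $\S\cap\T=\CC 1$ is \emph{not} automatically co-commuting. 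Counterexample: in $\M=\mathbb{M}_2\otimes\mathbb{M}_2$ take $\S=\mathbb{M}_2\otimes\CC 1$ and $\T=\CC p+\CC(1-p)$ with $p=\ketbra{00}+\ketbra{1}\otimes\ketbra{+}$, $\ket{+}=(\ket{0}+\ket{1})/\sqrt{2}$. Then $\S\cap\T=\CC 1$, and since $\tr_B(p)=1$ and $\tr\big(p(a\otimes 1)\big)=a_{00}+a_{11}=\tr(a)$, one gets $\E_\S\E_\T=\E_\T\E_\S=\E_{\CC 1}$, so the commuting square holds; moreover $\S$ and $p$ generate all of $\M$ (conjugating $p$ by $\S$ produces $\mathbb{M}_2\otimes\ketbra{0}$ and $\mathbb{M}_2\otimes\ketbra{+}$), so $\M=\S\T$ is the minimal factor. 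Yet with $\S'=\CC 1\otimes\mathbb{M}_2$ and $\T'=\{p\}'$ (the pinching algebra), evaluating on $x=\ketbra{00}$ gives
\begin{equation*}
\E_{\S'}\E_{\T'}(x)=\tfrac{1}{2}\,1\otimes\ketbra{0}\pl,\qquad
\E_{\T'}\E_{\S'}(x)=\tfrac{1}{2}\ketbra{00}+\tfrac{1}{4}\,\ketbra{1}\otimes 1\pl,
\end{equation*}
(the first because $px=x$, the second by explicit pinching of $\tfrac12 1\otimes\ketbra{0}$), which differ. So $[\E_{\S'},\E_{\T'}]\neq 0$ and the co-commuting square fails.

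In the subfactor literature this extra property is exactly what distinguishes \emph{symmetric} (nondegenerate) commuting squares from general ones; it is an independent condition, not a consequence of the commuting square, trivial intersection, or minimality. Consequently the step you defer as ``requiring the most care'' cannot be completed as stated, and the honest repair is to add the co-commuting square to the hypotheses of the Corollary --- which is what Theorem \ref{thm:dual} demands, and what the paper's own definition of $I(\S:\T)$ already presupposes. With that standing assumption, your remaining verifications go through and your argument coincides with the paper's. To be fair, this gap is inherited from the paper itself, whose two-line proof applies Theorem \ref{thm:dual} without checking the condition at all; your write-up at least makes the missing hypothesis visible, but it then papers over it with a duality claim that the counterexample above refutes.
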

\begin{proof}
Via remark \ref{rem:sqfactor}, we may assume that the infimum is taken with respect to $\tilde{\rho}$ in $\S \T \otimes \C$ for some tensor extension $\C$. We purify $\tilde{\rho}^{\M \C}$ as $\ket{\psi}^{\M \C \D}$ and apply Theorem \ref{thm:dual}.
\end{proof}
\begin{cor} \label{cor:sqpure}
Purifying extensions are insufficient for the infimum in $I_{sq}$ or $E_{sq}$.
\end{cor}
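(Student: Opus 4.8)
The plan is to prove the statement by exhibiting a single counterexample: a state on which the infimum defining the squashed quantity, when restricted to \emph{purifying} extensions, strictly overshoots the true infimum taken over all tensor extensions. Because $E_{sq}$ is exactly the tensor-factor special case of $I_{sq}$ (Remark \ref{rem:entl}), it suffices to produce such an example for $E_{sq}$; the same state, viewed through $\A, \B \subseteq \A\B$, then defeats purifying extensions for $I_{sq}$ as well.

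First I would record the elementary identity that every purification $\ket{\psi}^{ABC}$ of $\rho^{AB}$ gives $I(A:B|C)_{\ketbra{\psi}} = I(A:B)_\rho$: writing out the conditional mutual information and using that a globally pure state forces $H(ABC)=0$, $H(AC)=H(B)$, $H(BC)=H(A)$, and $H(C)=H(AB)$, the four substitutions collapse it to $H(A)+H(B)-H(AB)$. Enlarging $C$ past the minimal purifying system only appends an uncorrelated factor and leaves the value unchanged, so the restricted infimum over \emph{all} purifying extensions equals $I(A:B)_\rho$ (up to the normalizing constant in \eqref{eq:isqcomm}), which is strictly positive for any correlated state. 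I would then take $\rho^{AB} = \frac{1}{2}(\ketbra{00} + \ketbra{11})$, a separable but classically correlated state with $I(A:B)_\rho = 1$. Its true squashed value is $0$, witnessed by the \emph{non}-purifying extension $\tilde{\rho}^{ABC} = \frac{1}{2}(\ketbra{00}\otimes\ketbra{0} + \ketbra{11}\otimes\ketbra{1})$, whose classical flag gives $I(A:B|C)=0$. The resulting strict gap between the restricted and unrestricted infima is exactly the claimed insufficiency.

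The main obstacle is interpretational rather than computational: one must fix precisely what ``purifying extension'' means inside the optimizations \eqref{eq:esquash} and \eqref{eq:isqcomm}, namely extensions in which the conditioning system (the algebra $\tilde{\S}\cap\tilde{\T}$, or the register $C$) is taken so that the global state is \emph{pure} rather than a genuine mixture. The content of the corollary, in light of Corollary \ref{cor:sqdual}, is that the mixed tensor extension $\C$ appearing there cannot be dropped in favor of purifying $\rho$ directly --- the dual picture genuinely needs the squashing factor $\C$ \emph{before} the purification into $\D$. Once this reading is in place, the separable-state computation above settles the matter at once; notably, no faithfulness theorem for $E_{sq}$ is required, since the explicit non-purifying witness already drives the true infimum to $0$ while every purification is pinned at the positive value $I(A:B)_\rho$.
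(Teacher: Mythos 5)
Your proof is correct, and it takes a genuinely different route from the paper's in both halves of the argument. The paper establishes that a purifying extension pins the value at $\tfrac{1}{2} I(A:B)_\rho$ by routing through its duality result (Corollary \ref{cor:sqdual}, the commutant picture) together with Remark \ref{rem:entl}; you get the same pinning by the direct four-term entropy computation on a globally pure state ($H(ABC)=0$, $H(AC)=H(B)$, $H(BC)=H(A)$, $H(C)=H(AB)$), which is more elementary and bypasses the duality machinery entirely. More significantly, the paper closes the argument by contradiction with the \emph{faithfulness} of squashed entanglement, citing the deep results of Brandao--Christandl--Yard and Li--Winter, whereas you close it constructively: for $\rho^{AB}=\tfrac12(\ketbra{00}+\ketbra{11})$ the classical-flag extension $\tfrac12(\ketbra{00}\otimes\ketbra{0}+\ketbra{11}\otimes\ketbra{1})$ explicitly witnesses $I(A:B|C)=0$, so the true infimum is $0$ while every purification sits at $\tfrac12$. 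This makes your argument self-contained (no external theorems beyond strong subadditivity for non-negativity of CMI), quantitative (an explicit gap of $\tfrac12$), and arguably preferable as a proof of this corollary; what the paper's version buys in exchange is that it exercises the duality formalism it has just developed and shows the obstruction is generic across all classically correlated separable states rather than for one example. One minor remark: your aside about enlarging $C$ past the minimal purifying system is unnecessary, since your entropy identity already holds verbatim for \emph{any} purification, minimal or not.
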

\begin{proof}
Consider the traditional case of $E_{sq}(A:B)_\rho$ on $A \otimes B$, where $\rho$ is a classically correlated (separable) density. Were the achieving extension to be pure, any purifying $D$ would itself be in a pure state. By Corollary \ref{cor:sqdual} and Remark \ref{rem:entl}, we rewrite this as 
\begin{equation}
\frac{1}{2} I(\A \D : \B \D \subset \A \B \D)_{\rho^{AB} \otimes \ketbra{\phi}^D} = \frac{1}{2} I(A : B)_{\rho^{AB}} \pl .
\end{equation}
Since $\rho^{AB}$ was assumed to be classically correlated, it must have non-zero mutual information. Were this to achieve the infimum, it would show that a separable state has non-zero squashed entanglement, which would contradict its faithfulness proven in \cite{brandao_faithful_2011, li_squashed_2018}.
\end{proof}
\noindent Corollary \ref{cor:sqpure} also suggests that we could interpret the squashed entanglement as making the most conservative possible assumptions about the environment, given that any mixed state may have some information copied in the environment. $I_{conv}$ then has a similar interpretation with a classical eavesdropper. A pure state in a factor is necessarily private. Here we enumerate some basic properties of these measures, as proven in Appendix \ref{sec:sqcproofs}.
\begin{prop} \normalfont
For $\S, \T$ forming a commuting square with $\S \cap \T = \CC1$, and density $\rho$:
\begin{enumerate}
	\item \label{entlprop1} $I_{sq}$ and $I_{conv}$ are convex in $\rho$. Hence both are maximized on pure states.
	\item \label{entlprop2} If $\E_{\S \T}(\rho)$ is pure in $S_1(\S \T)$, then $I_{conv}(\S : \T)_\rho = I_{sq}(\S : \T)_\rho = \frac{1}{2}I(\S : \T)_{\rho}$.
	\item \label{entlprop3} If $\S \T$ is commutative, then $I_{sq}(\S : \T)_\rho = I_{conv}(\S : \T)_\rho = 0$.
	\item \label{entlprop4} $I_{sq}$ and $I_{conv}$ are trace distance continuous in $\rho$ (see Lemma \ref{lem:contiuoussq}).
	\item \label{entlprop5} $I_{sq}(\S : \T)_\rho \leq I_{conv}(\S : \T)_\rho$.
\end{enumerate}
The following properties apply to $I_{sq}$ but not necessarily to $I_{conv}$.
\begin{enumerate}
	\setcounter{enumi}{5}
	\item \label{entlprop6} Let $\M = \otimes_{i=1}^n \M_i$ with $\S_i \T_i = \M_i$ in co-commuting square for each $i \in 1...n$, so that $\S = \otimes_{i=1}^n \S_i$, and $\T = \otimes_{i=1}^n \T_i$. Then
	\begin{equation}
		\sum_i I_{sq}(\S_i : \T_i)_{\E_{\M_i}(\rho)} \leq I_{sq}(\S : \T)_\rho.
	\end{equation}
	If $\rho = \otimes_{i=1}^n \rho_i$ such that $\rho_i \in S_1(M_i)$, then equality is achieved. This property is analogous to and generalizes monogamy of squashed entanglement.
	\item \label{entlprop7} Let $\S = \S_1 \otimes \S_2$, and $\T = \T_1 \otimes \T_2$. Then $I_{sq}(\S : \T)_{\rho} \geq I_{sq}(\S_1 : \T_1)_{\rho^{\S_1 \T_1}}$.
	\item \label{entlprop8} Let $\A, \B \subset \S \T$ be factors such that $\S \subseteq \A$, $\T \subseteq \B$, $\A \cap \B = \CC 1$, and $I_{sq}(\S : \T)_\rho \leq \epsilon$. Then there exists some separable $\sigma$ such that $\|\E_{\S \T}(\rho) - \E_{\S \T}(\sigma) \|_1 \leq  3.1 |B| \sqrt[4]{\epsilon}$, and some separable $\eta$ such that $\|\E_{\S \T}(\rho) - \E_{\S \T}(\eta) \|_2 \leq 12 \sqrt{\epsilon}$. We similarly have closeness to highly extendible states (Lemma \ref{lem:sqfaith}). This slightly generalizes Brandao, Christandl and Yard's faithfulness result in \cite{brandao_faithful_2011}, and Li and Winter's results in \cite{li_squashed_2018}.
\end{enumerate}
\end{prop}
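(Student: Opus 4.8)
The plan is to treat the eight items in three groups: the structural facts \ref{entlprop1}--\ref{entlprop5}, which are direct transcriptions of standard squashed-entanglement and convex-roof arguments; the monogamy-type inequalities \ref{entlprop6}--\ref{entlprop7}, which follow from the chain rule of Lemma \ref{mo1}; and the faithfulness bound \ref{entlprop8}, which is the genuinely hard part and requires importing the Brand\~ao--Christandl--Yard and Li--Winter estimates into the subalgebraic setting.

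For the structural group I would argue as follows. Convexity of $I_{conv}$ in \ref{entlprop1} is immediate from its definition as a convex roof, and convexity of $I_{sq}$ follows from the usual flag-register trick: given $\rho=\sum_k q_k\rho^{(k)}$ and near-optimal extensions $\tilde\rho^{(k)}$, the combined extension $\sum_k q_k\,\tilde\rho^{(k)}\otimes\ketbra{k}$ with the flag $k$ adjoined to the center of $\tilde\S\cap\tilde\T$ yields $I_{sq}(\rho)\le\sum_k q_k I_{sq}(\rho^{(k)})$, since classical conditioning makes the generalized CMI additive; maximality on pure states is then automatic. For \ref{entlprop2} I would use pure-state rigidity: by Remark \ref{rem:sqfactor} every competing extension is a genuine tensor extension $\tilde\M\cong\mathbb{M}_{|\S\T|}\otimes(\tilde\S\cap\tilde\T)$, and a pure marginal $\E_{\S\T}(\rho)$ forces the extension to be a product, collapsing the conditional term and giving $I_{sq}=\tfrac12 I(\S:\T)_\rho$; the trivial decomposition shows $I_{conv}\le\tfrac12 I(\S:\T)_\rho$, and \ref{entlprop5} closes the loop. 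For \ref{entlprop3}, commutativity of $\S\T$ makes $\E_{\S\T}(\rho)$ classical, so copying the joint classical value into the conditioning factor squashes the CMI to zero, while the corresponding product decomposition kills $I_{conv}$. The inequality \ref{entlprop5}, $I_{sq}\le I_{conv}$, is again the flag construction applied to an optimal roof decomposition, and the continuity \ref{entlprop4} I would defer to Lemma \ref{lem:contiuoussq} via an Alicki--Fannes--Winter estimate.

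For the monogamy group I would run the standard squashed-entanglement argument through Lemma \ref{mo1}. Given any single extension of the global $\rho$, the chain rule splits $I(\tilde\s:\tilde\t\mid\tilde\S\cap\tilde\T)$ into a sum of conditional terms, one per tensor factor $\M_i$, each of which is an admissible competitor for $I_{sq}(\S_i:\T_i)_{\E_{\M_i}(\rho)}$; taking the infimum factorwise gives the superadditivity \ref{entlprop6}, with equality on product states since a product extension saturates every term. Property \ref{entlprop7} is then the special case in which the discarded factors are set to trivial algebras, or equivalently the observation that restricting an extension of $\rho$ to $\S_1\T_1$ is an admissible extension of $\rho^{\S_1\T_1}$.

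The faithfulness statement \ref{entlprop8} is where I expect the main obstacle. The strategy is to reduce $I_{sq}(\S:\T)_\rho\le\epsilon$ to a genuine bipartite squashed-entanglement bound on the cut $\A\mid\B$ and then invoke the quantitative faithfulness of \cite{brandao_faithful_2011} for the trace norm and of \cite{li_squashed_2018} for the Hilbert--Schmidt norm. Concretely, I would use the factor structure $\S\subseteq\A$, $\T\subseteq\B$, $\A\cap\B=\CC1$, together with the purification/commutant duality of Corollary \ref{cor:sqdual}, to produce from a near-optimal squashing extension of $\rho$ a state on $\A\otimes\B$ of nearly $k$-extendible character with small conditional mutual information, so that the de Finetti reduction underlying \cite{brandao_faithful_2011} applies to $\E_{\S\T}(\rho)$. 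Feeding $\epsilon$ through that bound produces an $\A\mid\B$-separable $\sigma$ with $\|\E_{\S\T}(\rho)-\E_{\S\T}(\sigma)\|_1$ controlled by $\epsilon^{1/4}$ up to the dimension factor $|B|$ that arises in converting the restricted/LOCC distinguishability norm to the full trace norm, and the dimension-free $12\sqrt{\epsilon}$ estimate from the Frobenius-norm version of \cite{li_squashed_2018}. The crux, and the step I would spend the most care on, is fixing the direction and constants of the reduction: because $\S,\T$ need not commute and are strictly smaller than $\A,\B$, one cannot simply bound $E_{sq}(\A:\B)$ by $I_{sq}(\S:\T)$ through the monotonicity \ref{entlprop7}, so the argument must instead pass through the image under $\E_{\S\T}$ and the commutant picture of Corollary \ref{cor:sqdual}, tracking how non-commutativity of $\S,\T$ inside the tensor-structured $\A\otimes\B$ distorts the extendibility hierarchy; this is precisely what Lemma \ref{lem:sqfaith} is intended to isolate.
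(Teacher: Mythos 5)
Your treatment of items \ref{entlprop1}--\ref{entlprop7} is essentially sound and largely coincides with the paper's: convexity and the bound $I_{sq}\le I_{conv}$ via the flag-register construction (the paper's Lemma \ref{lem:convex} does exactly this; note that by Remark \ref{rem:sqfactor} the flag must be embedded in a \emph{factor} register carrying a classical state, not literally adjoined to the center of $\tilde{\S}\cap\tilde{\T}$ as you say, since a nontrivial center of the conditioning algebra would leak into $\tilde{\s},\tilde{\t}$ and violate the constraints), purity rigidity for item \ref{entlprop2} (Lemma \ref{lem:puresq}), deferral to Lemma \ref{lem:contiuoussq} for item \ref{entlprop4}, and the chain rule of Lemma \ref{mo1} for items \ref{entlprop6}--\ref{entlprop7} (the paper routes this through Proposition \ref{prop:extmonogamy}). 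Your item \ref{entlprop3} takes a genuinely different route: the paper passes to commutants via Theorem \ref{thm:dual}, where the pure state makes every entropy vanish, whereas you condition on a classical copy of the joint value. Your route does work, but be aware it is not free: subalgebra entropies carry $\log$-dimension terms, so after conditioning the CMI becomes $\sum_i p_i\, I(\S:\T)_{\hat\rho_i}$ over the minimal blocks of $\S\T$, and the vanishing of each term is exactly the dimension identity $m_i\,|M| = \dim Q_{j(i)}\dim R_{k(i)}$ forced by the commuting square; without that check "copying the classical value kills the CMI" is an unproven assertion.

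The genuine gap is item \ref{entlprop8}. You declare the crux to be that "$\S,\T$ need not commute" inside $\A\otimes\B$ and propose a reduction through Corollary \ref{cor:sqdual} and a de Finetti argument whose key step ("tracking how non-commutativity of $\S,\T$ \dots distorts the extendibility hierarchy") you never supply --- so as written there is no proof of this item. But the premise is also wrong: under the hypotheses (as made precise in Lemma \ref{lem:sqfaith}, $\A\otimes\B$ is a valid tensor splitting with $\S\subseteq\A$, $\T\subseteq\B$), the algebras $\S$ and $\T$ \emph{do} commute elementwise, so $\S\T=\S\otimes\T$ and $\E_{\S\T}=\E_\S\otimes\E_\T$. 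This collapses the problem: by Remark \ref{rem:sqfactor} the admissible squashing extensions of $\rho$ for $I_{sq}(\S:\T)_\rho$ are exactly the extensions on $A\otimes B\otimes C$ of the state $\omega := \E_{\S\T}(\rho)=(\E_\S\otimes\E_\T)(\rho)$, and the generalized CMI of such an extension equals the ordinary $I(A:B|C)$ of the corresponding extension of $\omega$ (the dimensional terms cancel). Hence $I_{sq}(\S:\T)_\rho = E_{sq}(A:B)_{\omega}\le\epsilon$, and the Brandao--Christandl--Yard and Li--Winter bounds apply verbatim to $\omega$, producing a separable $\sigma$ with $\|\omega-\sigma\|_1\le 3.1|B|\sqrt[4]{\epsilon}$; contractivity of the conditional expectation then gives $\|\E_{\S\T}(\rho)-\E_{\S\T}(\sigma)\|_1\le\|\omega-\sigma\|_1$, which is the stated bound (likewise in the $2$-norm). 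This is the paper's one-line proof --- "trivially port the faithfulness of squashed entanglement" --- and it is the step your proposal misses.
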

Extending monogamy of entanglement to $I_{sq}$ to non-commuting algebras is generally impossible, as shown by the following counter-example:
\begin{exam} \label{exam:mononogo}
The superadditivity/monogamy of $I_{sq}$ does not necessarily generalize to situations in which $\S = \cup_{i=1}^n \S_i$ or $\T = \cup_{i=1}^n \T_i$ when they are not tensor products. For example, consider a high-dimensional factor $M$ with $m_b$ mutually unbiased bases. Let the system be in state $\ketbra{\psi}$, which is prepared in a particular mutually unbiased basis. Let $i$ and $j$ index the rest of the mutually unbiased bases. Then
\begin{equation}
\sum_{i,j} I_{sq}(\S_i : \T_j)_{\ketbra{\psi}} = \frac{m_b \log |M|}{2}.
\end{equation}
This can be much larger than $\log |M| / 2$, which we will see via Theorem \ref{thm:yesquantumsq} is the largest possible value of $I_{sq}(\S, \T)_\rho$ for any $\S, \T \subset \M$ as a commuting square.
\end{exam}
This lack of monogamy in $I_{sq}$ between bases probably relates to Li and Winter's conclusion that ``while entanglement of formation is essentially about the distance from separable states, squashed entanglement is about the distance from highly extendible states \cite{li_squashed_2018}." The usual notion of extendibility is based on symmetric extendibility, which is natural for mutually commuting subalgebras. In contrast, one may find a state $\tilde{\sigma}$ such that for each $i \in 1...k$ and algebras $\S$ and $\T_i$, $\E_{\S \T_i} = \rho$, but the state is not necessarily symmetrically extendible if $\T_i$ and $\T_j$ do not commute. Similarly, the main barrier to proving faithfulness when subalgebras do not embed in non-overlapping tensor factors may be the question of defining a notion of ``separable" in this context, and thus having something to faithfully indicate.
\begin{theorem} \label{thm:yesquantumsq}
If $\M$ is a finite-dimensional noncommutative von Neumann algebra, then there exist subalgebras $\S, \T \subset \M$ in co-commuting square and a pure state $\ketbra{\psi} \in S_1(\S \T)$ such that $I_{sq}(\S : \T)_{\ketbra{\psi}} > 0$. In particular, if $\M_0$ is the largest matrix block in $\M$, then 
\begin{equation}
I_{sq}^{max}(\M) \equiv \max_{\S, \T, \rho} I_{sq}(\S : \T)_\rho = \frac{\log |M_0|}{2}
\end{equation}
where the maximization is over $\S, \T \subset \M$ in commuting square, and $\rho \in S_1(M)$. This optimum is achieved by choosing $\S$ and $\T$ corresponding to two mutually unbiased bases of $\M_0$, and preparing $\rho$ as a pure state in a third. $I^{max}_{conv}(\M) = I^{max}_{sq}(\M)$.
\end{theorem}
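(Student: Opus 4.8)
The plan is to prove the three assertions separately: the strict positivity (existence) claim, the exact value of $I_{sq}^{max}(\M)$, and the equality $I_{conv}^{max}(\M)=I_{sq}^{max}(\M)$. The unifying simplification is convexity: by Proposition item \ref{entlprop1} both $I_{sq}$ and $I_{conv}$ are convex in $\rho$ and hence attain their maxima on pure states, so throughout I may take $\rho=\ketbra{\psi}$. Write $\M\cong\bigoplus_i\mathbb{M}_{n_i}$ and let $\M_0\cong\mathbb{M}_{n_0}$ be a block of maximal size, so that $n_0\geq 2$ precisely because $\M$ is noncommutative.

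For achievability and positivity I would work inside the factor $\M_0$. Choose $\S,\T\subseteq\M_0$ to be the maximal abelian subalgebras generated by two mutually unbiased bases of $\CC^{n_0}$, and let $\ket{\psi}$ be a unit vector unbiased to both (e.g. a vector of a third mutually unbiased basis). Since $\S,\T$ are MASAs of the factor $\M_0$ we have $\S'=\S$, $\T'=\T$ and $\S\cap\T=\CC1$, so the commuting square condition $\E_\S\E_\T=\E_\T\E_\S=\E_{\CC1}$ — which I would check by the standard computation that dephasing in one basis followed by dephasing in an unbiased basis returns the maximally mixed state — coincides with the co-commuting square condition. Because $\ket{\psi}$ is unbiased to each basis, $H(\S)_{\ketbra{\psi}}=H(\T)_{\ketbra{\psi}}=\log n_0$, while (as two mutually unbiased bases generate all of $\M_0$) $\S\T=\M_0$, giving $H(\S\T)_{\ketbra{\psi}}=0$ and $H(\S\cap\T)=0$, so $I(\S:\T)_{\ketbra{\psi}}=2\log n_0$. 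As $\E_{\S\T}(\ketbra{\psi})=\ketbra{\psi}$ is pure in $\S\T$, Proposition item \ref{entlprop2} yields $I_{sq}(\S:\T)_{\ketbra{\psi}}=I_{conv}(\S:\T)_{\ketbra{\psi}}=\tfrac12 I(\S:\T)_{\ketbra{\psi}}=\log n_0>0$. This simultaneously establishes the existence claim and the lower bound $I_{sq}^{max}(\M)\geq\log n_0=\tfrac12\log|M_0|$.

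For the matching upper bound I would argue that no commuting square can exceed this value. Using $I_{sq}\leq I_{conv}$ (Proposition item \ref{entlprop5}) together with convexity, it suffices to bound $I_{conv}$ on pure states, where the only admissible decomposition is trivial and hence $I_{conv}(\S:\T)_{\ketbra{\psi}}=\tfrac12 I(\S:\T)_{\ketbra{\psi}}=\tfrac12\big(H(\S)+H(\T)-H(\S\T)\big)$, using $\S\cap\T=\CC1$. The goal is therefore $H(\S)+H(\T)-H(\S\T)\leq 2\log n_0$. When $\M$ is itself a factor this is immediate, since $H(\S),H(\T)\leq\log n_0$ as von Neumann entropies of densities on $\CC^{n_0}$ and $H(\S\T)\geq 0$. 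The general case requires reducing to a single block, and this is where the constraint $\S\cap\T=\CC1$ does the real work: I would show that a commuting square with trivial intersection cannot distribute weight across distinct central blocks without paying for it in $H(\S\T)$, effectively squashing out the classical central register — concretely, by passing to the factorized form $\tilde{\M}\cong\mathbb{M}_{|\S\T|}\otimes(\S\cap\T)$ of Remark \ref{rem:sqfactor} and using Theorem \ref{thm:dual} and Corollary \ref{cor:sqdual} to transport the estimate into a single factor of Hilbert dimension at most $n_0$.

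The main obstacle is exactly this last reduction: proving $H(\S)+H(\T)-H(\S\T)\leq 2\log n_0$ when $\M$ has a nontrivial center, since the naive term-by-term bounds fail once $\S$ or $\T$ is a large multi-block abelian algebra, and only the commuting-square relation $\E_\S\E_\T=\E_{\CC1}$ jointly with $\S\cap\T=\CC1$ prevents the value from growing — a Pimsner--Popa/index-type control of how much entropy a commuting square can carry relative to the largest block. Once the upper bound $I_{conv}^{max}(\M)\leq\log n_0$ is established, combining it with $I_{sq}\leq I_{conv}$ and the achievability computation gives $I_{sq}^{max}(\M)=I_{conv}^{max}(\M)=\log n_0=\tfrac12\log|M_0|$, which is the full statement.
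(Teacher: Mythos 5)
Your overall architecture---convexity to reduce to pure states, two mutually unbiased maximal abelian subalgebras with the state prepared in a third, and a term-wise entropy estimate for the upper bound---is essentially the paper's, but the computation at the heart of your argument has a genuine error: you set $H(\S\cap\T)_{\ketbra{\psi}} = H(\CC 1)_{\ketbra{\psi}} = 0$. In this paper the subalgebra entropy is $H(\N)_\rho = H(\E_\N(\rho))$ with $\E_\N$ the \emph{trace-preserving} conditional expectation, so $\E_{\CC 1}(\rho) = \id/n_0$ and $H(\CC 1)_\rho = \log n_0$, not $0$. (This is exactly why the paper can assert $I(\CC 1 : \CC 1 \subseteq \M)_{\ketbra{\psi}} = \log |M|$ for pure states, and why equation \eqref{eq:ucrisq} gives $I_{sq}(\X : \Z)_{\ket{\uparrow_Y}} = 1/2$ for a qubit rather than $1$.) Put differently, the balanced four-term combination $I$ is invariant under rescaling the trace, but you evaluated $H(\S),H(\T)$ in the matrix-trace convention and $H(\S\cap\T)$ in the normalized-trace convention; no single consistent convention produces your numbers. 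With the term restored, the MUB configuration yields $I(\S:\T)_{\ketbra{\psi}} = \log n_0 + \log n_0 - 0 - \log n_0 = \log n_0$, hence $I_{sq} = I_{conv} = \tfrac{1}{2}\log n_0 = \tfrac{1}{2}\log |M_0|$, where $|M_0| = n_0$ is the \emph{Hilbert-space} dimension of the largest block, the paper's consistent usage (see Remark \ref{rem:vsentl}). Your attempt to match the theorem by reading $|M_0|$ as the algebra dimension $n_0^2$ is not available, so as written your claimed maximum is double the true one. Ironically, once $H(\S\cap\T)=\log n_0$ is restored, your naive factor-case estimate becomes exactly right: $I \leq \log n_0 + \log n_0 - 0 - \log n_0 = \log n_0$, so $I_{conv} \leq \tfrac{1}{2}\log n_0$ on pure states, with no index-type input needed.

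The other gap is the one you flag yourself: the reduction to a single block when $\M$ has nontrivial center. This is not a deep obstacle and requires no Pimsner--Popa control. The paper's Lemma \ref{lem:maxblock} writes any $\rho$ as a convex combination of densities supported on the central blocks of $\S\T$ and invokes convexity (Lemma \ref{lem:convex}) to conclude $I_{*}(\S:\T)_\rho \leq \max_i I_{*}(\S:\T)_{\rho_i}$, after which the single-block bound applies with the largest block dimension. Note also that the paper's own route to the upper bound differs slightly from your term-wise estimate: it passes to the commutant expression $I(\S':\T'\subseteq \M_0)$ and shows, via Lemma \ref{lem:entropyforms} and data processing, that enlarging $\S'$ or $\T'$ within a commuting square can only decrease $I$, so the value $\log|M_0|$ already attained at $\S'=\T'=\CC 1$ is the maximum, which the MUB configuration then saturates. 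Your corrected term-wise bound is a legitimate alternative in the factor case, but as submitted the proposal both misstates the target value and leaves the multi-block reduction unproven.
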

\begin{rem} \label{rem:vsentl} \normalfont
For the usual subsystem-to-subsystem squashed entanglement on a factor $|M|$, if $|M|$ is a square, we can divide the system into two subsystems of equal dimension with $\frac{1}{2} \log |M|$ ebits of entanglement entropy at maximum. If $|M|$ is not a square, the maximum entanglement is lower. Hilbert spaces of dimension 2 or 3 do not factor into subsystems, yet they still support $I_{sq} > 0$. These systems cannot contain non-locality or entanglement, yet they are non-classical.

In contrast, we can always achieve the maximum of \ref{thm:yesquantumsq} with mutually unbiased bases. Together with property \ref{entlprop3}, Theorem \ref{thm:yesquantumsq} implies that a system supports non-zero $I_{sq}, I_{conv} > 0$ if and only if it contains non-commuting observables, a common notion of quantumness.
\end{rem}
\noindent Finally, we consider operations under which $I_{sq}$ and $I_{conv}$ are monotonic:
\begin{cor} \label{cor:sqindivops} (\textbf{Individual Operations})
$I_{sq}$ and $I_{conv}$ are non-increasing under individual $\S$ and $\T$-operations.
\end{cor}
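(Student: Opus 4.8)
The plan is to reduce both statements to the monotonicity of the unextended quantity $I(\S:\T)$ under individual operations, established in Theorem \ref{thm:mono}, using the fact that $I_{sq}$ and $I_{conv}$ are built from $I(\S:\T)$ by an order-preserving construction (an infimum over extensions, respectively a convex roof). A finite sequence of operations is then handled by composing the single-step bounds, and $\T$-ops follow from $\S$-ops by exchanging the two labels, so it suffices to treat one state-modifying $\S$-operation $\Phi$ carrying $(\S,\T,\rho)$ to $(\hat\S,\hat\T,\hat\rho)$ with $\hat\rho=\Phi(\rho)$; throughout I use that the operational constraints on an $\S$-op (in particular $\T$-preservation up to isometry) keep the condition $\S\cap\T=\CC 1$ intact, so $I_{sq}(\hat\S:\hat\T)$ and $I_{conv}(\hat\S:\hat\T)$ are defined.

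For $I_{conv}$ I would fix any decomposition $\rho=\sum_x p_x\rho_x$ with $\rho_x\in S_1(\M)$. Applying the channel gives $\hat\rho=\Phi(\rho)=\sum_x p_x\Phi(\rho_x)$, an admissible decomposition of $\hat\rho$, and Theorem \ref{thm:mono} gives $I(\hat\S:\hat\T)_{\Phi(\rho_x)}\le I(\S:\T)_{\rho_x}$ term by term. Hence $\sum_x p_x I(\hat\S:\hat\T)_{\Phi(\rho_x)}\le\sum_x p_x I(\S:\T)_{\rho_x}$, and since the left side bounds $2I_{conv}(\hat\S:\hat\T)_{\hat\rho}$ from above (it uses one admissible decomposition of $\hat\rho$), taking the infimum over decompositions of $\rho$ on the right yields $I_{conv}(\hat\S:\hat\T)_{\hat\rho}\le I_{conv}(\S:\T)_\rho$. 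Because $\Phi$ is a single trace-preserving channel rather than a selective measurement, this averaging argument is all that is required and no separate treatment of classical flag outcomes arises.

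For $I_{sq}$ I would show that each admissible squashing extension of the input pushes forward to an admissible extension of the output without increasing $I$. By Remark \ref{rem:sqfactor} I may take the extension in the form $\tilde\M\cong\mathbb{M}_{|\S\T|}\otimes\R$ with squashing factor $\R=\tilde\S\cap\tilde\T$ lying in the shared memory, $\tilde\s=\s$, $\tilde\t=\t$, and $\E_{\S\T}(\tilde\rho)=\rho$. Define the lifted operation $\tilde\Phi\equiv\Phi\otimes\id_\R$, acting as $\Phi$ on $\S\T\cong\mathbb{M}_{|\S\T|}$ and trivially on $\R$. Granting that $\tilde\Phi$ is again a state-modifying $\S$-op on the extended commuting square and that $\tilde\Phi(\tilde\rho)$ is an admissible extension of $\hat\rho$, Theorem \ref{thm:mono} applied to $\tilde\Phi$ gives $I(\tilde\S:\tilde\T)_{\tilde\rho}\ge I(\widehat{\tilde\S}:\widehat{\tilde\T})_{\tilde\Phi(\tilde\rho)}\ge 2I_{sq}(\hat\S:\hat\T)_{\hat\rho}$, the last inequality because $\tilde\Phi(\tilde\rho)$ is one admissible extension in the infimum defining $I_{sq}$ at the output. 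Taking the infimum over all input extensions $\tilde\rho$ then gives $I_{sq}(\S:\T)_\rho\ge I_{sq}(\hat\S:\hat\T)_{\hat\rho}$.

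The main obstacle is exactly the verification that $\tilde\Phi=\Phi\otimes\id_\R$ remains a legitimate $\S$-op and outputs a legitimate extension. Two points need care. First, admissibility of $\tilde\Phi(\tilde\rho)$ reduces to the intertwining $\E_{\hat\S\hat\T}\,\tilde\Phi=\Phi\,\E_{\S\T}$: since tracing out $\R$ gives $\E_{\S\T}=\id\otimes\tau_\R$ and $\E_{\hat\S\hat\T}=\id\otimes\tau_\R$ on the relevant algebras, one computes $\E_{\hat\S\hat\T}\tilde\Phi=\Phi\otimes\tau_\R=\Phi\,\E_{\S\T}$, whence $\E_{\hat\S\hat\T}(\tilde\Phi(\tilde\rho))=\Phi(\E_{\S\T}(\tilde\rho))=\Phi(\rho)=\hat\rho$, so the expectations in $\hat\s\hat\t$ and the requirements $\tilde{\hat\s}=\hat\s$, $\tilde{\hat\t}=\hat\t$ are met. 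Second, $\tilde\Phi$ inherits the $\S$-adjusted bimodule property and $\T$-preservation up to isometry from $\Phi$: because $\R\subseteq\tilde\S\cap\tilde\T$, tensoring the bimodule identity $a\Phi^\dagger(b)c=\Phi^\dagger(abc)$ with $\id_\R$ preserves it for $a,c\in\tilde\S$ and $b\in\tilde\M$, and the isometry witnessing $\T$-preservation extends as $U\otimes\id_\R$. These are the tensor-structure checks of Remark \ref{rem:bimodcomm}; the only genuinely delicate point is confirming that the commuting-square and factor structure of Remark \ref{rem:sqfactor} is compatible with this tensor splitting, which is where I would concentrate the detailed work. Once the single $\S$-op case is settled, $\T$-operations follow by symmetry and finite sequences by iteration, completing the proof.
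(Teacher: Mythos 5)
Your proposal is correct and follows essentially the same route as the paper: lift the individual operation to act trivially on the squashing extension (respectively, push forward each member of a convex decomposition), observe that the lifted map is still an admissible individual operation producing an admissible candidate for the output infimum, apply Theorem \ref{thm:mono}, and take the infimum. The paper's proof states this more tersely and treats $I_{sq}$ and $I_{conv}$ uniformly by writing both as an infimum over extensions (with a classical extension algebra for $I_{conv}$), whereas you handle the convex roof by direct decomposition and supply the tensor-intertwining verifications explicitly; these are presentational differences, not a different argument.
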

\noindent We also find a class of transformations that allows us to add elements to the algebras $\s$ and $\t$ in return for mixing the state.
\begin{theorem} (\textbf{Covariant Averaging with Algebra Replacement}) \label{thm:entlcovar}
Let $\S, \T \subseteq \S \T$ form a commuting square. Let $\E_\R(\rho) = \int_G U \rho U^\dagger d\mu(U)$, an integration over the unitary subgroup $G$ with probability measure $\mu$ (not necessarily Haar). Let each $U$ with non-zero measure commute with $\E_\S, \E_\T, \E_{\S \T}$, and $\E_{\S \cap \T}$ (in the language of asymmetry, these conditional expectations are covariant). Let $\tilde{\S}, \tilde{\T}$ also form a commuting square within $\S \T = \tilde{\S} \tilde{\T}$, such that $\tilde{\S} \cap \R = \S \cap \R$, $\tilde{\T} \cap \R = \T \cap \R$, and $\tilde{\S} \cap \tilde{\T} \cap \R = \S \cap \T \cap \R$. Then
\begin{equation}
I_{*}(\S : \T)_\rho \leq I_{*}(\tilde{\S} : \tilde{\T})_{\E_\R(\rho)},
\end{equation}
where $I_{*} \in \{I_{sq}, I_{conv}\}$.
\end{theorem}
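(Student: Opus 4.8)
The plan is to treat $I_{sq}$ and $I_{conv}$ in parallel, since both are infima of the bare functional $I(\S:\T)_\tau = H(\S)_\tau + H(\T)_\tau - H(\S\T)_\tau$ (here $\S\cap\T=\CC 1$) over, respectively, extensions and convex decompositions constrained to preserve the accessible data in $\s=\S$ and $\t=\T$. I would first record the consequences of covariance. Since every $U$ in the support of $\mu$ satisfies $U\E_\S(a)U^\dagger=\E_\S(UaU^\dagger)$ and likewise for $\E_\T,\E_{\S\T},\E_{\S\cap\T}$, the averaging map $\E_\R$, being a $\mu$-mixture of these conjugations, commutes with each of these conditional expectations; its fixed-point algebra is $\R$, so $\E_\R(\rho)$ lies in $S_1(\R)$, and for every such $U$ the bare functional is invariant, $I(\S:\T)_{U\tau U^\dagger}=I(\S:\T)_\tau$.

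The engine of the comparison is an identity on $\R$-invariant states. For $\omega$ with $\E_\R(\omega)=\omega$, commutation gives $\E_\S(\omega)=\E_\R\E_\S(\omega)$, so $\E_\S(\omega)\in\S\cap\R$, and the tower property then forces $\E_\S(\omega)=\E_{\S\cap\R}(\omega)$, with the analogous statements for $\T$. Using the hypotheses $\tilde\S\cap\R=\S\cap\R$ and $\tilde\T\cap\R=\T\cap\R$ together with the tower property once more, $\E_{\S}(\omega)=\E_{\S\cap\R}(\omega)=\E_{\tilde\S\cap\R}(\omega)=\E_{\tilde\S\cap\R}\big(\E_{\tilde\S}(\omega)\big)$, exhibiting $\E_\S(\omega)$ as the restriction of $\E_{\tilde\S}(\omega)$ to the common corner $\S\cap\R=\tilde\S\cap\R$. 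Because conditional expectations are unital and trace preserving, this identifies the $\S$- and $\tilde\S$-data on the invariant sector up to a controlled, signed entropy difference, and with $H(\S\T)_\omega=H(\tilde\S\tilde\T)_\omega$ (equal algebras) it pins $I(\S:\T)_\omega$ against $I(\tilde\S:\tilde\T)_\omega$. This is the step that converts the intersection hypotheses into an entropy comparison, and it is where $\tilde\S\cap\tilde\T\cap\R=\S\cap\T\cap\R=\CC 1$ enters to control the shared-memory term.

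To upgrade this to the monotones I would take a near-optimal decomposition (for $I_{conv}$) or extension (for $I_{sq}$) witnessing $I_{*}(\tilde\S:\tilde\T)_{\E_\R(\rho)}$ and pull it back through the averaging to a feasible witness for $I_{*}(\S:\T)_\rho$ of no larger value, relabelling by covariance-invariance of $I$ under each $U$ and matching the $\S$- and $\tilde\S$-data by the invariant identity above. Concretely, I would dilate the twirl to a classical register recording $U$, so that its conditional components are the covariant copies $U\rho U^\dagger$, splice this dilation with the given witness for $\E_\R(\rho)$, and then discard the register; the algebra-replacement-with-mixing transformation of Lemma \ref{lem:swap} is exactly the device that makes the resulting object a legitimate competitor for the $\rho$-side infimum while keeping the commuting-square and $\s,\t$-expectation constraints intact. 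The main obstacle is precisely this lift across $\E_\R$: a decomposition or extension of the averaged state $\E_\R(\rho)$ does not in general lift to one of $\rho$, and naive convexity only yields the opposite bound $I_{*}(\S:\T)_{\E_\R(\rho)}\le I_{*}(\S:\T)_\rho$, so the argument must exploit the covariant structure of the twirl rather than convexity alone. Verifying that the pulled-back witness still forms a commuting square, still meets the intersection constraints $\tilde\S\cap\R=\S\cap\R$ and its partners, and still leaves the expectations in $\s$ and $\t$ unchanged is where the bulk of the bookkeeping will lie; once that is in place the inequality $I_{*}(\S:\T)_\rho\le I_{*}(\tilde\S:\tilde\T)_{\E_\R(\rho)}$ follows by comparing the two infima term by term.
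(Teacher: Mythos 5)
Your proposal founders on exactly the step you flag as the main obstacle, and it has to: a witness (extension or convex decomposition) for $I_{*}(\tilde{\S}:\tilde{\T})_{\E_\R(\rho)}$ has marginal $\E_\R(\rho)$ on $\S\T$, not $\rho$, and no splicing with a classical register recording $U$ can repair this, because the inequality as literally printed is false. Concretely, take $\M = \S\T = M_2$, $\S = \tilde{\S} = \X$, $\T = \tilde{\T} = \Z$, and $G = \{1,X,Y,Z\}$ with uniform measure: every $U \in G$ commutes with $\E_\X$, $\E_\Z$, $\E_{\S\T} = \id$ and $\E_{\CC 1}$, all intersection hypotheses hold trivially, and $\E_\R$ is the completely depolarizing channel. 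With $\rho = \ketbra{\uparrow_Y}$, the left-hand side is $I_{sq}(\X:\Z)_{\ketbra{\uparrow_Y}} = 1/2$ by equation \eqref{eq:ucrisq}, while the right-hand side vanishes: decomposing $\id/2$ into $Z$-eigenstates, each term has $H(\X)+H(\Z)-H(\M)-H(\CC 1) = \log 2 + 0 - 0 - \log 2 = 0$, so $I_{conv}(\X:\Z)_{\id/2}=0$, and $I_{sq}(\X:\Z)_{\id/2}=0$ since $0 \leq I_{sq} \leq I_{conv}$. The inequality sign in the statement is a slip: what is provable, what the surrounding resource-theoretic narrative requires (mixing the state while trading the algebras should be $I$-non-increasing), and what the paper's own proof actually establishes is the reverse bound $I_{*}(\tilde{\S}:\tilde{\T})_{\E_\R(\rho)} \leq I_{*}(\S:\T)_{\rho}$. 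Note that even Lemma \ref{lem:swap}, which you invoke as the key device, points in that reverse direction.

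The frustrating part is that you assembled exactly the right ingredients and then discarded the decisive one. The paper's proof is: (i) each covariant $U$, extended as $U \otimes \id^{\C}$ to any admissible extension, leaves the bare functional and hence $I_{sq}$ and $I_{conv}$ invariant --- your first paragraph; (ii) convexity then gives $I_{*}(\S:\T)_{\E_\R(\rho)} \leq \int I_{*}(\S:\T)_{U\rho U^\dagger}\, d\mu(U) = I_{*}(\S:\T)_{\rho}$ --- precisely the bound you dismissed as ``naive'' and ``opposite''; (iii) on averaged states the identity $\E_{\S\C}\E_{\R\C} = \E_{(\S\cap\R)\C}$ (your ``engine'': commuting conditional expectations compose to the expectation onto the intersection), together with $\S\cap\R = \tilde{\S}\cap\R$, $\T\cap\R = \tilde{\T}\cap\R$, $\S\cap\T\cap\R = \tilde{\S}\cap\tilde{\T}\cap\R$ and $\S\T = \tilde{\S}\tilde{\T}$, converts every entropy term for $(\S,\T)$ into the matching term for $(\tilde{\S},\tilde{\T})$, so the two infima run over the same states and take the same values: $I_{*}(\S:\T)_{\E_\R(\rho)} = I_{*}(\tilde{\S}:\tilde{\T})_{\E_\R(\rho)}$. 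Chaining (iii) with (ii) proves the corrected theorem. One further caution: step (iii) needs the identity on both sides, i.e.\ also $\E_{\tilde{\S}}\E_\R = \E_{\tilde{\S}\cap\R}$ and its partners, which requires the tilde expectations to commute with $\E_\R$ as well --- a hypothesis the statement omits but the paper uses implicitly and verifies in its application (Corollary \ref{cor:final}). Your one-sided substitute, $\E_\S(\omega) = \E_{\tilde{\S}\cap\R}(\E_{\tilde{\S}}(\omega))$, only yields entropy inequalities whose signs conflict across the four terms of $I$, so it cannot by itself pin the two functionals against each other.
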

\noindent The primary physical use of Theorem \ref{thm:entlcovar} is when the averaging has no effect on the state. Rather, when observers know that a given state is inside a smaller subalgebra than their joint algebra, they can change assumptions about the environment and accessible observables. This is the non-local version of transformation \ref{extrafree5} in Proposition \ref{prop:extrafree}.
\subsection{Entanglement and Uncertainty}
As per Remark \ref{rem:vsentl}, $I_{sq}$ can be non-zero even in systems of dimension 2 or 3, which cannot contain entanglement. We do not see $I_{sq}$ as an entanglement measure, but as quantifying a broader form of nonlocality that includes as its other well-known case a form of quantum uncertainty. In particular,
\begin{equation} \label{eq:ucrisq}
I_{sq}(\X : \Z)_{\ket{\uparrow_Y}} = 1/2 \pl,
\end{equation}
where $\X, \Z$ are the Pauli bases of a qubit, and $\ket{\uparrow_Y} = \frac{1}{\sqrt{2}}(\ket{0} + i \ket{1})$ is a $Y$-eigenstate. It is possible to transform two copies of this to an entangled sttate.
\begin{cor} (\textbf{EPR $\leftrightarrow$ 2UCR}) \label{cor:final}
Let $\S = \braket{Z_A, Z_B}$, and $\T = \braket{X_A, X_B}$. Let $\rho = \ket{\uparrow_Y \uparrow_Y}^{AB}$. Then there exists a transformation under which $I_{sq}$ and $I_{conv}$ are non-increasing that converts this configuration to $\S = \braket{X_A, Z_A Z_B}$, $\T = \braket{X_A X_B, Z_B}$. This configuration is equivalent to $\frac{1}{\sqrt{2}} (\ket{0-} + i \ket{1+})$, with $\S = \A$ and $\T = \B$, The transformation is reversible.
\end{cor}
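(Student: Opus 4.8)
The plan is to realize the conversion as a composition of two moves---an algebra replacement that leaves the state fixed, followed by a global change of coordinates---and then to observe that $I_{sq}$ and $I_{conv}$ take the same value $1$ at both ends, so that any transformation which is non-increasing in each direction is forced to be value-preserving and hence reversible.

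First I would pin down this common value. In the $2$UCR configuration, $\S_0=\braket{Z_A,Z_B}$ and $\T_0=\braket{X_A,X_B}$ are mutually unbiased maximal abelian subalgebras with $\S_0\cap\T_0=\CC 1$ and $\S_0\T_0=\M\cong\mathbb{M}_4$, and $\rho_0=\ketbra{\uparrow_Y\uparrow_Y}$ is pure in $\S_0\T_0$; so by property \ref{entlprop2}, $I_{sq}(\S_0:\T_0)_{\rho_0}=I_{conv}(\S_0:\T_0)_{\rho_0}=\tfrac12 I(\S_0:\T_0)_{\rho_0}$. Since $\E_{\S_0}(\rho_0)=\E_{\T_0}(\rho_0)=\tfrac14\mathbf 1$ and $H(\S_0\cap\T_0)_{\rho_0}=\log 4$, a direct count gives $I(\S_0:\T_0)_{\rho_0}=2+2-0-2=2$, i.e.\ the value $1$ (equivalently, additivity, property \ref{entlprop6}, together with the single-qubit value \eqref{eq:ucrisq} gives $\tfrac12+\tfrac12$). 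For the EPR endpoint $\A,\B$ are tensor factors and $\frac{1}{\sqrt 2}(\ket{0-}+i\ket{1+})$ is maximally entangled, so by remark \ref{rem:entl} both measures again equal the entanglement entropy $1$.

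Next I would introduce the intermediate configuration $(\rho_0,\S_1,\T_1)$ with $\S_1=\braket{X_A,Z_AZ_B}$ and $\T_1=\braket{X_AX_B,Z_B}$, and check that $X_A$ anticommutes with $Z_AZ_B$ while all four generators of $\S_1$ commute with those of $\T_1$; hence $\S_1\cong\T_1\cong\mathbb{M}_2$ are commuting factors with $\S_1\T_1=\M$ and $\S_1\cap\T_1=\CC 1$, and purity of $\rho_0$ again makes the invariant $1$. The equivalence to EPR is then a pure change of coordinates in the sense of Proposition \ref{prop:extrafree}(2): conjugation by the controlled-NOT $U$ with control $B$ and target $A$ sends $Z_A\mapsto Z_AZ_B$, $Z_B\mapsto Z_B$, $X_A\mapsto X_A$, $X_B\mapsto X_AX_B$, so that $U\S_1U^\dagger=\A$, $U\T_1U^\dagger=\B$, and $U\ket{\uparrow_Y\uparrow_Y}=\frac{1}{\sqrt 2}(\ket{0-}+i\ket{1+})$; such a global unitary leaves every entropy, and therefore $I_{sq}$ and $I_{conv}$, unchanged.

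The heart of the argument is the remaining step $(\rho_0,\S_0,\T_0)\to(\rho_0,\S_1,\T_1)$, which turns the two parties' maximal abelian algebras into the logical factors $\S_1,\T_1$ while keeping the state fixed. I would realize this through Theorem \ref{thm:entlcovar}, taking the averaging group $G=\braket{Y_A,Y_B}$ inside the stabilizer of $\rho_0$. Here the invariant algebra is $\R=\braket{Y_A,Y_B}$ (maximal abelian, self-commutant), and since $\rho_0=\tfrac14(\mathbf 1+Y_A)(\mathbf 1+Y_B)\in\R$ we get $\E_\R(\rho_0)=\rho_0$; moreover each generator of $G$ normalizes $\S_0,\T_0,\S_1,\T_1$ (as $Y$ sends $X\mapsto-X$, $Z\mapsto -Z$), so it is covariant with all the relevant conditional expectations, and $\S_i\cap\R=\CC 1$ for $i=0,1$, so the intersection constraints $\tilde\S\cap\R=\S\cap\R$ etc.\ hold trivially. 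Applying the theorem forward gives $I_*(\S_0:\T_0)_{\rho_0}\le I_*(\S_1:\T_1)_{\rho_0}$, and applying it with the same $G$ and the roles of the two squares interchanged gives the reverse inequality; hence equality, and both directions are legitimate free moves on which $I_{sq},I_{conv}$ are constant at $1$. (Alternatively one may factor this step as a shrink to the common parity subalgebras $\braket{Z_AZ_B}$, $\braket{X_AX_B}$ via Proposition \ref{prop:extrafree}(3), followed by the enlargement of Lemma \ref{lem:swap} / Proposition \ref{prop:extrafree}(5), checking that the accompanying channel acts trivially on $\rho_0$.) Reversibility of the full conversion then follows from bookkeeping: both endpoints have $I_{sq}=I_{conv}=1$, each elementary move is non-increasing (Corollary \ref{cor:sqindivops}, Theorem \ref{thm:entlcovar}, Proposition \ref{prop:extrafree}), and the coordinate change is involutive with $U^\dagger=U$. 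I expect the main obstacle to be exactly this algebra-replacement step: one must verify the covariance and commuting-square hypotheses of Theorem \ref{thm:entlcovar} (or the enlargement of Lemma \ref{lem:swap}) precisely enough that the invariant is preserved rather than strictly dropped, since only exact preservation yields the claimed reversibility.
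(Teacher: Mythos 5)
Your proposal is correct and follows essentially the same route as the paper: the algebra replacement $(\S_0,\T_0)\to(\S_1,\T_1)$ via Theorem \ref{thm:entlcovar} with $\R=\braket{Y_A,Y_B}$ (checking $Y$-covariance of the conditional expectations and the trivial intersections with $\R$), followed by the global coordinate change $C_{Z_B\to X_A}$, i.e.\ the controlled-NOT with control $B$ and target $A$. Your additions---computing the common value $I_{sq}=I_{conv}=1$ at both endpoints and invoking the invariance $\E_\R(\rho_0)=\rho_0$ together with a two-sided application of the theorem to justify reversibility---only make explicit what the paper leaves implicit.
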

The algebras $\S = \braket{Z_A, Z_B}, \T = \braket{X_A, X_B}$ immediately allow extraction of correlation from the state $\ket{\uparrow_Y \uparrow_Y}^{AB}$, as it is easy to check that the values of $Z_A Z_B$ and $X_A X_B$ are correlated. Corollary \ref{cor:final} implies we can extend the resource theory of individual operations to one in which EPR $\leftrightarrow$ 2UCR, where ``EPR" refers to a single Bell or Einstein-Podolsky-Rosen state, a maximally entangled pair of qubits, and $UCR$ refers to the configuration of equation \eqref{eq:ucrisq}.

Entanglement depends on one's choice of how to decompose the Hilbert space, arguably relying on a priori spatial structure. In particular, one can factor a 2-qubit maximally entangled basis into a product of two (non-local) qubit bases, in which ordinary product states would look entangled, and ordinary entangled states look like products. Relatedly, entanglement-like correlations between aspects of a single particle show teleportation and Bell violations \cite{boschi_experimental_1998, michler_experiments_2000}. Hence even for tensor products, one can find non-separability within local spaces.

In the absence of spatial structure, one may define an $n$-qubit system as any $n$ mutually commuting pairs of anticommuting, norm-one observables \cite{chao_overlapping_2017}. The setting we consider of algebras $\S = \braket{Z_A, Z_B}, \T = \braket{X_A, X_B}$ flips this around: each party accesses a pair of commuting observables, which anticommute with the other party's. It therefore differs from any tensor product of systems. It nonetheless supports a positive analog of mutual information, some generalized measures of non-classical correlation, and a consistent resource theory in which we may convert between uncertainty-based and entanglement-based non-classicality.

\section{Conclusions and Discussion}
The uncertainty relations derived in this work do not completely subsume the uncertainty relation with memory, which does not require a commuting square. We refer the interested reader to our results in \cite{gao_uncertainty_2018}, which extend both beyond conditional expectations and beyond commuting squares. In the absence of a commuting square, it is unclear how to interpret the generalized CMI in the context of correlations or resources. Negativity of generalized CMI may reflect a fundamental incompatibility, or leave open the question of whether new physical ideas will explain these cases.

The most immediate consequence of the connections shown herein is that we can transfer tools in inequalities between the settings of uncertainty, asymmetry, coherence, corrleations and others. We can also explicitly model systems not falling entirely in one known setting, such as particular cases in which two observers may each control access to a group of transformations and corresponding observables. On a more fundamental and conceptual level, these mathematical links suggest connections between physical settings. Entanglement and uncertainty are two of the most famous aspects of quantum mechanics. We see a strong connection between them and linking both to the concept of symmetry.

For a system of independent random variables, entropy is an extensive property, scaling with the number of subsystems. Fully correlated but classical random variables make it an intensive property, in which the entropy of the system is equal to that of each subsystem. Quantum correlations allow the entropy of the whole to be less than the entropy of each constituent. This connects with EPR's \cite{einstein_can_1935} claimed paradox - the local theories are individually incomplete, thereby making less deterministic predictions than would be possible with access to the whole. Relatedly, the privacy of quanta enforced by the no-cloning theorem \cite{wootters_single_1982}, existence of purification and role of complementary channels suggest that quantum phenomena fundamentally involve a system's role as part of a more complete whole. The aforementioned quantum entropic phenomenon also appears for mutually unbiased bases. We might think of observables in a single basis as another kind of incomplete theory. The quantumness we quantify is the incompleteness of partial algebras, relative to the completeness of the joint.

\section{Acknowledgments}
We thank Mark M. Wilde for comments on an earlier version of this work, and Reinhard Werner for helpful discussions. LG acknowledges support from Trjitzinsky Fellowship. LG and NL acknowledge support from NSF grants DMS-1700168 DMS-1800872. NL was supported by Graduate Research Fellowship Program DMS-1144245.  MJ was partially supported by NSF grant DMS-1501103. We are grateful to Institut Henri Poincar\'e's hospitality during participation in the trimester ``Analysis in Quantum Information Theory''.

\bibliographystyle{abbrv}
\bibliography{ssa_ucr_entl}

\appendices

\section{Methods and Longer Proofs}
\subsection{Observable Algebras} \label{sec:alg}
Quantum information theory traditionally focuses on Schr\"odinger picture quantum mechanics, in which processes affect density matrices. We will however often find it illuminating to consider the Heisenberg picture, in which processes affect observables. Traditional quantum mechanics models observables as Hermetian matrices, of which each eigenvalue may specify a measurement outcome, observation of which projects the state into its corresponding eigenstate(s). From the perspective of information theory, an observable is fully characterized by its distinct eigenvectors. Information theory describes the statistics of messages rather than their content, so we will rarely need to consider the literal eigenvalues of observables or their interpretation as physical quantities.

We will primarily be interested in the algebras generated by quantum observables as von Neumann subalgebras of matrix algebras. In a von Neumann algebra $\M$, we will allow and assume closure under transformations of the following forms:
\begin{enumerate}
	\item Linear combinations with real coefficients: $a X + b Y \in \M$ $\forall X,Y \in \M$ and $a,b \in \RR$.
	\item Re-scaling of distinct entries in the diagonal basis.
	\item Composition of elements.
	\item Hermetian conjugates.
\end{enumerate}
A basic fact from von Neumann algebra theory is that in finite-dimension, a von Neumann algebra is always a block diagonal matrix algebra in some basis. In particular, these have the form
\begin{equation} \label{eq:vnaform}
\N = \oplus_i (\N_{n_i} \otimes \CC1_{m_i}),
\end{equation}
where $i$ indexes the diagonal blocks, and $n_i$ and $m_i$ quantify the dimensions of the block subspace and a potentially traced subspace. The conditional expectation as given in the Schr\"odinger picture and its commutant are given by
\begin{equation} \label{eq:condexpform}
\begin{split}
\E_\N(\rho) & = \oplus_i(\tr_{m_i}(P_i \rho P_i) \otimes \id_{m_i} / m_i) \\
\E_{\N'}(\rho) & = \oplus_i(\id_{n_i} / n_i \otimes \tr_{n_i}(P_i \rho P_i)),
\end{split}
\end{equation}
where $P_i \rho P_i / \tr(P_i \rho P_i)$ is a density in an $(n_i \times n_i) \otimes (m_i \times m_i)$-dimensional matrix space, and $\tr_{m_i}$ traces out the $m_i$-dimensional space. Because this conditional expectation is both trace-preserving and unital, $\sum_i n_i m_i = |M|$. We are particularly interested in the relationship between $\E_{\N'}$ and $\E_{\N}^c$, the respective commutant and complement of $\E_\N$ as a quantum channel.

Let $A \subseteq M$. The channel $\rho^{M} \rightarrow \tr_{A}(\rho) \otimes \id_A/|A|$ traces out $A$, replacing it by complete mixture. There is a simple and minimal Stinespring dilation $U$ given by
\begin{equation} \label{eq:trstine}
U\rho U = \text{SWAP}(A, E_1) \Big ( \rho \otimes \frac{1}{|A|} \sum_{i,j} \ketbra{i,j}^{E_1 E_2} \Big ).
\end{equation}
The environment is left with $\rho^A$, and an extra system that is maximally entangled with the completely mixed output in $A$. We may Stinespring dilate a conditional expectation as a direct sum of these,
\begin{equation} \label{eq:condexpstine}
U_{\E_\N} \rho U_{\E_\N}^\dagger = \oplus_{i,j} (\id_{n_i} \otimes U_i ) (P_i \rho P_j) (\id_{n_j} \otimes U_j^\dagger),
\end{equation}
where $(\id_{n_i} \otimes U_i)$ dilates $\rho \rightarrow \tr_{m_i}(\rho) \otimes \id_{m_i} / m_i$ via equation \eqref{eq:trstine}. This naturally yields the same block diagonal structure on $E$. We thus calculate
\begin{equation} \label{eq:condexpcomp}
\E_{\N}^c = \oplus_i (\id_{m_i} / m_i \otimes \tr_{n_i}(P_i \rho P_i)) \pl ,
\end{equation}
The only difference between $\E_{\N'}$ and $\E_\N^c$ is the blockwise dimension filled by complete mixture.

A common motivation for replacing tensored Hilbert spaces by von Neumann algebras is to study field theories and other infinite-dimensional settings in which tensor factors are invalid \cite{witten_aps_2018}. These theories often have divergent entanglement entropy for subregions, or even a phenomenon known as embezzlement \cite{cleve_perfect_2017}, in which parties may extract EPR pairs from a pre-shared catalyst or even the vacuum. It is crucial to our conclusion, however, that there be no way to draw entanglement from the algebraic structure of the theory. Hence we focus on the basic setting of finite-dimensional Hilbert spaces. We have taken great care to ensure that the entanglement and the non-classicality of uncertainty can really only come from each other. In future work, one may consider if in some settings that forbid tensor decomposition, it is even possible to distinguish entanglement from the uncertainty-like non-classical entropy studied here.

\begin{lemma} \label{lem:condexplog}
Let $\N \subset \M$ with conditional expectation $\E_\N$. Then for any density $\rho \in \M$,
\begin{equation}
\E_\N \log \E_\N(\rho) = \log \E_\N(\rho).
\end{equation}
\end{lemma}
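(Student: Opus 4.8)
The plan is to reduce everything to two elementary properties of a conditional expectation: that $\E_\N$ takes values in $\N$, and that it restricts to the identity on $\N$. The crux is the observation that $\log\E_\N(\rho)$ is produced from $\E_\N(\rho)$ purely by functional calculus, so as soon as one knows that $\N$ is closed under functional calculus the logarithm again lies in $\N$, and a conditional expectation leaves the elements of its own range untouched.

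First I would note that $\E_\N(\rho)\in\N$, immediately from $\E_\N$ being a map $\M\to\N$, and that it is positive since $\rho\ge 0$ and $\E_\N$ is positive. Next I would use the finite-dimensional structure \eqref{eq:vnaform}: writing $\N=\oplus_i(\N_{n_i}\otimes\CC1_{m_i})$, a self-adjoint $x=\oplus_i(x_i\otimes\id_{m_i})\in\N$ satisfies $f(x)=\oplus_i(f(x_i)\otimes\id_{m_i})\in\N$ for any continuous $f$, so $\N$ is closed under the continuous functional calculus. Applying this with $f=\log$ (taken on the support of $\E_\N(\rho)$, with the usual convention on the kernel) gives $\log\E_\N(\rho)\in\N$.

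Finally I would invoke that $\E_\N$ acts as the identity on $\N$, i.e.\ $\E_\N(b)=b$ for every $b\in\N$. This follows either from idempotence ($\E_\N^2=\E_\N$ with range $\N$) or directly from the explicit formula \eqref{eq:condexpform}, which on a block-diagonal $b=\oplus_i(b_i\otimes\id_{m_i})$ returns $\oplus_i\big(\tr_{m_i}(b_i\otimes\id_{m_i})/m_i\otimes\id_{m_i}\big)=\oplus_i(b_i\otimes\id_{m_i})=b$, using $\tr_{m_i}(b_i\otimes\id_{m_i})=m_i\,b_i$. Taking $b=\log\E_\N(\rho)$ then yields $\E_\N(\log\E_\N(\rho))=\log\E_\N(\rho)$, which is the claim. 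There is essentially no genuine obstacle here; the only point requiring a word of care is the definition of $\log$ when $\E_\N(\rho)$ is not full rank, but since the support projection of $\E_\N(\rho)$ is itself an element of $\N$, the restricted logarithm stays in $\N$ and the argument goes through unchanged.
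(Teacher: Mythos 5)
Your proof is correct and rests on the same foundation as the paper's own argument: the finite-dimensional block structure of $\N$ and $\E_\N$ given in equations \eqref{eq:vnaform} and \eqref{eq:condexpform}. The paper carries out the computation directly (taking the logarithm blockwise and observing that re-applying $\E_\N$ traces out each $\id_{m_i}$, producing a factor $m_i$ that cancels the normalization), whereas you repackage exactly those two steps as the general facts that $\N$ is closed under functional calculus and that $\E_\N$ restricts to the identity on $\N$; your additional care with the support projection when $\E_\N(\rho)$ is rank-deficient is a point the paper's proof leaves implicit.
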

\begin{proof}
We use the explicit form in equation \eqref{eq:condexpform}.
\begin{equation}
\begin{split}
\log \E_\N(\rho) & = \log (\oplus_i (\tr_{m_i} (P_i \rho P_i) \otimes \hat{1}_{m_i} / m_i)) \\
& = \oplus_i (\log (\tr_{m_i} (P_i \rho P_i)  / m_i ) \otimes \id_{m_i}) .
\end{split}
\end{equation}
If we apply $\E_\N$ again, we will trace the $\id_{m_i}$ in each $i$-block, yielding an extra $m_i$ factor, which is canceled by the normalization.
\end{proof}

\begin{lemma} \label{lem:entropyforms}
For any subalgebra $\N \subset \M$ and any $\rho \in M$, $H(\E_\N(\rho)) - H(\rho) = D(\rho \| \E_\N(\rho))$.
\end{lemma}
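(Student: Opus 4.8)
The plan is to expand both sides using the definitions $H(\rho) = -\tr(\rho\log\rho)$ and $D(\rho\|\sigma) = \tr(\rho\log\rho - \rho\log\sigma)$, and reduce the claimed identity to a single trace equation that the defining property of the conditional expectation settles at once. First I would write
\[
D(\rho\|\E_\N(\rho)) = \tr(\rho\log\rho) - \tr(\rho\log\E_\N(\rho)),
\]
and, since $H(\E_\N(\rho)) = -\tr(\E_\N(\rho)\log\E_\N(\rho))$,
\[
H(\E_\N(\rho)) - H(\rho) = \tr(\rho\log\rho) - \tr(\E_\N(\rho)\log\E_\N(\rho)).
\]
Subtracting these, the common term $\tr(\rho\log\rho)$ cancels, so the lemma is \emph{equivalent} to the identity
\[
\tr(\rho\log\E_\N(\rho)) = \tr(\E_\N(\rho)\log\E_\N(\rho)).
\]

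The key observation is that $\log\E_\N(\rho)$ lies in $\N$: since $\E_\N(\rho) \in \N$ and $\N$ is a von Neumann subalgebra, it is closed under functional calculus, and indeed Lemma \ref{lem:condexplog} gives $\E_\N\log\E_\N(\rho) = \log\E_\N(\rho)$, exhibiting $\log\E_\N(\rho)$ as fixed by $\E_\N$ and hence an element of $\N$. Setting $b := \log\E_\N(\rho) \in \N$ and $a := \rho \in \M$, the defining relation $\tr(ab) = \tr(\E_\N(a)b)$ for all $a\in\M$, $b\in\N$ yields exactly the required identity $\tr(\rho\log\E_\N(\rho)) = \tr(\E_\N(\rho)\log\E_\N(\rho))$, completing the argument.

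There is essentially no substantive obstacle here beyond correctly invoking the two structural features of $\E_\N$: that it is the trace-adjoint projection onto $\N$ (the pairing identity above) and that $\N$ is closed under $\log$, so $\log\E_\N(\rho)\in\N$. The only point needing a word of care is well-definedness of the relative entropy, i.e. the support containment $\supp(\rho)\subseteq\supp(\E_\N(\rho))$; this follows from the explicit block form in \eqref{eq:condexpform}, since within each block a vector annihilated by $\tr_{m_i}(P_i\rho P_i)$ generates, together with the traced factor, a subspace annihilated by $P_i\rho P_i$ by positivity. With this the logarithm is taken on the support of $\E_\N(\rho)$ and all traces are finite, so the algebraic identity proves the stated equality.
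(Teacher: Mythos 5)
Your proposal is correct and follows essentially the same route as the paper's proof: expand the relative entropy, reduce to the trace identity $\tr(\rho\log\E_\N(\rho)) = \tr(\E_\N(\rho)\log\E_\N(\rho))$, and obtain it from Lemma \ref{lem:condexplog} together with the defining trace-duality property of $\E_\N$. Your added remark on support containment (well-definedness of $D(\rho\|\E_\N(\rho))$) is a point of care the paper leaves implicit, but it does not change the substance of the argument.
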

\begin{proof}
First, we expand
\begin{equation}
\begin{split}
D(\rho \| \E_\N(\rho)) = \tr(\rho \log \rho - \rho \log \E_\N(\rho)) \pl .
\end{split}
\end{equation}
Applying Lemma \ref{lem:condexplog}, $\tr(\rho \log \E_\N(\rho)) = \tr(\E_\N(\rho) \log \E_\N(\rho))$. The definition of von Neumann entropy then completes the proof.
\end{proof}
\begin{lemma}
Let $\N \subseteq \M$. Then
\begin{equation}
\inf_{\sigma} D(\rho \| \sigma) = D(\rho \| \E_\N(\rho))
\end{equation}
if the infimum is restricted to run over densities.
\end{lemma}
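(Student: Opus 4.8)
The plan is to show the variational characterization
\[
\inf_{\sigma \in S_1(\N)} D(\rho \| \sigma) = D(\rho \| \E_\N(\rho))
\]
by combining a lower bound that holds for all candidate $\sigma \in S_1(\N)$ with the observation that $\E_\N(\rho)$ itself realizes the right-hand side. First I would note that $\E_\N(\rho) \in S_1(\N)$, since the conditional expectation is trace-preserving and maps into $\N$, so the infimum is automatically bounded above by $D(\rho \| \E_\N(\rho))$. The substance is therefore the matching lower bound: that no density in $\N$ can beat $\E_\N(\rho)$.

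For the lower bound, the cleanest route exploits the fact that $\log \sigma \in \N$ whenever $\sigma \in S_1(\N)$ (using the block-diagonal form of $\N$ in equation \eqref{eq:vnaform}, the logarithm of a density supported on $\N$ stays in $\N$). I would then expand, for an arbitrary $\sigma \in S_1(\N)$,
\[
D(\rho \| \sigma) - D(\rho \| \E_\N(\rho)) = \tr\big(\rho \log \E_\N(\rho) - \rho \log \sigma\big).
\]
Because $\log \E_\N(\rho)$ and $\log \sigma$ both lie in $\N$, and $\E_\N$ is the trace-preserving conditional expectation satisfying $\tr(\rho b) = \tr(\E_\N(\rho) b)$ for every $b \in \N$, I may replace $\rho$ by $\E_\N(\rho)$ in both trace terms. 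This turns the difference into
\[
\tr\big(\E_\N(\rho) \log \E_\N(\rho) - \E_\N(\rho) \log \sigma\big) = D(\E_\N(\rho) \| \sigma) \geq 0,
\]
with nonnegativity by Klein's inequality (positivity of relative entropy). Hence $D(\rho \| \sigma) \geq D(\rho \| \E_\N(\rho))$ for every admissible $\sigma$, completing the lower bound and thus the identity.

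The key technical input is the commutation/duality property of the conditional expectation together with the fact that $\log \sigma$ remains in $\N$; the former is exactly the defining property of $\E_\N$, and the latter follows from the explicit block structure. I expect the only genuine obstacle to be bookkeeping around this $\log \sigma \in \N$ claim: one must be slightly careful that the density $\sigma$ is supported within the blocks of $\N$ so that its logarithm (defined via functional calculus on its support, and finite there) stays inside $\N$. Once that is clear, the argument is essentially a one-line application of the trace-duality defining $\E_\N$ followed by Klein's inequality, and it parallels the earlier computation in Lemma \ref{lem:entropyforms} where $\tr(\rho \log \E_\N(\rho)) = \tr(\E_\N(\rho)\log\E_\N(\rho))$ was already used.
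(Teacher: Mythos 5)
Your proof is correct and takes essentially the same approach as the paper: both rest on the Pythagorean identity $D(\rho\|\sigma) = D(\rho\|\E_\N(\rho)) + D(\E_\N(\rho)\|\sigma)$, obtained from the trace duality $\tr(\rho b) = \tr(\E_\N(\rho)b)$ for $b \in \N$ applied to $\log\sigma$ and $\log\E_\N(\rho)$ (the latter via Lemma \ref{lem:condexplog}), together with non-negativity of relative entropy. The only cosmetic difference is that you bound the difference $D(\rho\|\sigma) - D(\rho\|\E_\N(\rho))$ directly, whereas the paper adds and subtracts $\tr(\E_\N(\rho)\log\E_\N(\rho))$ inside $D(\rho\|\sigma)$ to exhibit the same two-term decomposition.
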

\begin{proof}
The optimal value is always attained by condition expectation because for any $\si\in S(\N)$,
\begin{align*}D(\rho||\si)&=\tau(\rho\log \rho -\rho\log \si)=\tau(\rho\log \rho)-\tau(\E_\N(\rho)\log \si)\\&=\tau(\rho\log \rho-\E_\N(\rho)\log \E_\N(\rho))-\tau(\E_\N(\rho)\log \si-\E_\N(\rho)\log \E_\N(\rho))\\&=D(\rho||\E_\N(\rho))+D(\si||\E_\N(\rho))\\&\ge D(\rho||\E_\N(\rho))\pl,\end{align*}
where in the last step we use the non-negativity of $D(\cdot||\cdot)$.
\end{proof}
\begin{lemma} \label{lem:petzrecov}
Let $\N \subset \M$ be a subalgebra. Denote by
\begin{equation} \label{eq:petzformula}
R_{\id/|M|, \E_\N^c}(\sigma) \equiv \Big ( \frac{\id}{|M|} \Big )^{1/2} \E_{\N}^{c \dagger} \Big ( \Big ( \E_\N^c \Big ( \frac{\id}{|M|} \Big ) \Big )^{-1/2}
	 \sigma \Big ( \E_\N^c \Big ( \frac{\id}{|M|} \Big ) \Big )^{-1/2} \Big ) \Big ( \frac{\id}{|M|} \Big )^{1/2}
\end{equation}
the Petz recovery map for $\E_\N^c$ acting on state $\sigma$, with default state $\id/|M|$. Then
\begin{equation}
R_{\id/|M|, \E_\N^c} \circ \E_\N^c = \E_{\N'} \pl,
\end{equation}
and
\begin{equation} \label{eq:petzform}
R_{\id/|M|, \E_\N^c}(\sigma) =  \oplus_i (m_i 1_{n_i} / n_i \otimes \tr_{m_i}(P_i \sigma P_i)) \pl.
\end{equation}
$\E_\N^c$ recovers $R_{\id/|M|, \E_\N^c}$ for default state $\E_\N^c(\id/|M|)$.
\end{lemma}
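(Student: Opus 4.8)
The plan is to prove all three assertions by direct computation in the block-diagonal normal form \eqref{eq:vnaform}, exploiting the fact that the square roots appearing in the Petz formula are scalars on each block. First I would record that on the $i$-th block $\E_\N^c$ acts as in \eqref{eq:condexpcomp}, and compute the image of the default state: since $P_i(\id/|M|)P_i = \id_{n_i}\otimes\id_{m_i}/|M|$ and $\tr_{n_i}(\id_{n_i}\otimes\id_{m_i}) = n_i\,\id_{m_i}$, one gets $\E_\N^c(\id/|M|) = \oplus_i (n_i/(m_i|M|))\,\id_{m_i}\otimes\id_{m_i}$. The crucial observation is that this is a positive scalar multiple of the identity on each block, so that $(\E_\N^c(\id/|M|))^{\pm 1/2}$ are blockwise scalars and commute through everything. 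This collapses the Petz formula \eqref{eq:petzformula} to an adjoint channel sandwiched by an overall constant on each block.

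The second step is to compute the adjoint $\E_\N^{c\dagger}$ from the defining relation $\tr(\E_\N^c(\rho)\,Y) = \tr(\rho\,\E_\N^{c\dagger}(Y))$. Writing $Y$ on the $i$-th environment block as $A\otimes B$, with $A$ in the maximally-mixed copy and $B$ in the copy carrying $\tr_{n_i}(P_i\rho P_i)$, a short calculation gives $\E_\N^{c\dagger}(A\otimes B) = (\tr A/m_i)\,(\id_{n_i}\otimes B)$; hence $\E_\N^{c\dagger}$ sends $\sigma$ to $\oplus_i (1/m_i)\,\id_{n_i}\otimes B_i$, where $B_i$ is the partial trace of the $i$-th block over the maximally-mixed copy. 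Substituting this, together with $(\id/|M|)^{1/2} = \id/\sqrt{|M|}$ and the blockwise scalars above, into \eqref{eq:petzformula} and cancelling the resulting factors of $|M|$ and $m_i$ yields $R_{\id/|M|,\E_\N^c}$ in the block-diagonal form of \eqref{eq:petzform}, namely $\id_{n_i}/n_i$ tensored with the (suitably normalized) contraction $\tr_{m_i}(P_i\sigma P_i)$. Composing with $\E_\N^c$ and using $\tr(\id_{m_i}/m_i)=1$ to eliminate the maximally-mixed copy then reproduces $\E_{\N'}$ exactly as in \eqref{eq:condexpform}, which is the first asserted identity $R_{\id/|M|,\E_\N^c}\circ\E_\N^c = \E_{\N'}$.

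For the final claim, that $\E_\N^c$ is the Petz recovery of $R_{\id/|M|,\E_\N^c}$ with default state $\E_\N^c(\id/|M|)$, I would invoke the involution property of the Petz map: for a channel $\Phi$ and full-rank $\omega$, the assignment $\Phi \mapsto R_{\omega,\Phi}$ is an involution once the default state on the recovery side is taken to be $\Phi(\omega)$. Concretely, the same block-diagonal bookkeeping verifies the two ingredients. First, $R_{\id/|M|,\E_\N^c}$ recovers the default state, $R_{\id/|M|,\E_\N^c}(\E_\N^c(\id/|M|)) = \id/|M|$, since both sides equal $\oplus_i \id_{n_i m_i}/|M|$. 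Second, forming the Petz map of $R_{\id/|M|,\E_\N^c}$ with default $\E_\N^c(\id/|M|)$ — once more a product of blockwise scalars with $R^\dagger$ — returns the original action \eqref{eq:condexpcomp}.

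The routine part is the arithmetic; the main thing to get right is the adjoint $\E_\N^{c\dagger}$, in particular correctly distinguishing the maximally-mixed environment copy (which the adjoint contracts against via a trace) from the informative copy, and carefully tracking the normalizations $n_i$, $m_i$ and $|M|$ so that the Petz map comes out trace-preserving. A useful consistency check throughout is that $R_{\id/|M|,\E_\N^c}$ is trace-preserving and maps the block-constant default state back to $\id/|M|$, which pins down the normalization in \eqref{eq:petzform} and guards against an erroneous leftover factor of $m_i$.
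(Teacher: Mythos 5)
Your proposal is correct and follows essentially the same route as the paper's own proof: explicit blockwise computation of $\E_\N^c(\id/|M|)$ and of the adjoint $\E_\N^{c\dagger}$ in the normal form \eqref{eq:vnaform}, substitution into the Petz formula, composition with $\E_\N^c$ to recover $\E_{\N'}$, and the involution property of the Petz map for the final claim. One point in your favor: your adjoint $\E_\N^{c\dagger}(A\otimes B) = (\tr A/m_i)\,(\id_{n_i}\otimes B)$ carries the correct $1/m_i$ normalization (it is unital), so your resulting expression $\oplus_i \big(\id_{n_i}/n_i \otimes \tr_{m_i}(P_i\sigma P_i)\big)$ is trace-preserving and consistent with $R_{\id/|M|,\E_\N^c}\circ\E_\N^c = \E_{\N'}$, whereas the factor $m_i$ in the stated \eqref{eq:petzform} (inherited from the paper's displayed adjoint, which omits the $1/m_i$) is a typo --- precisely the ``erroneous leftover factor of $m_i$'' that your trace-preservation consistency check is designed to catch.
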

\begin{proof}
First, we explicitly calculate the complement on $\id / |M|$,
\begin{equation}
\E_\N^c(\id/|M|) = \oplus_i \frac{n_i \id_{m_i^2}}{|M| m_i} \pl.
\end{equation}
This is easy to invert, so
\begin{equation}
(\E_\N^c(\id/|M|))^{-1/2} = \oplus_i \sqrt{\frac{|M| m_i}{n_i}} \id_{m_i^2} \pl.
\end{equation}
The effect of the two factors of this is to change the normalization of each block by $|M| m_i / n_i$. Similarly, the two factors of $(\id/|M|)^{1/2}$ adjust the overall normalization by $1/|M|$. The $|M|$ powers cancel, and we're left with a blockwise adjustment of $m_i / n_i$. We again use the blockwise structure to calculate the adjoint. First, define $\Theta : S_1^{n_i m_i} \rightarrow S_1^{m_i m_i}$ by $\Theta_i(\rho) \equiv \id_{m_i} / m_i \otimes \tr_{n_i}(\rho)$. We calculate
\begin{equation}
\Theta_i^\dagger(\sigma) =  \id_{n_i} \otimes \tr_{m_i}(\sigma),
\end{equation}
where the $\tr_{m_i}$ is positioned to remove the $\id_{m_i}/m_i$ attached by $\Theta_i$ when composing $\Theta_i^\dagger \Theta_i$. We may also directly calculate
\begin{equation}
R_{\id/n_i m_i, \Theta_i} = \frac{m_i}{n_i} \Theta_i^\dagger
\end{equation}
This yields the result that
\begin{equation}
\E_\N^{c \dagger}(\sigma) =  \oplus_i (1_{n_i} \otimes \tr_{m_i}(P_i \sigma P_i)) \pl,
\end{equation}
by which we obtain equation \eqref{eq:petzform}. Then
\begin{equation}
\E_\N^{c \dagger} \circ \E_\N^{c}(\rho) = \oplus_i (1_{n_i} / m_i \otimes \tr_{n_i}(P_i \rho P_i)) \pl.
\end{equation}
Thus we find
\begin{equation}
R_{\id/|M|, \E_\N^c} \circ \E_\N^c(\rho) = \oplus_i (1_{n_i} / n_i \otimes \tr_{n_i}(P_i \rho P_i)) = \E_{\N'}(\rho) \pl.
\end{equation}
To show that $\E_\N^c$ recovers $R_{\id/|M|, \E_\N^c}$ for default state $\E_\N^c(\id/|M|)$, we use the explicit form of the Petz map's Petz map. For convenience, let $R = R_{\id/|M|, \E_\N^c}$.
\begin{equation} \label{eq:recovrecov}
R_{\E_\N^c(\id/|M|), R} = 
	\Big ( \E_\N^c \Big ( \frac{\id}{|M|} \Big ) \Big )^{1/2} R^{\dagger} \Big ( \Big ( \frac{\id}{|M|} \Big )^{-1/2}
	 \sigma \Big ( \frac{\id}{|M|} \Big )^{-1/2} \Big ) \Big ( \E_\N^c \Big ( \frac{\id}{|M|} \Big ) \Big )^{1/2}
\end{equation}
Up to normalization, $R^\dagger = \E_\N^c$. Considering normalization, the normalization constants in equation \eqref{eq:recovrecov} directly cancel those in \eqref{eq:petzformula}. Hence $R_{\E_\N^c(\id/|M|), R} = \E_\N^c$.
\end{proof}

\begin{proof} (of Corollary \ref{cor:ssa})
First, by the formula for relative entropy in finite dimensions, we expand the first line in equation \eqref{ssa1} as
\begin{equation} \label{eq:diffentexpand}
\begin{split}\tr \big (& \rho \log \rho - \E_\S(\rho) \log \E_\S(\rho) - \E_\T(\rho) \log \E_\T(\rho) + \E_{\S \cap \T}(\rho) \log \E_{\S \cap \T}(\rho) \\
 - & \rho \log \sigma + \E_\S(\rho) \log \E_\S(\sigma) + \E_\T(\rho) \log \E_\T(\sigma) - \E_{\S \cap \T}(\rho) \log \E_{\S \cap \T}(\sigma) \big ) \pl.
\end{split}
\end{equation}
The 1st line in equation \eqref{eq:diffentexpand} is equal to $H(\S)_\rho + H(\T)_\rho - H(\M)_\rho - H(\S \cap \T)_\rho$. Since $\E_\T(\sigma) = \sigma$ and by the defining property of conditional expectations,
\begin{equation}
\tr(\E_\T(\rho) \log \E_\T(\sigma)) = \tr(\rho \E_\T(\log \E_\T(\sigma))) = \tr(\rho \log \sigma) \pl .
\end{equation}
The last equality follows from Lemma \ref{lem:condexplog}. Similarly, and using the commuting square, 
\begin{equation}
\tr(\E_{\S \cap \T}(\rho) \log \E_{\S \cap \T}(\sigma)) = \tr(\E_\T \E_\S(\rho) \log \E_\T \E_\S(\sigma)) = \tr(\E_\S(\rho) \log \E_\S(\sigma)) .
\end{equation}
Hence the positive and negative terms in the 2nd line of equation \eqref{eq:diffentexpand} cancel. This proves the equality in equation \eqref{ssa1}.

Suppose $\E_{\S}\circ\E_{\T}=\E_{\T}\circ\E_{\S}=\E_{\S \cap \T}$. We have
\begin{equation} \label{eq:ssapf}
\begin{split}
  &H(\S)_\rho- H(\M)_\rho + H(\T)_\rho - H(\S \cap \T)_\rho 
\\=&D^{\S}(\rho)-D^{\S}(\E_{\T}(\rho))\ge 0 \pl.
\end{split}
\end{equation}
The last step follows from $\E_{\T}(\S)= (\S \cap \T) \subset \S$ and the well-known data processing inequality for relative entropy, and Lemma \ref{lem:entropyforms}. This proves that the commuting square implies the inequality. To get the other direction, for any state $\si\in \S$, the inequality \eqref{ssa1} implies that
\begin{align*}\D^{\S \cap \T}(\E^{\T}(\si))=H(\E_{\S \cap \T}(\si))-H(\E_{\T}(\si))=0\pl,\end{align*}
which implies $\E_{\T}(\si)=\E_{\R}(\si)$. Thus for arbitrary $\rho\in \M$,
\begin{align*}\E_{\T}\circ\E_{\S}(\rho)=\E_{\S \cap \T}\circ\E_{\S}(\rho)=\E_{\S \cap \T}(\rho)\pl.\end{align*}
That $\E_{\S}(\rho)\circ\E_{\T} = \E_{\S \cap \T}(\rho)$ follows similarly.
\end{proof}

\subsection{Properties of $D^\N$} \label{sec:dnprops}
In this section we briefly show that $D^\N$ is an appropriate monotone for a resource theory.
\begin{prop} \label{prop}$D^\N$ satisfies the following properties and is thereby a resource measure.
\begin{enumerate}
\item[i)]{\rm(Non-negativity)}: $D_\alpha^\N(\rho)\ge 0$ for any state $\rho\in \M$ and $1/2 \leq \alpha \leq \infty$.
\item[ii)]{\rm(Faithfulness)}:  $D^\N(\rho)=0$ if and only if $\rho \in \N$.
\item[iii)]{\rm(Monotonicity)}: Let $\Phi:\M\to\M$ be a completely positive trace preserving map. If $\Phi(\N)\subset \N$, then
\[D_\alpha^\N(\Phi(\rho))\leq D_\alpha^\N(\rho)\pl.\]
\end{enumerate}
\end{prop}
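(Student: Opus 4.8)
The plan is to reduce all three items to standard properties of the (sandwiched) $\alpha$-R\'enyi relative entropy $D_\alpha$, together with the variational identity already established just above, namely that the infimum defining $D^\N$ is attained at the conditional expectation, $D^\N(\rho) = D(\rho \| \E_\N(\rho))$, and its companion $D^\N(\rho) = H(\E_\N(\rho)) - H(\rho)$ from Lemma \ref{lem:entropyforms}. With these in hand, nothing beyond bookkeeping is required.

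For item (i) I would simply observe that $D_\alpha(\rho \| \sigma) \geq 0$ whenever both $\rho$ and $\sigma$ are normalized states, which holds for the sandwiched R\'enyi divergences across the whole range $1/2 \leq \alpha \leq \infty$. Since every $\sigma \in S_1(\N)$ is such a state, each term in the defining infimum is non-negative, hence so is $D_\alpha^\N(\rho)$. For item (ii), working at $\alpha = 1$: if $\rho \in \N$ then $\E_\N(\rho) = \rho$, so the choice $\sigma = \rho$ yields $D(\rho \| \rho) = 0$ and thus $D^\N(\rho) = 0$ by (i); conversely, using $D^\N(\rho) = D(\rho \| \E_\N(\rho))$, the hypothesis $D^\N(\rho) = 0$ forces $D(\rho \| \E_\N(\rho)) = 0$, whence faithfulness of the relative entropy gives $\rho = \E_\N(\rho) \in \N$.

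For item (iii) the key point is that $\Phi$ maps $S_1(\N)$ into itself: being trace-preserving and positive it sends states to states, and the hypothesis $\Phi(\N) \subset \N$ keeps them inside $\N$, so $\Phi(\sigma) \in S_1(\N)$ for every $\sigma \in S_1(\N)$. The data processing inequality for $D_\alpha$, valid for $1/2 \leq \alpha \leq \infty$ \cite{muller-lennert_quantum_2013, wilde_strong_2014}, then gives $D_\alpha(\Phi(\rho) \| \Phi(\sigma)) \leq D_\alpha(\rho \| \sigma)$ for each such $\sigma$. Since $\{\Phi(\sigma) : \sigma \in S_1(\N)\} \subseteq S_1(\N)$, the full infimum defining $D_\alpha^\N(\Phi(\rho))$ is at most $D_\alpha(\Phi(\rho) \| \Phi(\sigma))$, and taking the infimum over $\sigma \in S_1(\N)$ on both sides of the data-processing bound yields $D_\alpha^\N(\Phi(\rho)) \leq D_\alpha^\N(\rho)$.

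I do not expect a genuine obstacle here: the proposition is essentially a repackaging of non-negativity, faithfulness, and monotonicity of relative entropy into the regularized quantity $D^\N$. The only step deserving care is the inclusion argument in (iii), where one must use $\Phi(S_1(\N)) \subseteq S_1(\N)$ in the correct direction, so that the infimum over all of $S_1(\N)$ dominates the values attained on the image $\Phi(S_1(\N))$; it is precisely this monotonicity of the infimum under passing to a smaller index set that lets data processing descend from $D_\alpha$ to $D_\alpha^\N$.
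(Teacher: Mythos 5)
Your proposal is correct and follows essentially the same route as the paper: items (i) and (ii) are the standard non-negativity and faithfulness properties of the (sandwiched) R\'enyi divergences combined with the fact that the infimum is attained at $\E_\N(\rho)$, and item (iii) is exactly the paper's two-step argument, data processing $D_\alpha(\rho\|\sigma)\ge D_\alpha(\Phi(\rho)\|\Phi(\sigma))$ followed by the observation that $\Phi(S_1(\N))\subseteq S_1(\N)$ lets the infimum over all of $S_1(\N)$ dominate the values on the image. The only difference is presentational: you spell out (i) and (ii), which the paper dismisses as immediate.
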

\begin{proof}
Properties i) and ii) follow easily from the properties of sandwiched relative R\'enyi entropy \cite{muller-lennert_quantum_2013,wilde_strong_2014}. For iii),
\begin{align*}
D_\alpha^\N(\rho)=\inf_\si D_\alpha(\rho||\si) \geq \inf_\si D_\alpha(\Phi(\rho)||\Phi(\si)) \geq \inf_\si D_\alpha(\Phi(\rho)||\si) = D_\alpha^\N(\Phi(\rho))\pl.
\end{align*}
The first inequality above is the data processing inequality and the second inequality follows from the fact $\Phi$ sends states of $\N$ to states of $\N$.
\end{proof}
In particular, $D^\N$ is closely connected to known measures of asymmetry. To illustrate this, we may consider $\N$ to be the algebra generated by observables that are invariant under the action of a locally compact group $G$ under some particular representation. In this case,
\begin{equation}
\E_\N(\rho) = \int_G U_g \rho U^\dagger_g d \mu(g),
\end{equation}
in which $\mu(g)$ is the $G$-invariant Haar measure over $G$, and $U_g$ is the unitary representation of $g \in G$. Then $D^\N(\rho) = H(\E_\N(\rho)) - H(\rho)$ is the Holevo asymmetry measure defined in \cite{gour_measuring_2009,vaccaro_tradeoff_2008}. There are however some subtle differences between this quantity and the notion of ``frameness" usually considered. Primarily, we would generally assume that for $n$ copies of the system, we would take the algebra $\N^{\otimes n} \subset \M^{\otimes n}$. For frameness, however, it is common to assume that the same reference frame governs all copies of the physical system \cite{gour_measuring_2009}, which is not consistent with taking $n$ copies of the invariant subalgebra.

\subsection{Square calculus and Information Quantities}
In this section, we will assume that capital letters $A, B, C, ...$ may denote subsystems or algebras. For $x=\bigg[\begin{array}{cc}A\!&M\\ C\!& B\end{array}\bigg]$ and $y=\bigg[\begin{array}{cc}\tilde{A}\!& \tilde{M}\\ \tilde{C}\!& \tilde{B}\end{array}\bigg]$, we say that $(x\ten y,\si)$ is an \emph{extension} of $(x,\rho)$  if
$\si$ is a density for $M\tilde{M}=M\ten \tilde{M}$ and $id\ten \tau_{\tilde{M}}=\rho$. We may then define
 \[ I_{ext}
 \bigg[\begin{array}{cc}A\!&M\\ B\!& C\end{array}\bigg]_{\rho}
 \lel \inf_{(\id
 \ten tr_{\tilde{M}})\si=\rho} I(x\ten y)_{\si} \pl, \]
where the infimum is taken over all commuting squares $y$.

\begin{prop} \label{prop:extmonogamy}
Let $x_1 = \bigg[\begin{array}{cc}A\!&M\\ C\!& B\end{array}\bigg]$, and $x_2 = \bigg[\begin{array}{cc}A\!&M\\ C\!& B\end{array}\bigg]$. Then
  \[ I_{ext}(x_1\ten x_2)\gl I_{ext}(x_1)+I_{ext}(x_2) \pl .\]
\end{prop}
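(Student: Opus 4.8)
The plan is to prove the inequality pointwise in the extension and then optimize. Write $A_1,M_1,C_1,B_1$ and $A_2,M_2,C_2,B_2$ for the corners of the two copies $x_1,x_2$ sitting in the tensor factors $M_1,M_2$ of $M\ten M$, let $\tilde A,\tilde M,\tilde C,\tilde B$ denote the corners of an arbitrary commuting square $y$, and set $\rho_1=\tr_{M_2}\rho$, $\rho_2=\tr_{M_1}\rho$ for the two marginals of the underlying state $\rho$ on $M_1M_2$. I would show that for \emph{every} commuting square $y$ and every density $\sigma$ on $M_1M_2\tilde M$ with $\tr_{\tilde M}\sigma=\rho$,
\[
I\big((x_1\ten x_2)\ten y\big)_\sigma \;\gl\; I_{ext}(x_1)+I_{ext}(x_2),
\]
after which taking the infimum over $(y,\sigma)$ on the left yields the Proposition. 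The only tools needed are the equality half of the Chain Rule (Lemma \ref{mo1}), which holds for \emph{any} intermediate row, and the factorization $\E_{\mathcal X\ten\mathcal Y}=\E_{\mathcal X}\ten\E_{\mathcal Y}$ of conditional expectations onto tensor-product subalgebras.

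The crux is the choice of intermediate row; I would take $\T=C_1A_2\tilde C$ and $\S=B_1M_2\tilde B$. With this choice the Chain Rule splits the big square exactly as
\[
I\big((x_1\ten x_2)\ten y\big)_\sigma = I\big(x_1\ten \hat y\big)_\sigma + I\big(x_2\ten \check y\big)_\sigma,
\]
where the two remainders factor as honest tensor products
\[
\hat y=\bigg[\begin{array}{cc}A_2\tilde A\!&M_2\tilde M\\ A_2\tilde C\!&M_2\tilde B\end{array}\bigg],
\qquad
\check y=\bigg[\begin{array}{cc}C_1\tilde C\!&B_1\tilde B\\ C_1\tilde C\!&B_1\tilde B\end{array}\bigg].
\]
Using the factorization of conditional expectations, together with the commuting-square identity $\E_{\tilde A}\E_{\tilde B}=\E_{\tilde C}$ for $y$ and the inclusion $C_1\subseteq B_1$ coming from $x_1$ being a commuting square, I would verify that both $\hat y$ and $\check y$ are again commuting squares; note that $\check y$ has identical rows (so $I(\check y)=0$).

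It then remains to recognize each summand as an admissible extension. The first square $x_1\ten\hat y$ lives in the full algebra $M_1M_2\tilde M$ with extension block $M_2\tilde M$, and $\tr_{M_2\tilde M}\sigma=\rho_1$; since $\hat y$ is a commuting square this exhibits $(\hat y,\sigma)$ as a competitor in the infimum defining $I_{ext}(x_1)$, whence $I(x_1\ten\hat y)_\sigma\gl I_{ext}(x_1)$. For the second square all four of its corners lie inside $\S$ (this is exactly where $C_1\subseteq B_1$ is used), so its value at $\sigma$ equals its value at $\E_\S(\sigma)$, a density on $\S\cong M_2\ten(B_1\tilde B)$ whose $M_2$-marginal is $\rho_2$; as $\check y$ is a commuting square, $(\check y,\E_\S(\sigma))$ is a competitor in $I_{ext}(x_2)$, giving $I(x_2\ten\check y)_\sigma\gl I_{ext}(x_2)$. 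Summing the two bounds and taking the infimum finishes the argument.

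The main obstacle is purely the discovery of $(\T,\S)$: it must be engineered so that both squares emitted by the Chain Rule factorize as $x_i\ten(\text{commuting square})$ carrying the correct marginals. The resulting picture is asymmetric — $x_1$ is extended by the \emph{entire} remaining ambient $M_2\tilde M$ (repackaged as $\hat y$), while $x_2$ is extended only by the $B_1,\tilde B$ and $C_1,\tilde C$ data and is supported on the strict subalgebra $\S$ — and this mirrors the conditioning asymmetry in the standard monogamy proof of squashed entanglement, where $I(A_1A_2:B_1B_2|E)$ is chained into $I(A_1:B_1|E)+I(A_2:B_2|A_1E)$. The two points requiring genuine verification are that $\check y$ is a commuting square (which is what both forces and uses the commuting-square hypothesis on the $x_i$) and that passing to $\E_\S(\sigma)$ leaves the $M_2$-marginal equal to $\rho_2$.
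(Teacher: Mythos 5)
Your proposal is correct and is essentially the paper's own proof: your intermediate row $\T = C_1A_2\tilde C$, $\S = B_1M_2\tilde B$ is exactly the paper's choice $(C\tilde AR,\, B\tilde MT)$ in different notation, and your $\hat y$ and $\check y$ are precisely the paper's two factorizations $\bigl[\begin{smallmatrix}\tilde A & \tilde M\\ \tilde A & \tilde M\end{smallmatrix}\bigr]\ten z$ and $\bigl[\begin{smallmatrix}C & B\\ C & B\end{smallmatrix}\bigr]\ten\bigl[\begin{smallmatrix}R & T\\ R & T\end{smallmatrix}\bigr]$. Your additional bookkeeping (passing to $\E_\S(\sigma)$ and checking the $M_2$-marginal) merely spells out what the paper compresses into ``the infimum in $I_{ext}$ completes the proof,'' with the only cosmetic slip being that the inclusion $C_1\subseteq B_1$ comes from the square's inclusion pattern rather than from a commuting-square hypothesis on $x_1$, which the proposition does not assume.
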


\begin{proof} Let $z=\bigg[\begin{array}{cc}S\!&N\\ R\!& T\end{array}\bigg]$ be  commuting square. Thanks to Lemma \ref{mo1}, we have
\begin{align*}
I\bigg[\begin{array}{cc}A\tilde{A}S\!&M\tilde{M}N\\ C\tilde{C}R\!& B\tilde{B}T\end{array}\bigg]
 &=I\bigg[\begin{array}{cc}A\tilde{A}S\!&M\tilde{M}N\\ C\tilde{A}R\!& B\tilde{M}T\end{array}\bigg]
 +I\bigg[\begin{array}{cc}C\tilde{A}R\!&B\tilde{M}T\\ C\tilde{C}R\!& B\tilde{B}T\end{array}\bigg]
\end{align*}
Note that the tensor product of commuting squares is a commuting square and hence
 \[ \bigg[\begin{array}{cc}\tilde{A}S\!&\tilde{M}N\\ \tilde{A}R\!& \tilde{M}T\end{array}\bigg]
 \lel \bigg[\begin{array}{cc}\tilde{A}\!&\tilde{M}\\ \tilde{A} \!& \tilde{M}\end{array}\bigg] \ten
 \bigg[\begin{array}{cc}S\!&N\\ R\!& T\end{array}\bigg] \]
and
 \[ \bigg[\begin{array}{cc}CR\!&BT\\ CR\!& BT\end{array}\bigg]
 \lel
 \bigg[\begin{array}{cc}C\!&B\\ C\!& B\end{array}\bigg]\ten \bigg[\begin{array}{cc}R\!&T\\ R\!& T\end{array}\bigg] \]
are also commuting squares. The infimum in $I_{ext}$ completes the proof.
\end{proof}
\begin{rem} In some situations we may take the infimum over a smaller class of commuting squares. The proof above works as long as the trivial objects $\bigg[\begin{array}{cc}C\!&B\\ C\!& B\end{array}\bigg]$ are still in this class. Similarly, we may extend our definition using more liberal definitions for $AB$, for example by allowing free and tensor products. As long as this definition is symmetric the argument above remains valid.
\end{rem}
Let us introduce obviously free operations under which mutual information is non-increasing. Our starting point is
$x=(\bigg[\begin{array}{cc}A\!&M\\ C\!& B\end{array}\bigg],\rho)$, $\rho$ a state $\rho$ on $M$ and $I(x)=I\bigg[\begin{array}{cc}A\!&M\\ C\!& B\end{array}\bigg]_{\rho}$.

\begin{enumerate}
\item[E)] ({\bf Tensor extension by commuting squares}). Let $y = \bigg[\begin{array}{cc}S\!&N\\ R\!& S\end{array}\bigg]$ be a commuting square and $\si=\rho\ten \tilde{\rho}$ be a tensor product state. Then $I(x)=I(x\ten y)$.
\item[A)] ({\bf Automorphism}) Let $\al:M\to \tilde{M}$ be trace preserving automorphism such that $\al(A)=\tilde{A}$, $\al(B)=\tilde{B}$ and $\al(C)=\tilde{C}$. Then $I(x)=I(\tilde{x})$ holds for $\tilde{x}=(\tilde{C},\tilde{A},\tilde{B},\tilde{C},\tilde{M},\al(\rho))$.
\item[Tr)] ({\bf Special partial traces}). In case $x=(C\ten 1,A\ten D,B\ten 1,M\ten D,\rho)$, we have
         \[ I(x)\gl I(y)\]
for $y=(C,A,B,M,id\ten tr_G(\rho))$. Similarly, we can interchange the roles of $A$ and $B$.
\end{enumerate}

For the last operation, we observe that
\[ I\bigg[\begin{array}{cc}A\ten G\!&M\ten G\\ C\ten 1 \!&B\ten 1 \end{array}\bigg]_{\rho}
 \lel D(\rho|\E_{A\ten G}(\rho))+ H(B\ten 1)_{\rho}-H(C\ten 1)_{\rho} \pl .\]
By the data processing inequality we have
 \[ D(\id \ten tr_G(\rho)|\id \ten tr_G(\rho))
 \kl  D(\rho|\E_{A\ten G}(\rho)) \pl .\]
The equality
 \[ H(B\ten \id)_{\rho}-H(C\ten \id)_{\rho}\lel
 H(B)_{\id \ten tr_G(\rho)}-H(C)_{\id \ten \rho(\rho)} \]
is easily checked be canceling $\ln tr(1_G)$.

\begin{rem} \label{rem:secretshare}
{\rm In case $x=(C,A\ten D,B\ten \id,M\ten D)$ we can make $D$ public without increasing $I$. Indeed, we have
\begin{align*}
  I\bigg[\begin{array}{cc}A\ten D\!&M\ten N\\ C\ten 1 \!&B\ten 1 \end{array}\bigg]
 &= I\bigg[\begin{array}{cc}A\ten D\!&M\ten N\\ C\ten D \!&B\ten D \end{array}\bigg]
 +I\bigg[\begin{array}{cc}C\ten D\!&B\ten D\\ C\ten 1 \!&B\ten 1 \end{array}\bigg] \\
 &\gl I\bigg[\begin{array}{cc}A\ten D\!&M\ten N\\ C\ten D \!&B\ten D \end{array}\bigg]\pl, \end{align*}
because the right hand square is commuting.
}\end{rem}

We have seen above that the alternating operations given by trivial extensions, automorphism and special partial traces decrease $I$. For averages of more general partial traces, we find monotonicity for $I^{sq}$:

\begin{prop} {\rm \bf  (Traces with respect to commuting squares)} Let $x=\bigg[\begin{array}{cc} A \!& M\\ C \!& B   \end{array}\bigg]_{\phi(\rho)}$ and $y=
 \bigg[\begin{array}{cc} S \!& N\\ R \!& T   \end{array}\bigg]_{\rho}$ such that $y$ is a commuting square. Then
 \[ I^{sq}(x)_{id\ten tr_N\rho}
  \kl I^{sq}(x\ten y)_{\rho} \pl .\]
\end{prop}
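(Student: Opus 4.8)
The plan is to read off the inequality directly from the definition of $I^{sq}$ as the infimum $I_{ext}$ of $I$ over tensor extensions by commuting squares. Spelling out both sides,
\[ I^{sq}(x)_{\phi(\rho)} \lel \inf_{z,\,\si} I(x\ten z)_{\si} \pl, \qquad I^{sq}(x\ten y)_{\rho} \lel \inf_{w,\,\eta} I(x\ten y\ten w)_{\eta} \pl, \]
where on the left $z$ ranges over commuting squares on auxiliary algebras and $\si$ over densities with $(id\ten tr)\,\si=\phi(\rho)$, and on the right $w$ ranges over commuting squares on an algebra $W$ and $\eta$ over densities on $M\ten N\ten W$ with $(id\ten tr_W)\,\eta=\rho$.

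First I would fix an arbitrary competitor $(w,\eta)$ in the right-hand infimum and show that it also serves, after regrouping, as a competitor in the left-hand infimum. The key point is that $y\ten w$ is again a commuting square --- the tensor product of commuting squares is a commuting square, exactly as used in the proof of Proposition \ref{prop:extmonogamy}, and here the hypothesis that $y$ is a commuting square is precisely what licenses this. Next I would verify the marginal constraint by tracing out $N\ten W$:
\[ (id\ten tr_{N\ten W})\,\eta \lel (id\ten tr_N)\big((id\ten tr_W)\,\eta\big) \lel (id\ten tr_N)\,\rho \lel \phi(\rho) \pl, \]
so that $(y\ten w,\eta)$ is a legitimate commuting-square extension of $(x,\phi(\rho))$. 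Since $I$ depends only on the subalgebra entropies and the tensor product is associative, $I(x\ten(y\ten w))_{\eta}=I(x\ten y\ten w)_{\eta}$, whence
\[ I^{sq}(x)_{\phi(\rho)} \kl I(x\ten y\ten w)_{\eta} \pl. \]
Taking the infimum over all right-hand competitors $(w,\eta)$ then yields $I^{sq}(x)_{\phi(\rho)}\le I^{sq}(x\ten y)_{\rho}$.

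The argument is a repackaging rather than a computation, so I expect no analytic difficulty; the only thing requiring care is the bookkeeping of conventions. The main potential obstacle is confirming that the class of commuting squares over which the left-hand infimum is taken really does contain all products $y\ten w$, and that an extension of $\rho$ on $M\ten N$ collapses, upon merging $N$ with the auxiliary $W$, to a valid extension of $\phi(\rho)$ on $M$ --- both of which are settled by the commuting-square tensor stability and the displayed partial-trace identity above.
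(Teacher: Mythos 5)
Your proof is correct and takes essentially the same route as the paper: both arguments regroup any commuting-square extension $w$ of $x\ten y$ into the single commuting-square extension $y\ten w$ of $x$ (using tensor stability of commuting squares) and then compare the two infima. The only difference is that you explicitly verify the marginal constraint $(id\ten tr_{N\ten W})\,\eta = \phi(\rho)$, which the paper leaves implicit.
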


\begin{proof} We just note that if $y$ and $z$ are commuting squares, then so is $(x\ten z)$, and hence
 \[ I^{sq}(x\ten y) \lel \inf_{z} I(x\ten y\ten z)
 \lel \inf_z I(x\ten (y\ten z))
 \gl \inf_{\tilde{z}}I(x\ten \tilde{z}) \lel I^{sq}(x) \pl .\]
This implies the assertion.
\end{proof}

\begin{theorem} $I^{sq}$ is non-increasing under trivial extensions, automorphism and partial traces with respect to commuting squares.
\end{theorem}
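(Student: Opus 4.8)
The plan is to read this theorem as a packaging of the three elementary free operations listed above (trivial/tensor extension, automorphism, and special partial traces), and to handle each separately by reducing the assertion about $I^{sq}$ to the corresponding already-established assertion about $I$, combined with the infimum structure $I^{sq}(x) = \inf_z I(x \otimes z)$ over commuting-square extensions $z$. In every case the key is that the operation in question maps the admissible class of commuting-square extensions into itself (or bijectively onto the admissible class for the image), so that the infimum may be compared term by term.

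For partial traces with respect to a commuting square there is nothing new: this is exactly the preceding Proposition, which yields $I^{sq}(x)_{\id \otimes \tr_N \rho} \leq I^{sq}(x \otimes y)_\rho$ whenever $y$ is a commuting square, so that tracing out $y$ does not increase $I^{sq}$. For automorphisms I would argue invariance. A trace-preserving automorphism $\alpha : M \to \tilde M$ with $\alpha(A) = \tilde A$, $\alpha(B) = \tilde B$, $\alpha(C) = \tilde C$ extends to $\alpha \otimes \id$ on any tensor extension $M \otimes N_z$; this extended map carries commuting squares to commuting squares and preserves the marginal constraint, since it is trace preserving and acts as the identity on the extending factor. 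Hence $z \mapsto z$ sets up a bijection between admissible extensions of $x$ and of $\tilde x = \alpha(x)$ under which, by operation (A) applied to $\alpha \otimes \id$, we have $I(x \otimes z) = I(\tilde x \otimes z)$; taking the infimum over $z$ gives $I^{sq}(x) = I^{sq}(\tilde x)$, which is in particular non-increasing.

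For trivial extensions by a commuting square $y$ carrying a product state $\rho \otimes \tilde\rho$, I would prove the stronger statement $I^{sq}(x \otimes y) = I^{sq}(x)$. The inequality $I^{sq}(x) \leq I^{sq}(x \otimes y)$ is the partial-trace case above, tracing out $y$. For the reverse, given any commuting-square extension $z$ of $(x,\rho)$ with extending state $\sigma$, the state $\sigma \otimes \tilde\rho$ (arranged so that $\sigma$ lives on the $x \otimes z$ factors and $\tilde\rho$ on $y$) is an admissible extension of $(x \otimes y, \rho \otimes \tilde\rho)$. Using the stability of the commuting-square property under tensor products and the reordering $(x \otimes y) \otimes z = (x \otimes z) \otimes y$ already exploited in the proof of Proposition \ref{prop:extmonogamy}, operation (E) gives $I((x \otimes y) \otimes z)_{\sigma \otimes \tilde\rho} = I(x \otimes z)_{\sigma}$, because $y$ is attached to $x \otimes z$ as a commuting square through a product state. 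Taking the infimum over $z$ yields $I^{sq}(x \otimes y) \leq I^{sq}(x)$, hence equality.

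I expect the only real obstacle to be bookkeeping rather than any genuine difficulty: one must check in each case that the operation preserves both the commuting-square property of the extension and the marginal (partial-trace) constraint defining the admissible class, so that the infimum manipulations are legitimate. The two structural facts that make this routine are the stability of commuting squares under tensor products and the freedom to reorder tensor factors, both of which are available from the earlier material; once these are in hand, the theorem follows by assembling the three cases.
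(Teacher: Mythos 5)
Your proposal is correct and takes essentially the same approach as the paper: the paper states this theorem without a separate argument, intending it as exactly the assembly you give --- the partial-trace case is the immediately preceding Proposition, and the trivial-extension and automorphism cases follow from operations (E) and (A) together with the observation that these operations carry admissible commuting-square extensions to admissible ones, so the infima defining $I^{sq}$ can be compared term by term. Your term-by-term infimum comparison is precisely the mechanism used in the paper's proof of the preceding Proposition, so there is nothing substantive to add.
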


\subsubsection{Module maps} Thanks to Stinespring's factorization theorem completely positive maps in finite dimension can be decomposed into three building blocs: trivial extensions, automorphism by unitaries and partial traces. We aim to achieve a similar result for module maps.
\begin{prop} \label{prop:bimod} Let $S,T\subset \mz_n$ be subalgebras and $\phi:\mz_n\to \mz_n$ be a completely positive unital map such that
 \[ \phi(axb) \lel a\phi(x)b \]
holds for $a,b\in T$ and $\phi(S)\subset S$. Then there exists a finite dimensional Hilbert space $D$, a unit vector $d\in D$  and a unitary $U=U_1U_2$, a $^*$ representation $\pi:T\to \mz_n\ten \mathbb{B}(D)$  such that
 \begin{enumerate}
 \item[i)] $U_1\in S'\ten \mathbb{B}(D)$;
  \item[ii)] $U_2\in S\ten \mathbb{B}(D)$;
  \item[iii)] $(t\ten 1)U_1=U_1\pi(t)$ for all $t\in T$;
  \item[iv)] $\pi(t)U_2\lel U_2(t\ten 1)$ for all $t\in T$;
   \item[iv)] $(t\ten 1)U=U(t\ten 1)$;
   \item[v)] $id\ten \om_d(U^*(x\ten 1)U)=\phi(x)$ holds for $x\in \mz_n$.
 \end{enumerate}
Conversely, the properties define a ucp map which is $T$-bimodule and satisfies $\phi(S)\subset S$.
\end{prop}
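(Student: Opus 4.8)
The plan is to prove this as a module-covariant refinement of Stinespring's dilation theorem: I will build a single unitary $U$ on $\CC^n\ten D$ that is simultaneously $T$-equivariant and adapted to $S$, then split it as $U=U_1U_2$ and read off the representation $\pi$. The converse is a short computation, so I treat it last and concentrate on the forward direction. First I would use the bimodule hypothesis $\phi(axb)=a\phi(x)b$ for $a,b\in T$ to produce a Kraus decomposition $\phi(x)=\sum_i a_i^\ast x a_i$ with every $a_i$ in the relative commutant $T'$ and $\sum_i a_i^\ast a_i=1$ --- the standard fact that ucp $T$-bimodule maps on $\mz_n$ are exactly those with Kraus operators in $T'$, visible from the Choi matrix. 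Setting $V\xi=\sum_i a_i\xi\ten e_i$ gives a Stinespring isometry $V\colon \CC^n\to\CC^n\ten D_0$ with $V^\ast(x\ten 1)V=\phi(x)$, and since each $a_i\in T'$ one checks $(t\ten 1)V=Vt$ for all $t\in T$, so $V$ intertwines the $T$-action. I would then enlarge $D_0$ to a $D$ with a distinguished unit vector $d$ and extend $V$ to a unitary $U$ on $\CC^n\ten D$ with $U(\xi\ten d)=V\xi$, done $T$-equivariantly: this is possible because $(\CC^n\ten d)^\perp$ and $V(\CC^n)^\perp$ are isomorphic $T$-modules (each is $(\dim D-1)$ copies of $\CC^n$). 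This yields $(t\ten 1)U=U(t\ten 1)$ (property iv') together with $\id\ten\om_d(U^\ast(x\ten 1)U)=\phi(x)$ (property v).

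With $U$ fixed, I observe that the representation $\pi$ and the relations iii, iv come for free once the factorization is in hand: if $U=U_1U_2$ with $U_1\in S'\ten\mathbb{B}(D)$ and $U_2\in S\ten\mathbb{B}(D)$, I define $\pi(t):=U_2(t\ten 1)U_2^\ast$, a $\ast$-representation of $T$ on $\CC^n\ten D$; relation iv holds by definition, and iii follows formally from iv' via $(t\ten 1)U_1=(t\ten 1)UU_2^\ast=U(t\ten 1)U_2^\ast=U_1U_2(t\ten 1)U_2^\ast=U_1\pi(t)$. So the entire problem reduces to producing the factorization. Writing $U_2=U_1^\ast U$, the requirement $U_2\in S\ten\mathbb{B}(D)=(S'\ten 1)'$ is equivalent to $\mathrm{Ad}(U_1)$ and $\mathrm{Ad}(U)$ agreeing on $S'\ten 1$. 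Since I may take $U_1$ anywhere in $S'\ten\mathbb{B}(D)$, I would construct it by unitarily matching the two $\ast$-homomorphisms $s'\mapsto s'\ten 1$ and $s'\mapsto U(s'\ten 1)U^\ast$ of $S'$ inside the finite-dimensional algebra $S'\ten\mathbb{B}(D)$, enlarging $D$ (equivariantly for $T$, keeping $d$ in the original summand so iv' and v survive) to equalize their multiplicities in each simple block.

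The crux, and the only place where $\phi(S)\subseteq S$ is actually used, is the hypothesis needed to make that matching legal: I must arrange $U$ so that $U(s'\ten 1)U^\ast\in S'\ten\mathbb{B}(D)$ for every $s'\in S'$ (equivalently, $U$ conjugates $S\ten 1$ into $S\ten\mathbb{B}(D)$). This is the main obstacle, because it does not follow from a generic dilation: $\phi(S)\subseteq S$ only constrains the compression $\id\ten\om_d(U^\ast(s\ten 1)U)=\phi(s)\in S$, which is strictly weaker than $U^\ast(s\ten 1)U\in S\ten\mathbb{B}(D)$. The work is therefore to build an $S$-adapted dilation. One route is to refine the Kraus data using that $\sum_i a_i^\ast s a_i\in S$ (so it commutes with $S'$) for all $s\in S$, first dilating $\phi\big|_S$ inside $S$ so that its environment records the $S'$-sector only as pure which-path information and hence sits in $S\ten\mathbb{B}(D)$; concretely, I would try to factor $\phi=\alpha\circ\beta$ with $\beta$ having Kraus operators in $S$ and $\alpha$ having Kraus operators in $S'$, dilate each factor to land $U_2$ in $S\ten\mathbb{B}(D)$ and $U_1$ in $S'\ten\mathbb{B}(D)$, and reconcile this with the $T$-equivariant $U$ of the first step. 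Proving that such an $S$/$S'$ factorization of $\phi$ exists and can be chosen compatibly with the $T'$-valued Kraus operators is the technically delicate point, especially since $S$ and $T$ are not assumed to commute.

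Finally, for the converse I would assume i--v and verify the three properties directly. The map $\phi(x):=\id\ten\om_d(U^\ast(x\ten 1)U)$ is ucp, being a Stinespring compression through the isometry $\xi\mapsto U(\xi\ten d)$, with $\phi(1)=\id\ten\om_d(U^\ast U)=1$. It is a $T$-bimodule map: using iv' and its adjoint to move $t,t'$ across $U$ and then pulling them out of $\id\ten\om_d$ (as they act only on $\CC^n$) gives $\phi(txt')=t\phi(x)t'$. And $\phi(S)\subseteq S$: for $s\in S$, property i makes $U_1$ commute with $s\ten 1$, so $\phi(s)=\id\ten\om_d(U_2^\ast(s\ten 1)U_2)$ with $U_2^\ast(s\ten 1)U_2\in S\ten\mathbb{B}(D)$ by ii, and $\id\ten\om_d$ maps $S\ten\mathbb{B}(D)$ into $S$.
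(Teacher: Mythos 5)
Your reduction is sound as far as it goes: the $T'$-valued Kraus decomposition, the $T$-equivariant Stinespring isometry $V$, its equivariant unitary extension $U$ (legitimate by semisimplicity of finite-dimensional $T$-modules), and the observation that setting $\pi(t):=U_2(t\otimes 1)U_2^*$ makes iii) and iv) formal consequences of iv') once the factorization $U=U_1U_2$ exists --- all of this is correct, and it parallels the first half of the paper's argument (which builds the same $T$-module isometry $V$ and unitary $U\in (T\otimes 1)'$ via the GNS inner product $(x\otimes h,y\otimes k)=(h,\phi(x^*y)k)$ rather than via the Choi matrix). Your converse verification is also fine and matches what the paper leaves to the reader.

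The gap is exactly where you flag it yourself: you never construct the factorization $U=U_1U_2$, and the route you sketch for it is not viable as stated. A factorization of the \emph{unitary} $U$ into $U_1\in S'\otimes\mathbb{B}(D)$ and $U_2\in S\otimes\mathbb{B}(D)$ does not compress to a factorization of the \emph{map} $\phi=\alpha\circ\beta$ with Kraus operators in $S'$ and $S$ respectively: the compression $\mathrm{id}\otimes\omega_d$ does not split across the product $U_1U_2$, so there is no reason such a map-level factorization should exist, and the paper never produces one. What the paper actually does at this point is the real content of the proof. It dilates the restriction $\phi|_S$ \emph{inside} $S$, finding $v\in C_m(S)$ (a column over $S$) with $v^*(x\otimes 1)v=\phi(x)$ for $x\in S$; it compares the two dilations via $\sigma(\pi(s)v(h\otimes d)):=\pi(s)V(h\otimes d)$, which is an isometry of $\pi(S)$-modules and hence extends to a unitary $U_1\in S'\otimes\mathbb{B}(D)$; it then sets $\hat\pi(t)=U_1^*(t\otimes 1)U_1$, forms the TRO $Y=\{y:(t\otimes 1)y=y\hat\pi(t)\ \forall t\in T\}$ and the sub-TRO $X=(S\otimes\mathbb{B}(D))\cap Y$, checks $v\in X$, and uses that $X^*X$ is a unital finite-dimensional $C^*$-algebra to find partial isometries $\gamma_j$ with $\sum_j\gamma_j^*v^*v\gamma_j=1$, so that $U_2=\sum_j v\gamma_j\in X$ is a unitary. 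It is this module-intertwiner-plus-TRO-completion device --- which keeps $U_2$ inside $S\otimes\mathbb{B}(D)$ while preserving the $T$-intertwining relation even though $S$ and $T$ need not commute --- that your proposal is missing, and it cannot be replaced by the $\phi=\alpha\circ\beta$ factorization you suggest.
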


\begin{proof} As usual we consider the inner product
 \[ (x\ten h,y\ten k) \lel (h,\phi(x^*y)k) \pl \]
and $N=\{\xi|(\xi,\xi)=0\}$. We denote by $[\xi]$ the equivalence classes of $\xi$ in $H=\mz_n\ten \ell_2^n/N$ and $x\ten_{\phi}h=[x\ten h]$. We note that $ya\ten_{\phi} h=y\ten_{\phi} ah$ for $a\in T$. Let $\pi:\mz_n\to \mathbb{B}(H)$ be the representation $\pi(z)(x\ten h)=zx\ten h$ which remains well-defined and unital on $H$. Therefore $H\lel \ell_2^n\ten D$ for some Hilbert space $D$ and $\pi(\mz_n)'=\mathbb{B}(D)$. Now we may consider the copy $K=\{ [1\ten h]| h\in \ell_2^n\} \subset H$. Note that the Stinespring isometry is given by the
inclusion map $V:\ell_2^n\to H$, $V(h)\lel 1\ten_{\phi}h$ and that
 \[ V(ah) \lel 1\ten_{\phi}ah \lel a\ten_{\phi} h \lel (a\ten 1) V(h) \pl .\]
This means $V$ is $T$ module map. Let $d\in D$ be a unit vector and $P=id\ten P_d:\ell_2^n\ten D\to \ell_2^n\ten D$ be obtained from the orthogonal projection onto $d$. Then $\tilde{V}=VP_d \in \mathbb{B}(\ell_2^n\ten D,\ell_2^n\ten D)$ is a partial isometry in $(T\ten 1)'$ and hence there exists a unitary $U \in (T\ten 1)'$ extending it. This means for a $T$-module map $\phi$ we have found the dilation
  \[ \phi(x) \lel w_d(U^*(x\ten 1_D)U) \]
given by $w_d(a\ten b)=a(d,bd)$ satisfying with $U\in (T\ten 1)'$. However, to accommodate the second condition we have to choose $U$ more carefully. First we consider $\phi:S\to S$, Since $S=\sum_k M_{n_k}$, we can understand $\phi$ as a family of ucp maps, and find a $v\in C_m(S)$ such that
 \[ v^*(x\ten 1_d)v\lel \phi(x) \pl .\]
We may assume that $m\le {\rm dim} D$ used above, and in fact, we may then assume $d={\rm dim}(D)$, by enlarging it. Then we note that
  \[ \{\sum_k x_k\ten_{\phi} h_k| x_k\in S, h_k\in \ell_2^n\} \lel
   X_{\phi}\cong \pi(S)v(\ell_2^n)\subset \ell_2^n\ten D \pl.\]
Observe that the orthogonal projection $P:\ell_2^n\ten D\to \pi(S)v(\ell_2)$ is a projection onto the $\pi(S)$ module and hence $P=v^*v$ is in $S'\ten \mathbb{B}(D)\ten \cap S\ten \mathbb{B}(D)=(S\cap S')\ten \mathbb{B}(D)$. Here $S\cap S'$ is the center of $S$. Let us now  fix a unit vector $d$ consider the map $\si:\pi(S)v(\ell_2^n\ten d)\to \pi(S)V(\ell_2^n\ten d))$, $V$ constructed above defined by
 \[ \si(\pi(s)v(h\ten d)))\lel \pi(s)V(h\ten d) \pl .\]
Since the space $S\ten_{\phi}\ell_2^n$ naturally embeds into $\mz_n\ten_{\phi}\ell_2^n$, we deduce that $\si$ is an isometry between $\pi(S)=S\ten 1$ modules and hence $\si P$ in $S'\ten \mathbb{B}(D)$. Therefore we find a unitary $U_1\in S'\ten \BB(D)$ extending this partial isometry. Let us define $\hat{\pi}(t)\lel U_1^*(s\ten 1)U_1$ and
  \[ Y \lel \{y:\ell_2^n\ten D\to \ell_2^n\ten D| \forall_{t\in T}:(t\ten 1)y\lel y\hat{\pi}(t)\} \pl .\]
Then $Y$ is a TRO, i.e. $y_1y_2^*y_3\in Y$ for all $y_1,y_2,y_3\in Y$. Moreover, the intersection
 \[ X\lel S\ten \BB(D) \cap Y \]
is also a TRO, see \cite{gao_capacity_2018,eleftherakis_stable_2008} for more information. Then we note that
 \begin{align*}
  U_1 v(th\ten d) &=
    V(th) \lel(t\ten 1)V(h\ten d) \lel (t\ten 1)\si v(h\ten d) \\
  &= (t\ten 1) U_1 v(h\ten d) \lel U_1 \hat{\pi}(t)v(h\ten d) \pl .
  \end{align*}
Therefore $v\in X$ and $e=v^*v\in X^*X$. Since $X$ is a finite dimensional TRO, $X^*X$ is a unital $C^*$ algebra. Therefore we can find partial isometries $\gamma_j$ with disjoint range such that
 \[ \sum_j \gamma_j^*e\gamma_j \lel 1_{X^*X} \pl .\]
In other words $U_2\lel \sum_j v\gamma_j\in X$ is a unitary. This means we find a unitary  \[ U\lel U_1U_2 \]
such that $U_1\in S'\ten \mathbb{B}(D)$, $U_2\in S\ten \BB(D)$, and
 \[ (t\ten 1)U\lel (t\ten 1)U_1U_2\lel
 U_1\hat{\pi(t)}U_2\lel U_1U_2(t\ten 1) \pl .\]
This means this new $U$ is a different of $U\in T'\ten \mathbb{B}(D)$ such that
 \[ U(\ell_2^n\ten d) \lel V(\ell_2^n\ten d) \pl .\]
This implies
  \[\phi(x)\ten |d\ran\lan d| \lel    V^*(x\ten 1)V\lel  (1\ten P_d) U^*(x\ten 1)U(1\ten P_d) \pl .\]
We leave it to the reader to verify that the conditions above characterize $T$-module maps with $\phi(S)\subset S$.
\end{proof}

\begin{cor} \label{cor:bimodtechnical}Let $\phi:S_1^n\to S_1^n$ be a completely positive trace preserving map such that $\phi^{\sharp}$ is unital $T$-module map with $\phi^{\sharp}(S)\subset S$, $\phi^{\sharp}(M)\subset M$ and $R\subset T$, $S,T\subset M$. Then
\[
 I\bigg[\begin{array}{cc} S \!& M\\ R \!& T   \end{array}\bigg]_{\phi(\rho)}
 \kl I\bigg[\begin{array}{cc} S \!& M\\ R \!& T   \end{array}\bigg]_{\rho} \pl .\]
Similarly, $I^{sq}$ is non-increasing under $T$-bimodule maps which preserve $S$.
\end{cor}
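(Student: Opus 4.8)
The plan is to separate the four entropies in $I\big[\begin{smallmatrix} S & M\\ R & T\end{smallmatrix}\big]_{\rho}=H(S)_\rho+H(T)_\rho-H(M)_\rho-H(R)_\rho$ into a $T$-side, which the module structure should freeze, and an $S$-side, which I would control through the Stinespring-type dilation of Proposition \ref{prop:bimod}.

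First I would dispose of the lower row. Since $\phi^\sharp$ is a unital $T$-module map, setting $x=1$ in $\phi^\sharp(axc)=a\phi^\sharp(x)c$ gives $\phi^\sharp(b)=b$ for all $b\in T$, hence $\phi^\sharp|_T=\mathrm{id}$ and, as $R\subseteq T$, also $\phi^\sharp|_R=\mathrm{id}$. Dualizing against the trace, for $b\in T$ one has $\tr(\E_T(\phi(\rho))b)=\tr(\rho\,\phi^\sharp(b))=\tr(\rho b)=\tr(\E_T(\rho)b)$, so $\E_T\phi=\E_T$ and likewise $\E_R\phi=\E_R$. Thus $H(T)_{\phi(\rho)}=H(T)_\rho$ and $H(R)_{\phi(\rho)}=H(R)_\rho$, and the claim collapses to the single inequality $H(S)_{\phi(\rho)}-H(M)_{\phi(\rho)}\le H(S)_\rho-H(M)_\rho$. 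Using $\phi^\sharp(M)\subseteq M$ the same way gives $\E_M\phi=\E_M\phi\E_M$, so $\Lambda:=\E_M\phi$ descends to a channel on $S_1(M)$ with $\Lambda^\sharp=\phi^\sharp|_M$ still a unital $T$-module map preserving $S$; and since $\E_S=\E_S\E_M$, the reduced inequality is exactly $D^{S}(\Lambda(\mu))\le D^{S}(\mu)$ for $\mu=\E_M(\rho)$, by Lemma \ref{lem:entropyforms}. Localizing to $M$ this way makes the top-right slot the full ambient algebra, which every unitary preserves.

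Next I would feed $\Lambda^\sharp$ into the structure theorem, Proposition \ref{prop:bimod}, to obtain a dilation $\Lambda(\sigma)=\tr_D\!\big(U(\sigma\ten\ketbra{d})U^*\big)$ with $U=U_1U_2$, $U_1\in S'\ten\BB(D)$, $U_2\in S\ten\BB(D)$ and $U\in T'\ten\BB(D)$. I would then read $\Lambda$ as the composition of three free operations from the square calculus: tensoring in the pure ancilla via the commuting square $\big[\begin{smallmatrix}\BB(D)&\BB(D)\\\CC1&\CC1\end{smallmatrix}\big]$ (operation E, carrying $I=0$, so $I$ is unchanged); conjugation by $U$ (operation A, an automorphism, which preserves every subalgebra entropy while relabeling the algebras); and tracing out $D$. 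Because $U\in T'\ten\BB(D)$ commutes with $T\ten 1\supseteq R\ten 1$, the conjugation fixes the lower row, $U(T\ten 1)U^*=T\ten 1$ and $U(R\ten 1)U^*=R\ten 1$; the ancilla therefore survives only in the upper row, putting the square in the form to which the special-partial-trace operation applies, and chaining E, A and that trace is how I would assemble the inequality for $I$.

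The hard part will be controlling the $S$-slot under conjugation by $U_1\in S'\ten\BB(D)$. After the partial trace, data processing gives $D^{S}(\Lambda(\mu))\le D^{S\ten\BB(D)}(U\tilde\mu U^*)$ with $\tilde\mu=\mu\ten\ketbra{d}$ and $D^{S\ten\BB(D)}(\tilde\mu)=D^{S}(\mu)$; conjugation by $U_2\in S\ten\BB(D)$ (an inner symmetry of that algebra) leaves $D^{S\ten\BB(D)}$ unchanged, so it remains to see that conjugation by $U_1\in S'\ten\BB(D)$ does not increase $D^{S\ten\BB(D)}$. This is the substantive point, and it is \emph{false} for a generic unitary of $S'\ten\BB(D)$ (taking $S=\CC1$, any unitary is allowed and $D^{S\ten\BB(D)}$ can grow); it must use the specific $U_1$ produced by Proposition \ref{prop:bimod}, namely the extension of the isometry intertwining the $\pi(S)$-modules, together with the identity $\E_M(S')\subseteq S'\cap M$ (valid because $\E_M$ is an $M$-bimodule map). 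Granting this reduction, the whole map is an alternation of E, A and commuting-square partial traces; for $I^{sq}$ I would then invoke the already-established theorem that $I^{sq}$ is non-increasing under trivial extensions, automorphisms and commuting-square partial traces, noting that each of these commutes with forming the commuting-square extensions in the infimum defining $I^{sq}$, so the same decomposition of $\phi$ shows $I^{sq}$ is non-increasing under $T$-bimodule maps preserving $S$.
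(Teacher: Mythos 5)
Your two preliminary reductions are correct, and in fact cleaner than anything the paper records: unitality plus the $T$-module property do force $\phi^\sharp|_T=\mathrm{id}$, hence $\E_T\phi=\E_T$ and $\E_R\phi=\E_R$, which freezes the lower row; and $\phi^\sharp(M)\subseteq M$ does let you replace $\phi$ by $\Lambda=\E_M\phi$ acting on $S_1(M)$, so that only $D^S(\Lambda(\mu))\le D^S(\mu)$ remains. From there your route coincides with the paper's: feed the adjoint into Proposition \ref{prop:bimod}, read the dilation as tensor extension (E), automorphism (A) by $U=U_1U_2$, and the special partial trace over $D$. But the proposal is not a proof, because the one step that carries all the content is explicitly granted rather than proved: that conjugation by the specific $U_1\in S'\otimes\BB(D)$ does not increase $D^{S\otimes\BB(D)}$ at the relevant state, equivalently that one may keep $S\otimes\BB(D)$ in the top-left slot after the $U_1$-automorphism. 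This cannot be repaired by any soft data-processing argument from the stated hypotheses alone. Indeed, take $T=R=\CC 1$, $S$ the diagonal masa in $M_2$, and the constant channel $\phi(\rho)=\sigma_0$, $\phi^\sharp(x)=\tr(\sigma_0 x)1$ with $\sigma_0=\ketbra{+}$: every hypothesis of the Corollary is satisfied ($T$-modularity is vacuous, $\phi^\sharp(S)\subseteq\CC 1\subseteq S$), yet $D^S(\phi(\rho))=\log 2>0=D^S(\rho)$ for diagonal $\rho$. So the missing step is exactly where the fine structure of $U_1$ (it extends the partial isometry intertwining the $\pi(S)$-module built from the internal dilation $v$ of $\phi|_S$ with the one built from the Stinespring isometry $V$) must enter, and where any hidden hypothesis (e.g.\ that $\phi^\sharp$ does not move the center of $S$, which is what the existence of $v\in C_m(S)$ in Proposition \ref{prop:bimod} really requires) gets used.

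To be fair to you, this is precisely the point at which the paper's own proof is silent: after applying $U_1$ it simply writes $S\otimes\BB(D)$ back into the top-left slot, which presupposes $U_1(S\otimes\BB(D))U_1^*=S\otimes\BB(D)$. That identity holds for $U_2\in S\otimes\BB(D)$, but for a general unitary of $S'\otimes\BB(D)$ one only gets $U_1(S\otimes\BB(D))U_1^*\subseteq \langle S,S'\rangle\otimes\BB(D)$, and your $S=\CC 1$ remark is exactly the obstruction. So you have correctly isolated the crux, and your write-up is a faithful reconstruction of the paper's argument; but as submitted it reproduces the paper's gap rather than fills it. A complete proof needs either the algebra invariance $U_1(S\otimes\BB(D))U_1^*=S\otimes\BB(D)$ for the particular $U_1$ constructed in Proposition \ref{prop:bimod}, or the entropy identity $H(\E_{S\otimes\BB(D)}(V\rho V^*))=H(\E_S(\rho))$ established directly from the module structure of $V$ (together with whatever hypothesis rules out the constant-channel example above); the $I^{sq}$ statement then follows as you and the paper both indicate, by tensoring with commuting squares, but it inherits the same unproven step.
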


\begin{proof} We may replace $\mz_n$ in the proof above by $M\subset \mz_n$ and assume that $V\in M\ten\mathbb{B}(D)$. We find $U=U_1U_2$ such that $U_2\in S\ten \mathbb{B}(D)$ and $U_1\in M\cap S'\ten \mathbb{B}(D)$. Then we proceed in three  steps. We first add the state $\rho\mapsto U_2(\rho\ten |d\ran\lan d|)U_2^*$, $T\mapsto U_2(T\ten 1)U_2^*$, $R\mapsto U_2(R\ten 1)U_2^*$, $S\mapsto S\ten \mathbb{B}(D)$, $M\mapsto M\ten \mathbb{B}(D)$. This leaves $I$ invariant. In the next step we apply the unitary $U_1$ and see that $R$ and $T$ are back in the $R\ten 1$, $T\ten 1$ position and hence
 \[
 I\bigg[\begin{array}{cc} S\ten \mathbb{B}(D) \!& M\ten \mathbb{B}(D)\\ R \ten 1 \!& T \ten 1  \end{array}\bigg]_{U\rho\ten |d\ran\lan d|U^*}
 \lel I\bigg[\begin{array}{cc} S \!& M\\ R \!& T   \end{array}\bigg]_{\rho} \pl .\]
Now, we apply the special partial trace over $D$ and deduce the assertion, thanks to data processing. The additional assertion follows by applying the first statement to $x\ten y$ and $\phi\ten id$.
\end{proof}

\subsection{Individual Operations} \label{sec:opproofs}
\begin{proof} (Theorem \ref{B}) We first consider pure $\rho^{ABC}$. Let $\E_\S(\cdot)=id_A\ten tr_F(V^* \cdot V)$ be a Stinespring dilation of $\E_\S$ such that $F$ is the environment system and $\E_\S^c(\cdot) = tr_A\ten id_F(V^* \cdot V)$ is the complementary channel. Denote $\tilde{\rho}^{FABC}=(V\ten 1_{BC})\rho^{ABC}(V\ten 1_{BC})^*$. Since $\tilde{\rho}$ is also pure, we have
 \begin{align*}H(F|B)_{\tilde{\rho}}=& -H(F|AC)_{\tilde{\rho}}
 \\ =& -H(FAC)_{\tilde{\rho}}+H(AC)_{\tilde{\rho}}
 \\ =& -H(AC)_\rho+H(\S C)_\rho
 \\ \ge & -H(\T C)_\rho+H(\R C)_\rho\pl.
 \end{align*}
The third inequality is because $\tilde{\rho}^{FAC}=(V\ten 1_C)\rho^{AC}(V\ten 1_C)^*$ and \[\tilde{\rho}^{AC}=tr_F((V\ten 1_C)\rho^{AC}(V\ten 1_C)^*)=\E_\S\ten id_C (\rho^{AC})\pl.\] The last inequality is the SSA inequality for the commuting square $\scriptsize\kla\begin{array}{ccc} \S C& \subset &AC\\ \cup   & & \cup \\ \R C & \subset &  \T C \end{array}\mer$\pl. This inequality
\[H(F|B)_{\tilde{\rho}} \ge  -H(\T C)_\rho+H(\R C)_\rho\]
extends to all states $\rho$ because the LHS is concave and RHS is convex of $\rho$. Rewriting it we obtained \eqref{tripa}.
\end{proof}

\begin{lemma} \label{lem:envcondexp}
Let $\S, \T \subset \M$ be subalgebras such that $[\E_{\S'}, \E_\T] = 0$. Then $\E^c_{\T} \E_{\S} \R_{\id/d, \E^c_{\T}}$ is idempotent and is its own Petz map.
\end{lemma}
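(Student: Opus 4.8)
The plan is to reduce $\Phi \equiv \E^c_\T\,\E_\S\,R_{\id/d,\E^c_\T}$ to a conjugate of an honest conditional expectation, so that both claimed properties become transparent. Write $R = R_{\id/d,\E^c_\T}$ with $d=|M|$. From Lemma \ref{lem:petzrecov} I would use three facts: $R\,\E^c_\T = \E_{\T'}$; that $\E^c_\T$ is the Petz map of $R$ for the default state $\E^c_\T(\id/d)$ while $R$ is the Petz map of $\E^c_\T$ for $\id/d$; and that, by the explicit form \eqref{eq:petzform}, $R$ has range inside $\T'$, so that $\E_{\T'}R = R$. The point is that $\E^c_\T$ and $R$ carry only $\T'$-information, so the independence that actually matters is between $\S$ and $\T'$.

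First I would convert the hypothesis into the form I need. The bridge is the commutant duality for commuting squares (orthogonal pairs) of \cite{popa_orthogonal_1983}: if $(\S',\T)$ is a commuting square, i.e.\ $[\E_{\S'},\E_\T]=0$, then so is $(\S'',\T')=(\S,\T')$, giving $\E_\S\E_{\T'}=\E_{\T'}\E_\S=\E_{\S\cap\T'}$. Setting $\K=\S\cap\T'$ and combining $\E_{\T'}R=R$ with $\E_\S\E_{\T'}=\E_\K$ rewrites the channel as $\Phi = \E^c_\T\E_\S\E_{\T'}R = \E^c_\T\E_\K R$, with $\K\subseteq\T'$.

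Idempotency is then immediate: using $R\E^c_\T=\E_{\T'}$ and $\E_\K\E_{\T'}=\E_\K$ (valid since $\K\subseteq\T'$),
\[ \Phi^2 = \E^c_\T\E_\K R\,\E^c_\T\E_\K R = \E^c_\T\E_\K\E_{\T'}\E_\K R = \E^c_\T\E_\K R = \Phi. \]
For the self-Petz property I would take the natural reference $\gamma = \E^c_\T(\id/d)$ and apply the composition rule for Petz maps to the factorization $\Phi = \E^c_\T\circ\E_\K\circ R$. Since $R\gamma = \E_{\T'}(\id/d)=\id/d$ and $\E_\K(\id/d)=\id/d$, the reference states threading the chain are $\gamma,\ \id/d,\ \id/d$, so it reads $R_{\gamma,\Phi}=R_{\gamma,R}\circ R_{\id/d,\E_\K}\circ R_{\id/d,\E^c_\T}$. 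Each factor is known: $R_{\E^c_\T(\id/d),R}=\E^c_\T$ and $R_{\id/d,\E^c_\T}=R$ by Lemma \ref{lem:petzrecov}, while $R_{\id/d,\E_\K}=\E_\K$ because a trace-preserving conditional expectation is its own Petz map for the maximally mixed reference. Hence $R_{\gamma,\Phi}=\E^c_\T\E_\K R=\Phi$.

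The main obstacle is the hypothesis conversion: I must be sure that $[\E_{\S'},\E_\T]=0$ really yields $[\E_\S,\E_{\T'}]=0$. I would verify the commutant duality of commuting squares directly in the $L^2(\M,\tr)$ picture, where all four conditional expectations are orthogonal projections and $\E_\A\E_{\A'}=\E_{Z(\A)}$ holds for every $\A$, or else cite \cite{popa_orthogonal_1983}; it is cleanest when $\M$ is a factor. A secondary technical point is checking the support hypotheses of the Petz composition rule, which are satisfied here because $\gamma$ and $\id/d$ are full rank on the environment algebra and on $\M$ respectively. Alternatively, the whole statement can be checked by the block-diagonal calculus of \eqref{eq:condexpform}, \eqref{eq:condexpcomp} and \eqref{eq:petzform}, in which the commuting-square hypothesis fixes a compatible decomposition and $\Phi$ becomes a manifestly idempotent, self-adjoint channel on the environment algebra.
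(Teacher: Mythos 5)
Your idempotence computation and your Petz-composition argument are sound as conditional reasoning, and the latter (threading the reference states $\gamma \mapsto \id/d \mapsto \id/d$ through the factorization $\E^c_\T \circ \E_\K \circ R$) is actually more careful than the paper's own one-line appeal to composability of Petz maps. The fatal problem is precisely the step you flagged as the main obstacle: the ``commutant duality'' you rely on --- that $[\E_{\S'},\E_\T]=0$ implies $[\E_\S,\E_{\T'}]=0$ --- is false, and it cannot be rescued by \cite{popa_orthogonal_1983} or by an $L^2$ verification. Concretely, in the factor $\M=M_3$ take $P=\ket{1}\bra{1}$, $\ket{q}=\tfrac{1}{\sqrt{3}}\ket{1}+\sqrt{\tfrac{2}{3}}\ket{2}$, $Q=\ket{q}\bra{q}$, and set $\A=\mathrm{span}\{P,P^{\perp}\}$, $\B=\mathrm{span}\{Q,Q^{\perp}\}$. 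Since $\tr(PQ)=\tfrac13$, one gets $\E_\A(Q)=\tfrac13 P+\tfrac13 P^{\perp}=\tfrac13 1$ and $\E_\B(P)=\tfrac13 1$, hence $\E_\A\E_\B=\E_\B\E_\A=\tfrac{\tr(\cdot)}{3}\,1=\E_{\A\cap\B}$: a genuine commuting square. But the commutants $\A'=\CC P\oplus\BB(\mathrm{ran}\,P^{\perp})$ and $\B'=\CC Q\oplus\BB(\mathrm{ran}\,Q^{\perp})$ do \emph{not} form a commuting square: with $\ket{q^{\perp}}=\sqrt{\tfrac{2}{3}}\ket{1}-\tfrac{1}{\sqrt{3}}\ket{2}$ and $y=\ket{q^{\perp}}\bra{3}+\ket{3}\bra{q^{\perp}}\in\B'$,
\begin{equation*}
\E_{\A'}(y)=PyP+P^{\perp}yP^{\perp}=-\tfrac{1}{\sqrt{3}}\bigl(\ket{2}\bra{3}+\ket{3}\bra{2}\bigr),
\end{equation*}
which fails to commute with $Q$ and so lies outside $\B'$ (a fortiori outside $\A'\cap\B'$), whereas commutation of $\E_{\A'}$ and $\E_{\B'}$ would force $\E_{\A'}(y)=\E_{\A'\cap\B'}(y)$. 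Thus the pair $\S=\A'$, $\T=\B$ satisfies the lemma's printed hypothesis while violating the commutation $[\E_\S,\E_{\T'}]=0$ on which your reduction $\E_\S\E_{\T'}=\E_\K$ depends.

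The same example shows your proof cannot be patched while keeping the printed pairing of hypothesis and map, because that pairing makes the statement itself false: restricted to $\B'$, the map $\E_{\B'}\E_{\A'}$ sends $y\mapsto\tfrac13 y$, and since $R_{\id/d,\E^c_\T}$ has range exactly $\T'=\B'$ with $R_{\id/d,\E^c_\T}\E^c_\T=\E_{\T'}$, the composite $\Phi=\E^c_\T\E_\S R_{\id/d,\E^c_\T}$ obeys $\Phi^2=\tfrac13\Phi\neq\Phi$ on a suitable environment operator. What the paper's proof actually establishes is the version with the roles matched the other way: under $[\E_{\S'},\E_\T]=0$ the channel $\E^c_\S\E_\T R_{\id/d,\E^c_\S}$ is idempotent and self-Petz (equivalently, the printed map is correct under the hypothesis $[\E_\S,\E_{\T'}]=0$). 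That argument needs only $R_{\id/d,\E^c_\S}\E^c_\S=\E_{\S'}$, the inclusion of the range of $R_{\id/d,\E^c_\S}$ in $\S'$, and the stated commutation --- no passage between a square and its commutant square. In the lemma's only application (Theorem \ref{thm:dual}) the relevant pairs are nested ($\S\subseteq\S\T$, $\S'\cap\T'\subseteq\S'$, $\S\cap\T\subseteq\T$), so both commutations hold and the mismatch is invisible there; note too that Theorem \ref{thm:dual} assumes commuting and co-commuting squares as \emph{separate} hypotheses, which is already a signal that neither implies the other. The correct repair is to restate the lemma with hypothesis and map consistently paired, not to derive one commutation from the other.
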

\begin{proof}
For idempotence, we apply Lemma \ref{lem:petzrecov} and calculate,
\begin{equation}
\begin{split}
\E_{\S}^c \E_{\T} R_{\id/d, \E_\S^c}  \E_{\S}^c \E_{\T} R_{\id/d, \E_\S^c} 
	= \E_{\S}^c \E_{\T} \E_{\S'} \E_{\T} R_{\id/d, \E_\S^c}
	= \E_{\S}^c \E_{\T} R_{\id/d, \E_\S^c} \pl.
\end{split}
\end{equation}
To show that this is its own Petz map, we use the decomposition of Petz maps for channels composed in series. $\E_{\T}$ is its own Petz map. $\E_{\S}^c$ has Petz map $R_{\id/d, \E_\S^c}$ by definition, and by Lemma \ref{lem:petzrecov}, $R_{\id/d, \E_\S^c}$ has Petz map $\E_{\S}^c$.
\end{proof}

\begin{lemma} \label{lem:envcompose}
Let $\Phi : S_1(\M_0) \rightarrow S_1(\M_1)$, and $\Psi : S_1(\M_1) \rightarrow S_1(\M_2)$ be quantum channels. Let $E_\Phi$ and $E_\Psi$ be their respective environment systems (not denoting conditional expectations or other channels) in minimal Stinespring dilations, and $E_{\Psi \circ \Phi}$ be the minimal environment of $\Psi \circ \Phi$ Then there is an isometry $V : E_{\Psi \circ \Phi} \rightarrow E_\Phi \otimes \tilde{E}_\Psi$, for which $\tilde{E}_\Psi \cong E_\Psi$, and $E_\Phi$ contains the output of $\Phi^c$.
\end{lemma}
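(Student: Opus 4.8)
The plan is to realize a (generally non-minimal) Stinespring dilation of $\Psi\circ\Phi$ by stacking the two given dilations, and then to extract $V$ from the uniqueness of the minimal dilation. First I would fix minimal Stinespring isometries $W_\Phi$ and $W_\Psi$, so that $\Phi(\rho)=\tr_{E_\Phi}(W_\Phi\rho W_\Phi^\dagger)$ and $\Psi(\sigma)=\tr_{E_\Psi}(W_\Psi\sigma W_\Psi^\dagger)$, with $E_\Phi,E_\Psi$ the minimal environments. By definition of the complementary channel relative to these dilations, $\Phi^c(\rho)=\tr_{\M_1}(W_\Phi\rho W_\Phi^\dagger)$ has output space exactly $E_\Phi$; this already isolates the last clause of the statement, provided I can route $E_\Phi$ through the composite environment.

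Next I would compose: set $W\equiv(W_\Psi\ten\id_{E_\Phi})\,W_\Phi$, an isometry whose environment is $E_\Psi\ten E_\Phi$ (which I reorder to $E_\Phi\ten E_\Psi$). A direct computation gives $\tr_{E_\Phi\ten E_\Psi}(W\rho W^\dagger)=(\Psi\circ\Phi)(\rho)$, so $(W,\,E_\Phi\ten E_\Psi)$ is a valid Stinespring dilation of $\Psi\circ\Phi$, typically not minimal. Moreover, since $W_\Psi$ acts only on the $\M_1$ leg, tracing out just $\M_2\ten E_\Psi$ collapses the $\Psi$-block to $\tr_{\M_1}$ and leaves $\Phi^c(\rho)$ supported on the $E_\Phi$ factor; hence the $E_\Phi$ leg of the product environment carries the output of $\Phi^c$, confirming the final clause.

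Finally I would invoke uniqueness of the minimal Stinespring dilation: any dilation factors through the minimal one by an isometry acting on the environment, directed from the minimal environment into the larger one. Applying this with $E_{\Psi\circ\Phi}$ minimal and $E_\Phi\ten E_\Psi$ the product environment produced above yields an isometry $V:E_{\Psi\circ\Phi}\to E_\Phi\ten\tilde{E}_\Psi$ intertwining the two dilations, where $\tilde{E}_\Psi\cong E_\Psi$. The main point requiring care is the bookkeeping: getting the tensor-factor ordering right and, crucially, orienting $V$ correctly (the minimal environment embeds into the product, not conversely), after which the verifications are routine. A secondary technical point is that $\M_0,\M_1,\M_2$ may be block-diagonal rather than factors, so one uses the finite-dimensional von Neumann algebra form of Stinespring as set up in Appendix \ref{sec:alg}.
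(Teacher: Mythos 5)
Your proposal is correct and follows essentially the same route as the paper's proof: compose the two minimal Stinespring isometries to obtain a (generally non-minimal) dilation of $\Psi \circ \Phi$ with environment $E_\Phi \otimes E_\Psi$, observe that the $E_\Phi$ leg carries the output of $\Phi^c$ since $W_\Psi$ never touches it, and then invoke uniqueness of the minimal dilation to get the isometry $V$ from $E_{\Psi \circ \Phi}$ into the product environment. The paper is terser, citing the uniqueness-up-to-isometry results directly, while you spell out the trace computation and the orientation of $V$ explicitly; the substance is the same.
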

\begin{proof}
Let $U_\Phi$ and $U_\Psi$ be the respective Stinespring isometries of $\Phi$ and $\Psi$. Under $U_\Psi \circ U_\Phi$,
\begin{equation}
S_1(\M_0) \rightarrow_{U_\Phi} S_1(\M_1) \otimes S_1(E_\Phi) \rightarrow_{U_\Psi \otimes \id_{E_\Phi}} S_1(\M_2) \otimes S_1(E_\Phi) \otimes S_1(E_\Psi) \pl.
\end{equation}
Since $U_\Psi$ has no effect on $E_\Phi$, that system still contains the environment of $E_\Phi$. Similarly, $S_1(\M_2)$ contains the output of $\Psi \circ \Phi$. As shown in \cite{holevo_complementary_2007}, complementary channels are unique up to partial isometry, and as noted in \cite{crann_private_2016}, there is an isometry mapping a minimal environment to any other environment.
\end{proof}

\begin{proof} (of Theorem \ref{thm:dual})
First,
\begin{equation}
\begin{split}
I(\S : \T \subset \S \T)_\rho
	& =  H(\S) + H(\T) - H(\S \cap \T) - H(\S \T)
\\	& = D(\E_{\S \T}(\rho) \| \E_\S(\rho)) - D(\E_\T(\rho) \| \E_{\S \cap \T}(\rho))
\\	& = D(\E^c_{(\S \T)'}(\rho) \| \E^c_{(\S \T)'} \E_\S(\rho)) - D(\E^c_{\T'}(\rho) \| \E^c_{\T'} \E_{\S \cap \T}(\rho)) .
\end{split}
\end{equation}
In the first term of the last line, we use that $\R_{\id/d, \E^c_{(\S \T)'}} \E^c_{(\S \T)'} \E_\S = \E_\S$, and $\R_{\id/d, \E^c_{(\S \T)'}} \E^c_{(\S \T)'} = \E_{\S \T}$ by Lemma \ref{lem:petzrecov}, so the application of $\E^c_{(\S \T)'}$ to both arguments of $D(\cdot \| \cdot)$ is reversible by its Petz recovery. A similar argument holds for $\E^c_{\T'}$ in the second term. Application of a fully recoverable channel leaves relative entropy invariant by data processing in both directions. Let $\tilde{\E}_\S = \E^c_{(\S \T)'} \E_\S \R_{\id/d, \E^c_{(\S \T)'}}$, and $\tilde{\E}_{\S \cap \T} = \E^c_{\T'} \E_{\S \cap \T} \R_{\id/d, \E^c_{\T'}}$. By Lemma \ref{lem:envcondexp}, these are idempotent and self-recovering. We then have
\begin{equation}
\begin{split}
I(\S : \T \subset \S \T)
	& = D(\E^c_{(\S \T)'}(\rho) \| \tilde{\E}_\S \E^c_{(\S \T)'} (\rho)) - D(\E^c_{\T'}(\rho) \| \tilde{\E}_{\S \cap \T} \E^c_{\T'} (\rho))
\\	& = H(\tilde{\E}_\S \E^c_{(\S \T)'}(\rho)) - H(\E^c_{(\S \T)'}(\rho)) + H(\E^c_{\T'}(\rho)) - H(\tilde{\E}_{\S \cap \T} \E^c_{\T'} (\rho)) 
\\	& = H(\E^c_{(\S \T)'} \E_\S (\rho)) - H(\E^c_{(\S \T)'}(\rho)) + H(\E^c_{\T'}(\rho)) - H(\E^c_{\T'} \E_{\S \cap \T} (\rho)) \pl.
\end{split}
\end{equation}
Since $\ketbra{\psi}$ is pure, the middle two terms equal $H(\T')_{\ketbra{\psi}} - H(\S' \cap \T')_{\ketbra{\psi}}$. We are done with these terms.

We turn our attention to the outer terms. First, $\E^c_{\T'} \E_{\S \cap \T} = \E^c_{\T'} \E_{\T} \E_{\S} = \E^c_{\T'} \E_{\S}$. Thereby, these terms become
\begin{equation}
H(\E^c_{(\S \T)'} \E_\S (\rho)) - H(\E^c_{\T'} \E_{\S} (\rho))
\end{equation}
Since there exists an isometry from the complementary channel of the minimal Stinespring dilation to any other, we are free to Stinespring dilate non-minimally without changing these entropies. Here we will use $E_\S$ to denote the environment system of $E_\S$ (not a channel), while $\E_\S$ denotes the conditional expectation. Let $E_{(\S \T)'}$ be the environment of $\E_{(\S \T)'}$. Since $\E^c_{(\S \T)'} = \E^c_{\S' \cap \T'} = (\E_{\T'} \E_{\S'})^c$, we may compose the environments as $E_{(\S \T)'} \cong E_{\T'} \otimes \tilde{E}_{\S'}$ (which is a tensor system, not a channel composition) by Lemma \ref{lem:envcompose}. We then use the fact that $\E^c_{\T'} = (\E_{\T'} \E_{\T' \S'})^c$ to rewrite $E_{\T'} = E_{\T' \S'} \otimes \tilde{E}_{\T'}$, as though the channel $\E_{\T' \S'}$ had been applied first, and the channel $\E_{\T'}$ to the output of that, keeping both environments. In the term containing $\E^c_{\T'}$, we expand the environment to $E_{\T' \S'} \otimes \tilde{E}_{\T'}$. We also further expand $E_{(\S \T)'} \cong E_{\T'} \otimes \tilde{E}_{\S'} \cong E_{\T' \S'} \otimes \tilde{E}_{\T'} \otimes \tilde{E}_{\S'}$. This leaves us with
\begin{equation}
\begin{split}
& H(E_{\T' \S'} \otimes \tilde{E}_{\T'} \otimes \tilde{E}_{\S'})_{\E_\S(\rho)} - H(E_{\T' \S'} \otimes \tilde{E}_{\T'})_{\E_\S(\rho)} \\
& = - D(E_{\T' \S'} \otimes \tilde{E}_{\T'} \otimes \tilde{E}_{\S'} \| E_{\T' \S'} \otimes \tilde{E}_{\T'} \otimes 1)_{\E_\S(\rho)} \\
& \leq - D(E_{\T' \S'} \otimes \tilde{E}_{\S'} \| E_{\T' \S'} \otimes 1)_{\E_\S(\rho)} \\
& = H(E_{\T' \S'} \otimes \tilde{E}_{\S'})_{\E_\S(\rho)} - H(E_{\T' \S'})_{\E_\S(\rho)}
\end{split}
\end{equation}
by data processing under partial trace. $\tilde{E}_{\T'}$ was the same system in both terms, and $\tilde{E}_{\S'}$ was split off before it. This can be written as
\begin{equation}
H((\E_{\S'} \E_{\T' \S'})^c \E_\S (\rho)) - H(\E^c_{\T' \S'} \E_\S(\rho)) \pl.
\end{equation}
We have however that $\E_{\S'} \E_{\T' \S'} = \E_{\S'}$, that $\E_{\S'}^c = \E_{\S'}^c \E_\S$, and that $\E^c_{\T' \S'} \E_\S = \E^c_{\T' \S'} \E_{\S \cap \T} \E_\S = \E^c_{\T' \S'}$, so we are left with
\begin{equation}
H(\E_{\S'}^c (\rho)) - H(\E^c_{\T' \S'}(\rho)) \pl.
\end{equation}
Since $\ketbra{\psi}$ is pure, this becomes $H(\S')_{\ketbra{\psi}} - H(\T' \S')_{\ketbra{\psi}}$. This implies $I(\S : \T \subset \S \T)_{\rho} \leq I(\S' : \T' \subset \S' \T')_{\ketbra{\psi}}$. $\M$ being a factor implies the Theorem via double-commutants.
\end{proof}

\begin{definition} (\textbf{Algebra-Modifying Individual Operations}) \label{def:algebraicops}
Let $\S, \T \subseteq \S \T$ form a commuting square. We define an individual $\S$-operation as the following steps:
\begin{enumerate}
	\item We extend $\S \rightarrow \S \otimes \C$, while $\T \rightarrow \T \otimes \CC 1$, and $\rho \rightarrow \rho \otimes \ket{0}^C$ for a fixed pure state $\ket{0}^C$.
	\item For a unitary $U : S T \otimes C \rightarrow S T \otimes C$, we may either:
	\begin{enumerate}
		\item Require that $U$ is an $\s'$-bimodule. Apply $U$ to transform $\s$ by $a \rightarrow U a U^\dagger$ for each $s \in \s$, requiring that $\S \rightarrow \tilde{\S}$ such that $\tilde{\S}, \T \subseteq \tilde{\S} \T \C$ remain a commuting square. This we call a \textbf{Heisenberg picture $\S$-unitary}.
		\item Apply $U$ to transform $\rho \otimes \ket{0}^C \rightarrow U (\rho \otimes \ket{0}^C) U^\dagger$. Transform each $a \in \M \C$ as $a \rightarrow U a U^\dagger$, under which $\S$ and $\T$ may change. This we call a \textbf{global renaming}.
	\end{enumerate}
	\item We transform $\S$ and/or $\T$ as
	\begin{enumerate}
		\item $\S \rightarrow \tilde{\S}$ such that $\tilde{\S} \subseteq \S $, and $\S \cap \T = \tilde{\S} \cap \T$.
		\item $\T \rightarrow \tilde{\T}$ such that $\T \subseteq \tilde{\T}$, and $\S \T = \S \tilde{\T}$.
	\end{enumerate}
	At this point $\s \rightarrow \tilde{\s}$ such that $\tilde{\s} \subseteq \s$, and $\t$ is unchanged. We may then transform the density at this point as $\rho \rightarrow \E_{\tilde{\S} \tilde{\T}}(\rho)$, and remove any subsystems not in a minimal factor $\M$ such that $\tilde{\S} \tilde{\T} \subseteq \M$.
\end{enumerate}
We define algebra-modifying $\T$-operations analogously.
\end{definition}
\noindent While there is some redundancy between these pictures, they differ in which operations are most convenient. An algebra-modifying $\S$-op has the form of a local Stinespring dilation as in figure \ref{fig:stine}: we add an extra system, act unitarily, and then ``trace out" by removing elements from $\S$ or locking them in $\S \cap \T$. While the global renaming unitaries do formally affect the state, they do so in a way with no physical consequences whatsoever, so we still consider them Heisenberg picture operations. If the addition of elements to $\S \cap \T$ seems odd, we are often free to restrict our attention to $\S \cap \T = \CC 1$. In these scenarios, we may always remove elements of $\S$ or $\T$.
\begin{proof} (of Theorem \ref{thm:mono})
We first show monotonicity of algebra-modifying operations. For step 1, $H(\S \T)$ and $H(\S)$ are unchanged, while $H(\T)$ and $H(\S \cap \T)$ change by an equal amount $\log |D|$. Hence this leaves $I(\S : \T)_\rho$ invariant.

For step 3a, data processing with Lemma \ref{lem:entropyforms} implies $H(\S \T)_\rho - H(\S)_\rho \leq H(\tilde{\S} \T)_\rho - H(\tilde{\S})_\rho$. For 3b, $H(\S \T)_\rho - H(\T)_\rho \geq H(\S \tilde{\T})_\rho - H(\tilde{\T})_\rho$ by data processing in the reverse direction. Since the $I$ only depends on $\rho^{\S \T}$, we may remove extra subsystems, completing step 3.

For step 2,
\begin{enumerate}[label=(\alph*)]
	\item[(a)] Let $\rho = \tr_{M'}(\ketbra{\psi}^{M M'})$, where $\M$ is a minimal factor containing $\S \T$, and $\M'$ is a purifying extra system. As an $\s'$-bimodule, $U$ does not change $\E_{\S'}(\ketbra{\psi})$, or $\E_{\S' \cap \T'}(\ketbra{\psi})$. Hence by Lemma \ref{lem:petzrecov}, it also leaves $\E^c_{\S}(\ketbra{\psi}) = \E^c_{\S}(\rho)$ and $\E^c_{\S \T}(\ketbra{\psi}) = \E^c_{\S \T}(\rho)$ unchanged. $\s' = (\S \cap (\S \cap \T)')' = \S' (\S \cap \T)$.
	\item[(b)] If $U$ is an interaction-picture global unitary, then $H(\E_{U \S U^\dagger}(U \rho U^\dagger)) = H(\E_{\S}(\rho))$, and similarly for $\T, \S \T$, and $\S \cap \T$. In essence, the unitary is simply a renaming of bases.
\end{enumerate}
This completes the analysis of algebra-modifying operations.

For state-modifying operations, $\T$-preserving property implies that $H(\T)$ and $H(\S \cap \T)$ do not change. By Lemma \ref{lem:entropyforms}, data processing, and Remark \ref{rem:bimodcomm}
\begin{equation}
\begin{split}
H(\S)_\rho - H(\S \T)_\rho & = D(\E_{\S \T}(\rho) \| \E_{\S}(\rho) )\\
	& \geq D(\Phi \circ \E_{\S \T}(\rho) \| \Phi \circ \E_{\S}(\rho) ) \\
	& = D( \E_{\S \T}(\Phi(\rho)) \| \E_{\S}(\Phi(\rho)) ) \\
	& = H(\S)_{\Phi(\rho)} - H(\S \T)_{\Phi(\rho)}.
\end{split}
\end{equation}
This proves the Theorem for bimodule operations. For adjusted bimodules, we already have monotonicity under dilation as shown for step 1 of algebraic ops, while tracing out a completely mixed subsystem in $\S$ or $\T$ is easily shown to have no effect on generalized CMI by inspecting the entropy expression.
\end{proof}
We define one more transformation on $I(\S : \T)$. We do not consider this transformation to be an operation, as it has no effect on expectation of observables in $\S$ and $\T$. Rather, it encodes a change in assumptions about the environment.
\begin{lemma} (\textbf{Heisenberg-Schr\"odinger Picture Swaps}) \label{lem:swap}
Let $\S, \T \subseteq \S \T \subseteq \M$ such that $\E_\S, \E_\T,$ and $\E_{\S \T}$ commute. Let $\R \subseteq \M$ be a subalgebra such that $\R \cap \T = \T$, and $\E_\R$ commutes with $\E_\S, \E_\T, \E_{\S \T}$, and $\E_{\S \cap \T}$. Let $\tilde{\S}$ be another algebra with conditional expectation that commutes with the aforementioned, such that $\R \cap \tilde{\S} = \R \cap \S$. Then for all $\rho \in S_1(M)$, $I(\S : \T \subseteq \S \T) \geq I(\tilde{\S} : \T \subseteq \tilde{\S})_{\E_\R(\rho)}$.
\end{lemma}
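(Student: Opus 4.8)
The plan is to reduce the statement to a single data-processing inequality after cancelling the ``conditioning'' terms. Write $\sigma = \E_\R(\rho)$. Using Lemma~\ref{lem:entropyforms} together with the nesting identities $\E_\S \E_{\S\T} = \E_\S$ and $\E_{\S\cap\T}\E_\T = \E_{\S\cap\T}$, I would first record the two decompositions
\[ I(\S : \T \subseteq \S\T)_\rho = D^\S(\E_{\S\T}(\rho)) - D^{\S\cap\T}(\E_\T(\rho)), \]
\[ I(\tilde\S : \T \subseteq \tilde\S\T)_{\sigma} = D^{\tilde\S}(\E_{\tilde\S\T}(\sigma)) - D^{\tilde\S\cap\T}(\E_\T(\sigma)), \]
exactly as in the opening lines of the proof of Theorem~\ref{thm:dual}.

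Next I would show that the two subtracted terms coincide, which collapses the problem. Since $\T \subseteq \R$, nesting gives $\E_\T(\sigma) = \E_\T(\rho)$ and $\E_{\S\cap\T}(\sigma) = \E_{\S\cap\T}(\rho)$; and the hypothesis $\tilde\S\cap\R = \S\cap\R$ yields $\tilde\S\cap\T = (\tilde\S\cap\R)\cap\T = (\S\cap\R)\cap\T = \S\cap\T$. Hence $D^{\tilde\S\cap\T}(\E_\T(\sigma)) = D^{\S\cap\T}(\E_\T(\rho))$, the subtracted terms cancel, and the lemma reduces to the single inequality $D^\S(\E_{\S\T}(\rho)) \geq D^{\tilde\S}(\E_{\tilde\S\T}(\sigma))$.

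For this I would apply the data processing inequality to $D^\S(\E_{\S\T}(\rho)) = D(\E_{\S\T}(\rho)\,\|\,\E_\S(\rho))$ under the CPTP map $\Psi := \E_{\tilde\S\T}\circ\E_\R$, giving $D^\S(\E_{\S\T}(\rho)) \geq D(\Psi\E_{\S\T}(\rho)\,\|\,\Psi\E_\S(\rho))$, and then identify the right-hand side with $D(\E_{\tilde\S\T}(\sigma)\,\|\,\E_{\tilde\S}(\sigma)) = D^{\tilde\S}(\E_{\tilde\S\T}(\sigma))$. This needs two operator identities. The second argument is easy: since commuting conditional expectations compose to the expectation onto their intersection, $\E_\R\E_\S = \E_{\R\cap\S} = \E_{\R\cap\tilde\S} = \E_{\tilde\S}\E_\R$, whence $\Psi\E_\S(\rho) = \E_{\tilde\S\T}\E_{\tilde\S}\E_\R(\rho) = \E_{\tilde\S}(\sigma)$. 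The first argument requires $\Psi\E_{\S\T}(\rho) = \E_{\tilde\S\T}(\sigma)$, i.e. the operator identity $\E_{\tilde\S\T}\E_\R\E_{\S\T} = \E_{\tilde\S\T}\E_\R$.

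This last identity is the main obstacle. Using that $\E_\R$ commutes with $\E_{\S\T}$ it becomes $\E_{\tilde\S\T}\E_{\R\cap\S\T} = \E_{\tilde\S\T}\E_\R$, which I would deduce from the distributivity of the lattice of mutually commuting conditional expectations: one has $\tilde\S\T\cap\R = (\tilde\S\cap\R)\T = (\S\cap\R)\T = \S\T\cap\R$, so that both $\E_{\tilde\S\T}\E_\R$ and $\E_{\S\T}\E_\R$ reduce to $\E_{(\S\cap\R)\T}$. The delicate point is that writing $\E_{\tilde\S\T}\E_\R = \E_{\tilde\S\T\cap\R}$ uses commutation of $\E_{\tilde\S\T}$ with $\E_\R$, which is not literally among the hypotheses (only $\E_{\tilde\S}$, $\E_\T$, $\E_{\S\T}$, $\E_{\S\cap\T}$ are assumed to commute with $\E_\R$). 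I would therefore verify both this commutation and the distributive intersection identity directly from the explicit block-diagonal forms in equation~\eqref{eq:condexpform}, where all the mutually commuting expectations are simultaneously block-diagonalized and each acts as a partial trace followed by a normalized maximal mixture on its complementary tensor leg; in that picture $\E_\R$ already erases precisely the content distinguishing $\tilde\S\T$ from $\S\T$, which is the structural reason the identity holds.
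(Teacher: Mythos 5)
Your proposal follows essentially the same route as the paper's proof, with only a difference in packaging. The paper also (i) converts the surviving terms via Lemma \ref{lem:entropyforms} and data processing ($H(\S)_\rho - H(\S\T)_\rho \geq H(\S)_{\E_\R(\rho)} - H(\S\T)_{\E_\R(\rho)}$, i.e.\ data processing under the channel $\E_\R$ alone, followed by rewriting the resulting entropies with $\E_{\A}\E_\R = \E_{\A \cap \R}$ for commuting expectations), (ii) cancels the $\T$-side terms using $\T \subseteq \R$ (its statement $H(\T)_\rho - H(\S\cap\T)_\rho = H(\T)_{\E_\R(\rho)} - H(\S\cap\T)_{\E_\R(\rho)}$ is exactly your $D^{\tilde\S\cap\T}(\E_\T(\sigma)) = D^{\S\cap\T}(\E_\T(\rho))$), and (iii) invokes the same intersection identities $\S\cap\R = \tilde\S\cap\R$ and $(\S\T)\cap\R = (\S\cap\R)\T$. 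Your version, which applies data processing once with the composed channel $\E_{\tilde\S\T}\E_\R$ and then verifies the two operator identities, is interchangeable with the paper's; if anything it is written more carefully, since the paper never records $\tilde\S\cap\T = \S\cap\T$, which is needed to match the $H(\tilde\S\cap\T)$ term in the conclusion.

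The delicate point you isolate is genuine, and it is present, silently, in the paper as well: the step ``these terms become $H(\tilde\S)_{\E_\R(\rho)} - H(\tilde\S\T)_{\E_\R(\rho)}$'' uses both $[\E_{\tilde\S\T},\E_\R]=0$ and the modular identity $(\tilde\S\T)\cap\R = (\tilde\S\cap\R)\T$, neither of which is among the hypotheses or proved. However, your proposed repair does not work as stated. Commuting conditional expectations cannot in general be simultaneously put into the block form of equation \eqref{eq:condexpform} with each acting as a partial trace plus maximal mixture on a common tensor leg: the pinchings $\E_\X$ and $\E_\Z$ onto two mutually unbiased bases of a qubit commute (both compositions equal $\E_{\CC 1}$), yet $\X$ and $\Z$ admit no such common block or tensor-leg structure---and MUB configurations are precisely the intended use cases of this paper. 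A correct repair is either to add $[\E_{\tilde\S\T},\E_\R]=0$ and the modular law as explicit hypotheses (harmless in the paper's applications), or to prove the single identity your argument actually needs, $\E_{\tilde\S\T}\E_\R\E_{\S\T} = \E_{\tilde\S\T}\E_\R$. Since conditional expectations are orthogonal projections for the trace inner product, taking adjoints shows this identity is equivalent to the containment $\E_\R(\tilde\S\T) \subseteq \S\T$; that containment follows from the $\R$-bimodule property of $\E_\R$ on simple products $\tilde{s} t$ (with $\tilde{s} \in \tilde\S$, $t \in \T \subseteq \R$, one has $\E_\R(\tilde{s}t) = \E_\R(\tilde{s})t = \E_{\tilde\S\cap\R}(\tilde{s})t \in (\S\cap\R)\T$), but still requires an argument for general alternating words in $\tilde\S$ and $\T$, which is where the real residual work lies for both your proof and the paper's.
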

\begin{proof}
First, $H(\S)_\rho - H(\S \T)_\rho \geq H(\S)_{\E_{\R}(\rho)} - H(\S \T)_{\E_{\R}(\rho)}$ by Lemma \ref{lem:entropyforms} and data processing. Since $\R \cap \T = \T$, $(\S \T) \cap \R = (\S \cap \R) \T$. Since $\S \cap \R = \tilde{\S} \cap \R$, these terms become $H(\tilde{\S})_{\E_{\R}(\rho)} - H(\tilde{\S} \T)_{\E_{\R}(\rho)}$. Also,  $\R \cap \T = \T$ implies that $H(\T)_\rho - H(\S \cap \T)_\rho = H(\T)_{\E_{\R}(\rho)} - H(\S \cap \T)_{\E_{\R}(\rho)}$.
\end{proof}
\noindent Lemma \ref{lem:swap} allows us to expand $\S$ at the cost of degrading the state. We go from a picture with a cleaner state but less access by the party controlling $\S$ to one with a noisier but more accessible state.

\subsection{Squashed Conditional Mutual Information} \label{sec:sqcproofs}
Here we prove the properties listed in section \ref{sec:measures}.
\begin{lemma} \label{lem:convex} (property \ref{entlprop1})
Let $\R \subseteq \S, \T \subset \M$ be in commuting square. $I_{sq}(\S : \T)_{\rho}$ and $I_{conv}(\S : \T)_\rho$ are convex in $\rho$.
\end{lemma}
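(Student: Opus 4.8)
The plan is to handle the two measures separately: $I_{conv}$ by the standard convex-roof argument, and $I_{sq}$ by adapting the Christandl--Winter convexity proof for squashed entanglement. For both it suffices to treat a two-term mixture $\rho = \lambda \rho_0 + (1-\lambda)\rho_1$ and exploit that each measure is an infimum, so that combining near-optimal witnesses for $\rho_0$ and $\rho_1$ produces an admissible witness for $\rho$ whose value is the corresponding convex combination (up to $\epsilon$). The only genuinely nontrivial point will be reconciling the convexity construction for $I_{sq}$ with the constraint from Remark \ref{rem:sqfactor} that the conditioning algebra $\tilde\S \cap \tilde\T$ stay a factor.

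For $I_{conv}$ the argument is immediate. Fixing $\epsilon>0$, I would pick decompositions $\rho_0 = \sum_y q_y\,\sigma_y$ and $\rho_1 = \sum_z r_z\,\eta_z$ that are $\epsilon$-optimal for $I_{conv}$. The merged ensemble $\{(\lambda q_y,\sigma_y)\} \cup \{((1-\lambda) r_z,\eta_z)\}$ is then a valid decomposition of $\rho$ whose averaged cost equals $\lambda$ times the first plus $(1-\lambda)$ times the second. Since $I_{conv}$ is an infimum over decompositions, this gives $I_{conv}(\S:\T)_\rho \le \lambda\, I_{conv}(\S:\T)_{\rho_0} + (1-\lambda)\, I_{conv}(\S:\T)_{\rho_1} + \epsilon$, and I let $\epsilon\to 0$.

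For $I_{sq}$, I would take $\epsilon$-optimal extensions $\tilde\rho_i \in S_1(\S\T \otimes C_i)$ with $\tilde\S_i = \S\otimes C_i$, $\tilde\T_i = \T\otimes C_i$, each $C_i$ a factor by Remark \ref{rem:sqfactor}, and $\E_{\S\T}(\tilde\rho_i) = \rho_i$. Embedding both $C_i$ into a common factor $C_\ast$ (padding with a fixed ancilla state), I form the conditioning algebra $\tilde C = \mathbb{M}_2 \otimes C_\ast$ and the block-diagonal state $\tilde\rho = \lambda\,\tilde\rho_0 \otimes \ketbra{0} + (1-\lambda)\,\tilde\rho_1 \otimes \ketbra{1}$, with the flag carried on the $\mathbb{M}_2$ leg. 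Setting $\tilde\S = \S\otimes \tilde C$ and $\tilde\T = \T\otimes \tilde C$, one checks $\E_{\tilde\S}\E_{\tilde\T} = (\E_\S \E_\T)\otimes\id = \E_{\tilde C} = \E_{\tilde\S \cap \tilde\T}$, so this is a commuting square; since $\tilde C$ is a factor, $\tilde\s = \S = \s$ and $\tilde\t = \T = \t$, and $\E_{\S\T}(\tilde\rho) = \rho$ leaves expectations on $\s\t$ unchanged, so $\tilde\rho$ is admissible. Because $\tilde\rho$ is block-diagonal in the flag and the flag sits inside each of $\tilde\S, \tilde\T, \tilde\S\tilde\T, \tilde C$, every one of the four subalgebra entropies splits as $H(\cdot)_{\tilde\rho} = H(\lambda) + \sum_i p_i H(\cdot)_{\tilde\rho_i}$ with $p_0=\lambda$, $p_1=1-\lambda$; in $I = H(\tilde\S) + H(\tilde\T) - H(\tilde\S\tilde\T) - H(\tilde C)$ the four copies of $H(\lambda)$ cancel under the $[+\,+\,-\,-]$ sign pattern, giving $I(\tilde\s:\tilde\t \mid \tilde C)_{\tilde\rho} = \sum_i p_i\, I(\tilde\s:\tilde\t \mid C_i)_{\tilde\rho_i}$. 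Hence $I_{sq}(\S:\T)_\rho \le \tfrac12 \sum_i p_i\, I(\tilde\s:\tilde\t \mid C_i)_{\tilde\rho_i} \le \sum_i p_i\,(I_{sq}(\S:\T)_{\rho_i} + \epsilon)$, and $\epsilon\to 0$ finishes.

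The main obstacle, and the one place where this departs from the tensor-factor proof, is precisely the interaction of the classical flag with the factor constraint: a genuinely \emph{central} flag would land in both $\tilde\s$ and $\tilde\t$ and violate $\tilde\s = \s$. My fix is to carry the flag as the diagonal of the matrix factor $\mathbb{M}_2 \subseteq \tilde C$ while keeping the global state block-diagonal, so that it behaves classically at the level of entropies (still contributing the cancelling $H(\lambda)$ terms) without introducing a center into $\tilde\S \cap \tilde\T$. The rest is bookkeeping --- aligning the two conditioning factors inside a common $C_\ast$ without perturbing the individual conditional mutual informations, and confirming that the commuting-square and $\tilde\s=\s,\tilde\t=\t$ conditions survive the embedding.
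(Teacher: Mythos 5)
Your proof is correct and follows essentially the same route as the paper: for $I_{conv}$, the standard merging of (near-)optimal ensembles, and for $I_{sq}$, attaching a classical flag to the conditioning system and extending each component of the mixture separately, with the four flag-entropy terms cancelling in the $[+\,+\,-\,-]$ pattern. The only difference is that you carefully spell out a point the paper glosses over --- the paper simply says ``attach an extra classical system,'' whereas you correctly observe that the flag must be carried as a block-diagonal state inside a matrix factor $\mathbb{M}_2 \otimes C_\ast$ so that $\tilde{\S}\cap\tilde{\T}$ stays a factor and the constraints $\tilde{\s}=\s$, $\tilde{\t}=\t$ of Remark \ref{rem:sqfactor} are respected.
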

\begin{proof}
Let $\rho = \sum_x p_x \rho_x$ for some $x \in 1...n$. The minimizing $\tilde{M}$ system in $I_{sq}$ may attach an extra classical system such that $\sum_x p_x \rho_x \otimes \ketbra{x}$, and it is free to then extend each $\rho_x \otimes \ketbra{x}$ separately. Therefore,
\begin{equation}
I_{sq}(\S : \T)_\rho \leq \sum_x p_x I_{sq}(\S : \T)_{\rho_x} \leq \max_x I_{sq}(\S : \T)_{\rho_x}.
\end{equation}
Similarly, we note that the convex decomposition optimized over in $I_{conv}$ may replace it by $\sum_x p_x I_{conv}(\S : \T)_{\rho_x}$.
\end{proof}

\begin{lemma} \label{lem:puresq} (property \ref{entlprop2})
Let $\R \subseteq \S, \T \subsetneq \M \subset \BB(H)$ be subalgebras in commuting square for some Hilbert space $H$, $\rho \in S_1(\M)$ be a density such that $\E_{\S \T}(\rho) = \ketbra{\psi}$ be pure for some $\ket{\psi} \in H$, and $\S \T$ be a factor. Then
\begin{equation}
I_{sq}(\S : \T)_{\ketbra{\psi}} = I_{conv}(\S : \T)_{\ketbra{\psi}} = \frac{1}{2} \I(\S : \T)_{\ketbra{\psi}}.
\end{equation}
\end{lemma}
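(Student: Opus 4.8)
The plan is to treat all three quantities as functions of the pure state $\ketbra{\psi}=\E_{\S\T}(\rho)$ and to compute each explicitly. First I would record the basic reduction: for $\N\in\{\S,\T,\S\T,\S\cap\T\}$ one has $H(\N)_\rho=H(\E_\N(\E_{\S\T}(\rho)))$, so $I(\S:\T)_\rho$ depends on $\rho$ only through $\E_{\S\T}(\rho)$, and I may work directly with $\ketbra{\psi}$ on the factor $\S\T$. Since $\ketbra{\psi}$ is pure and $\S\T$ is a factor with $\S\cap\T=\CC 1$, we have $H(\S\T)_{\ketbra{\psi}}=0$ and $H(\S\cap\T)_{\ketbra{\psi}}=0$, so the target value simplifies to $I(\S:\T)_{\ketbra{\psi}}=H(\S)_{\ketbra{\psi}}+H(\T)_{\ketbra{\psi}}$.

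For the convex-roof measure I would use a push-forward/extreme-point argument. Any decomposition $\ketbra{\psi}=\sum_x p_x\rho_x$ with $\rho_x\in S_1(\M)$ maps under the channel $\E_{\S\T}$ to $\ketbra{\psi}=\E_{\S\T}(\ketbra{\psi})=\sum_x p_x\E_{\S\T}(\rho_x)$; because $\ketbra{\psi}$ is a rank-one, hence extreme, density, this forces $\E_{\S\T}(\rho_x)=\ketbra{\psi}$ for every $x$. Since $I(\S:\T)_{\rho_x}$ depends on $\rho_x$ only through $\E_{\S\T}(\rho_x)$, each summand equals $I(\S:\T)_{\ketbra{\psi}}$, whence $I_{conv}(\S:\T)_{\ketbra{\psi}}=\tfrac12 I(\S:\T)_{\ketbra{\psi}}$. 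Together with property \ref{entlprop5} this gives $I_{sq}\le I_{conv}=\tfrac12 I(\S:\T)_{\ketbra{\psi}}$, which settles the upper bound for $I_{sq}$.

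The crux is the matching lower bound $I_{sq}(\S:\T)_{\ketbra{\psi}}\ge\tfrac12 I(\S:\T)_{\ketbra{\psi}}$, for which I would show that \emph{every} admissible extension realizes the same value. By Remark \ref{rem:sqfactor} I may assume the extension lives on $\tilde{\M}\cong\M\otimes\C$ with $\C=\tilde{\S}\cap\tilde{\T}$ a factor, $\tilde{\S}=\S\otimes\C$ and $\tilde{\T}=\T\otimes\C$, so that $\tilde{\s}=\S$ and $\tilde{\t}=\T$ as required. The constraint that expectations on $\s\t=\S\T$ be unchanged forces the $\M$-marginal of $\tilde{\rho}$ to equal $\ketbra{\psi}$; as this marginal is pure, $\tilde{\rho}=\ketbra{\psi}\otimes\omega$ for some $\omega\in S_1(\C)$. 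I would then evaluate the generalized CMI $H(\tilde{\S})+H(\tilde{\T})-H(\tilde{\S}\tilde{\T})-H(\tilde{\S}\cap\tilde{\T})$ on this product: additivity of von Neumann entropy across $\M\otimes\C$ contributes $H(\omega)$ to each of $H(\tilde{\S})$ and $H(\tilde{\T})$, while $H(\tilde{\S}\tilde{\T})=H(\M)_{\ketbra{\psi}}+H(\omega)=H(\omega)$ and $H(\tilde{\S}\cap\tilde{\T})=H(\omega)$. The $\omega$-terms cancel and the value collapses to $H(\S)_{\ketbra{\psi}}+H(\T)_{\ketbra{\psi}}=I(\S:\T)_{\ketbra{\psi}}$, independent of $\omega$, so the infimum equals $I(\S:\T)_{\ketbra{\psi}}$ and $I_{sq}(\S:\T)_{\ketbra{\psi}}=\tfrac12 I(\S:\T)_{\ketbra{\psi}}$.

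I expect the main obstacle to be justifying the structural reduction rather than the entropy bookkeeping: one must verify that the commuting-square constraints together with $\tilde{\s}=\s$ and $\tilde{\t}=\t$ genuinely pin the extension to the tensor form of Remark \ref{rem:sqfactor}, and that the ``pure marginal implies product state'' fact applies verbatim once the ambient algebra is identified with $\M\otimes\C$. It is precisely here that the hypotheses $\S\cap\T=\CC 1$ and that $\S\T$ be a factor are used: they guarantee $H(\S\cap\T)_{\ketbra{\psi}}=0$, let $\C$ split off cleanly as a tensor factor, and prevent any nontrivial center of $\tilde{\S}\cap\tilde{\T}$ from leaking into $\tilde{\s}$ or $\tilde{\t}$.
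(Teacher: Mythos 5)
Your proposal follows essentially the same route as the paper: for $I_{conv}$, purity kills all nontrivial convex decompositions; for $I_{sq}$, Remark \ref{rem:sqfactor} reduces the infimum to tensor extensions $\M\otimes\C$, the constraint that expectations on $\S\T$ are unchanged plus purity of the $\S\T$-marginal forces $\tilde{\rho}=\ketbra{\psi}\otimes\omega$, and then the extension contributes nothing. (Your phrasing of the factorization step --- pure marginal implies product --- is actually cleaner than the paper's, which asserts purity of $\tilde{\rho}$ itself without justification; and you correctly flag, as the paper does not, that one must still check the constraints really pin $\tilde{\S},\tilde{\T}$ to the tensor form $\S\otimes\C,\T\otimes\C$.)

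There is, however, a systematic bookkeeping error you should fix: under this paper's conventions, $H(\N)_\rho = H(\E_\N(\rho))$ where $\E_\N(\rho)$ is the density in the ambient algebra, so for $\S\cap\T=\CC 1$ one has $\E_{\CC 1}(\rho)=\id/|\S\T|$ and hence $H(\S\cap\T)_{\ketbra{\psi}}=\log|\S\T|$, not $0$; likewise $H(\tilde{\S}\cap\tilde{\T})_{\ketbra{\psi}\otimes\omega}=\log|\S\T|+H(\omega)$, not $H(\omega)$. Consequently your claimed common value $I(\S:\T)_{\ketbra{\psi}}=H(\S)_{\ketbra{\psi}}+H(\T)_{\ketbra{\psi}}$ is wrong; the correct value is $H(\S)_{\ketbra{\psi}}+H(\T)_{\ketbra{\psi}}-\log|\S\T|$. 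You can check this against the paper's own conventions: $I(\CC 1:\CC 1\subseteq\M)_{\ketbra{\psi}}=\log|M|$, and equation \eqref{eq:ucrisq} gives $I_{sq}(\X:\Z)_{\ket{\uparrow_Y}}=1/2$, whereas your formula would give $1$. Fortunately the error enters both the unextended target value and the extension value by exactly the same $\log|\S\T|$ offset, so it cancels in the comparison and your chain of equalities $I_{sq}=I_{conv}=\tfrac12 I(\S:\T)_{\ketbra{\psi}}$ --- the actual content of the lemma --- remains valid once the entropies are recomputed consistently.
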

\begin{proof} (of Lemma \ref{lem:puresq})
The equality is obvious for $I_{conv}$, as there is no way to rewrite a pure state as a non-trivial convex combination.

For $I_{sq}$, via Remark \ref{rem:sqfactor}, we may assume that the infimum involves a factor in tensor position with $\S \T$. Since $\E_{\S \T}(\tilde{\rho}) = \ketbra{\psi}$ is fixed, it must remain pure. Since $\tilde{\rho}$ is pure, it must have support on only one component of the center of $\S \T$. Hence we can find a diagonal matrix block $M_i$ that is isomorphic to a factor containing the support of $\E_{\S \T}(\tilde{\rho})$. Hence we can write $\tilde{\rho} = \ketbra{\psi}^{M_i} \otimes \sigma^{C}$, where $C$ is the extension. The extension then becomes irrelevant.
\end{proof}

\begin{lemma} \label{lem:maxblock}
Let $\S, \T \subset \S \T = \M$ form a commuting square with $\M = \oplus_i (\M_i \otimes \CC 1_{m_i})$. Then $\sup_\rho I_{conv}(\S : \T)_\rho$, $\sup_\rho I(\S : \T)_\rho$, and $\sup_\rho I_{sq}(\S : \T)_\rho$ are achieved by $\rho = P_i \rho P_i$ for some $i$.
\end{lemma}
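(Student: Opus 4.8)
The plan is to reduce the maximization over $\rho\in S_1(\M)$ to densities supported on a single central block of $\M=\S\T$. Writing $\M=\oplus_i(\M_i\otimes\CC1_{m_i})$ with central projections $P_i$, every density has the form $\rho=\oplus_i\lambda_i(\sigma_i\otimes 1_{m_i}/m_i)$ with $\sigma_i\in S_1(\M_i)$, $\lambda_i\ge 0$, and $\sum_i\lambda_i=1$; the single-block densities $\rho=P_i\rho P_i$ are exactly those for which $\lambda$ is a point mass.

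For $I_{conv}$ and $I_{sq}$ I would argue by convexity alone. By property \ref{entlprop1} both are convex in $\rho$, and $S_1(\M)$ is compact and convex, so each attains its supremum at an extreme point. The extreme points of $S_1(\M)$ are precisely the states $\ketbra{\phi}\otimes 1_{m_i}/m_i$ with $\ket{\phi}$ a unit vector in a single block $\M_i$: a non-degenerate $\lambda$ exhibits $\rho$ as a proper mixture of single-block states, and within a block the multiplicity factor is forced to be $1_{m_i}/m_i$ so that only the purity of $\sigma_i$ is free. Each such extreme point is of the form $P_i\rho P_i$, which settles these two cases; this half uses nothing beyond convexity and the classification of extreme points.

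The plain CMI is where the content lies, and I would treat it by a direct block computation. Writing $\S|_i=P_i\S P_i$, etc., the key step is that $\E_\S,\E_\T$, and hence $\E_{\S\cap\T}=\E_\S\E_\T$, respect the central decomposition, so that $\E_\S(\rho)=\oplus_i\lambda_i\,\E_{\S|_i}(\rho_i)$ and likewise for $\T,\M,\S\cap\T$. Feeding the block-entropy identity $H(\oplus_i\lambda_i\omega_i)=H(\lambda)+\sum_i\lambda_i H(\omega_i)$ into all four terms of $I=H(\S)+H(\T)-H(\M)-H(\S\cap\T)$, the four copies of $H(\lambda)$ cancel with the signs $[+,+,-,-]$, leaving
\begin{equation}
I(\S:\T)_\rho=\sum_i\lambda_i\, I(\S|_i:\T|_i)_{\rho_i}\ \le\ \max_i\ \sup_{\sigma\in S_1(\M_i)} I(\S|_i:\T|_i)_{\sigma}.
\end{equation}
The upper bound is attained by putting all of $\lambda$ on the optimal block and optimizing within it, that is, by a density $\rho=P_i\rho P_i$, which completes the plain-$I$ case.

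The main obstacle is exactly the block-preservation step for $I$: $\E_\S$ and $\E_\T$ commute with the $P_i$ only when the center $Z(\M)$ is absorbed into the subalgebras, for instance when $Z(\M)\subseteq\S\cap\T$. If instead a central projection of $\M$ lies in $\T$ but not $\S$ (or vice versa), then $\E_\S$ mixes the blocks, the $H(\lambda)$ terms no longer cancel, and $I$ acquires a Holevo-type classical correlation across blocks that is absent inside any single block, so the reduction can genuinely fail for the plain CMI without a hypothesis on the central structure. I would therefore either verify $Z(\M)\subseteq\S\cap\T$ in the squares of interest, or, for the application computing $I_{sq}^{\max}$, rely solely on the convex measures, whose extreme-point argument is unconditional.
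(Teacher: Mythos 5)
Your treatment of $I_{sq}$ and $I_{conv}$ is essentially the paper's proof: the paper writes $\rho = \oplus_i p_i ((P_i \rho_i P_i) \otimes \id_i/m_i)$ and applies convexity (Lemma \ref{lem:convex}) to get $I_{*}(\S:\T)_\rho \leq \max_i I_{*}(\S:\T)_{\rho_i}$ for $I_{*} \in \{I_{sq}, I_{conv}\}$; your extreme-point refinement is the same idea pushed one step further and adds nothing essential.

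The substantive content of your proposal is the caveat about plain $I$, and your suspicion is justified. Note that the paper's displayed inequality is explicitly restricted to $I_{*} \in \{I_{sq}, I_{conv}\}$, so the clause $\sup_\rho I(\S:\T)_\rho$ in the statement is never actually proved there --- and it is false without a hypothesis of the kind you name. Concretely, take $\M = M_2 \oplus M_2 \cong \CC^2 \otimes M_2$, $\S = \CC^2 \otimes \CC 1$ (the center of $\M$), and $\T = \CC 1 \otimes M_2$. This is a commuting square with $\S \T = \M$ and $\S \cap \T = \CC 1$, and for $\rho = \oplus_i \lambda_i \ketbra{i} \otimes \sigma_i$ a direct computation of the four entropies gives
\begin{equation*}
I(\S : \T \subseteq \M)_\rho = H\Big(\sum_i \lambda_i \sigma_i\Big) - \sum_i \lambda_i H(\sigma_i) \pl,
\end{equation*}
the Holevo information of the block ensemble. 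This vanishes on every single-block state $P_i \rho P_i$, yet equals $\log 2$ at $\rho = \frac{1}{2}(\ketbra{0}\otimes\ketbra{0} + \ketbra{1}\otimes\ketbra{1})$ --- exactly the cross-block classical correlation you predicted (and it also confirms that plain $I$ is not convex, which is why the paper's convexity argument cannot cover it). Under your extra hypothesis that the center of $\M$ lies in $\S \cap \T$, all four conditional expectations commute with the central projections $P_i$, your $H(\lambda)$-cancellation is valid, and the plain-$I$ clause does hold; without it, that clause should simply be dropped from the statement. Your fallback position is also the right one for the paper's purposes: Theorem \ref{thm:yesquantumsq} invokes the lemma only for the convex measures and after restricting to pure states, which automatically live in a single block, and Lemma \ref{lem:allcomm} likewise needs only $I_{sq}$ and $I_{conv}$, so nothing downstream is damaged.
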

\begin{proof}
within $\M$, $\rho = \oplus_i p_i ((P_i \rho_i P_i) \otimes \id_i / m_i)$ for any given $\rho$, where $p_i = \tr(P_i \rho P_i)$, so that $\rho_i$ is a normalized density. Since $\rho$ is a convex combination,
\begin{equation}
I_{*}(\S : \T)_\rho \leq \sum_i p_i I_{*}(\S : \T)_{\rho_i} \leq \max_i I_{*}(\S : \T)_{\rho_i},
\end{equation}
where $I_{*} \in \{I_{sq}, I_{conv}\}$.
\end{proof}

\begin{lemma} \label{lem:allcomm} (property \ref{entlprop3})
Let $\S, \T \subseteq \S \T$ form a commuting square. If $\S  \T$ is a finite-dimensional commutative von Neumann algebra or if $\rho$ is a convex combination of orthogonal states, then $I_{sq}(\S : \T)_{\rho} = I_{conv}(\S : \T)_{\rho} = 0$ for all $\rho$.
\end{lemma}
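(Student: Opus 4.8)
The plan is to collapse both measures onto the convex roof and then exhibit a decomposition of $\rho$ into states of vanishing generalized CMI. Since $\S,\T\subseteq\S\T$ is a commuting square, Corollary \ref{cor:ssa} gives $I(\S:\T)_\sigma\ge 0$ for every density $\sigma$, so $I_{sq}$ and $I_{conv}$, being infima of averages of such quantities, are non-negative. By property \ref{entlprop5} we have $0\le I_{sq}(\S:\T)_\rho\le I_{conv}(\S:\T)_\rho$, so it is enough to prove $I_{conv}(\S:\T)_\rho=0$. Because $I_{conv}$ is an infimum of $\frac12\sum_x p_x\, I(\S:\T)_{\rho_x}$ over convex decompositions $\rho=\sum_x p_x\rho_x$ with non-negative summands, it suffices to produce \emph{one} decomposition in which $I(\S:\T)_{\rho_x}=0$ for every $x$ with $p_x>0$. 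Throughout I use $\S\cap\T=\CC 1$, inherited from the definitions of $I_{sq}$ and $I_{conv}$.

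The core computation is that the atoms of $\M:=\S\T$ carry no generalized CMI in the commutative case. Write $\M\cong\ell_\infty(\Omega)$ and decompose $\rho=\sum_\omega \rho(\omega)\,e_\omega$ over the minimal projections (point masses) $e_\omega$. The hypotheses $\S\T=\M$ and $\S\cap\T=\CC 1$, together with the commuting-square identity $\E_\S\E_\T=\E_{\S\cap\T}$ (which here is the global average), force the partitions underlying $\S$ and $\T$ to split $\Omega$ as a product grid $\Omega\cong\Omega_\S\times\Omega_\T$ on which $\E_\S$ and $\E_\T$ average over the complementary coordinate. For a single atom this gives $H(\M)_{e_\omega}=0$ (pure in $\M$), $H(\S)_{e_\omega}=\log|\Omega_\T|$, $H(\T)_{e_\omega}=\log|\Omega_\S|$ and $H(\S\cap\T)_{e_\omega}=\log|\Omega|$, whence
\[ I(\S:\T)_{e_\omega}=H(\S)_{e_\omega}+H(\T)_{e_\omega}-H(\M)_{e_\omega}-H(\S\cap\T)_{e_\omega}=\log|\Omega_\T|+\log|\Omega_\S|-0-\log(|\Omega_\S|\,|\Omega_\T|)=0. \]
So the atomic decomposition witnesses $I_{conv}(\S:\T)_\rho=0$, settling the commutative case.

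For the second hypothesis I would run the same mechanism one level up. Writing $\rho=\sum_x p_x\rho_x$ with the $\rho_x$ mutually orthogonal, if the orthogonality of the supports is respected by $\E_\S,\E_\T,\E_{\S\cap\T}$ and $\E_\M$, then block-additivity of von Neumann entropy splits each $H(\N)_\rho$ as $\sum_x p_x H(\N)_{\rho_x}$ plus the \emph{same} Shannon term $H(\{p_x\})$; the $[+,+,-,-]$ sign pattern of $I$ cancels that term and yields $I(\S:\T)_\rho=\sum_x p_x\, I(\S:\T)_{\rho_x}$. I would then argue that each orthogonal component is itself classical for the square and reduces to the atomic computation above, so that $I(\S:\T)_{\rho_x}=0$; combined with the first-paragraph reduction this gives $I_{sq}=I_{conv}=0$.

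The main obstacle is exactly the second case: making precise the sense in which the orthogonal components are ``classical'' for the non-commuting pair $(\S,\T)$ so that each has vanishing CMI. This is genuinely delicate, since a slightly mixed $Y$-eigenstate of a qubit is a convex combination of the two orthogonal $Y$-eigenstates yet has $I_{sq}>0$ by continuity (property \ref{entlprop4}); hence the hypothesis must be read as providing an orthogonal decomposition into states diagonal for a maximal abelian subalgebra compatible with the commuting square (equivalently, into $\S$- or $\T$-eigenstate-like pieces), and the crux is verifying that the commuting-square condition forces the per-component cancellation just as in the atomic computation. Checking that $\E_\S,\E_\T$ and $\E_{\S\cap\T}$ genuinely preserve the chosen orthogonality, so that the Shannon terms cancel rather than contribute, is the one step that needs care.
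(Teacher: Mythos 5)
Your handling of the commutative case is correct, but it takes a genuinely different route from the paper's. The paper reduces to pure states (Lemma \ref{lem:maxblock}, convexity, Lemma \ref{lem:puresq}) and then invokes the duality Theorem \ref{thm:dual}: since $\S\T$ is commutative, $\S\T \subseteq \S' \cap \T'$, so a pure state is fixed by $\E_{\S'}$, $\E_{\T'}$, $\E_{\S'\T'}$ and $\E_{\S'\cap\T'}$; all four commutant entropies vanish and $I(\S:\T)_{\ketbra{\psi}} = I(\S':\T')_{\ketbra{\psi}} = 0$. You instead reduce to $I_{conv}$ via $0 \le I_{sq} \le I_{conv}$ and exhibit the atomic decomposition as a witness, computing $I(\S:\T)_{e_\omega} = 0$ per atom. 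Your route is more elementary: it avoids Theorem \ref{thm:dual} entirely (whose hypotheses --- co-commuting square, $\M$ a factor --- the paper does not actually verify in this application), at the cost of the product-grid claim, which you assert rather than prove. That claim is true and short: with $\S\cap\T=\CC 1$, the commuting square forces $|A_i \cap B_j| = |A_i|\,|B_j|/|\Omega|$ for every block $A_i$ of the $\S$-partition and $B_j$ of the $\T$-partition, while $\S\T = \M$ forces $|A_i \cap B_j| \le 1$; hence every intersection is a singleton and $|A_i|\,|B_j| = |\Omega|$, which is exactly what your per-atom cancellation uses (the full grid identification is not even needed).

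The genuine gap is the second hypothesis, and your diagnosis of it is actually sharper than your hedged language suggests: read literally, \emph{every} density is a convex combination of orthogonal states (its spectral decomposition), so the statement would assert $I_{sq} \equiv 0$, contradicting Theorem \ref{thm:yesquantumsq}; your example $\rho_\epsilon = \frac{1+\epsilon}{2}\ketbra{\uparrow_Y} + \frac{1-\epsilon}{2}\ketbra{\downarrow_Y}$, which has $I_{sq}(\X:\Z)_{\rho_\epsilon} > 0$ for $\epsilon$ near $1$ by \eqref{eq:ucrisq} and property \ref{entlprop4}, makes this concrete. So no proof of the literal statement exists, and you should say so rather than search for one. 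Note the paper's own proof shares this gap --- its argument runs through ``since $\S\T$ is a commutative subalgebra of $\M$,'' which covers only the first hypothesis --- so the second case evidently must be read as orthogonality compatible with the square (e.g., the $\rho_x$ supported on orthogonal central blocks of $\S\T$ as in Lemma \ref{lem:maxblock}, or the $\rho_x$ being normalized atoms of a commutative algebra satisfying the first hypothesis). Under such a reading, your witness-decomposition argument closes the case with no block-additivity needed, since $I_{conv} \le \frac{1}{2}\sum_x p_x I(\S:\T)_{\rho_x}$ is an upper bound by definition. Indeed, you should drop the additivity route altogether: its hypothesis fails even in the commutative case, because $\E_\S$ and $\E_\T$ do not preserve the orthogonality of atoms --- there $I(\S:\T)_\rho$ equals the classical mutual information of the two coordinates, which is nonzero for correlated $\rho$, so only the infimum $I_{conv}$, never $I$ itself, vanishes.
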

\begin{proof}
By Lemma \ref{lem:maxblock}, we need only prove this for pure $\rho = \ketbra{\psi}$. Let $\M$ be a minimal factor containing $\S \T$, and $\ketbra{\psi}$ is pure in $\M$ as well. Since $\S \T$ is a commutative subalgebra of $\M$, $\S \T \subseteq \S', \T'$. Applying Theorem \ref{thm:dual}, we have $I(\S' : \T')_{\ketbra{\psi}} = 0$, because $\ketbra{\psi}$ remains pure in each commutant subalgebra, and hence in their intersection and union as well, so all entropies in this expression are zero.
\end{proof}	

\begin{lemma} \label{lem:contiuoussq} (property \ref{entlprop4})
Let $\S, \T \subseteq \S \T$ be in co-commuting square with $\S \T = \M$ a factor. If $\| \rho - \eta \|_1 \leq \epsilon < 1$, then
\begin{equation}
|I_{sq}(\S : \T)_{\sigma} - I_{sq}(\S : \T)_{\eta}| \leq 12 \sqrt{\epsilon} \log |M| + 3 (1 + 2 \sqrt{\epsilon}) h \Big (\frac{1}{1 + 2 \sqrt{\epsilon}} \Big ) \pl,
\end{equation}
and
\begin{equation}
|I_{conv}(\S : \T)_{\sigma} - I_{conv}(\S : \T)_{\eta}| \leq 6 \sqrt{\epsilon} \log |M| + 3 (1 + 2 \sqrt{\epsilon}) h \Big (\frac{1}{1 + 2 \sqrt{\epsilon}} \Big ) \pl,
\end{equation}
where $h$ is the binary entropy function.
\end{lemma}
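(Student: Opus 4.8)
The plan is to handle $I_{sq}$ by transporting near-optimal extensions and $I_{conv}$ by matching near-optimal pure-state decompositions, in both cases reducing the estimate to the Alicki--Fannes--Winter (AFW) uniform continuity bound for conditional entropy, arranged so that only the bounded dimension $\log|M|$ enters and never the dimension of the a priori unbounded auxiliary system. The $\sqrt\epsilon$ scaling in both bounds will come from converting trace distance to fidelity and transporting along purifications via Uhlmann's theorem.

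For $I_{sq}$ I would fix $\delta>0$ and a feasible extension $(\tilde\S,\tilde\T,\tilde\rho)$ achieving $I_{sq}(\S:\T)_\rho$ to within $\delta$; by Remark \ref{rem:sqfactor} we may take $\tilde\M\cong\mathbb{M}_{|\S\T|}\otimes\F$ with $\F=\tilde\S\cap\tilde\T$ a factor. Since $\|\rho-\eta\|_1\le\epsilon$ and conditional expectations do not decrease fidelity, Fuchs--van de Graaf gives $F(\E_{\S\T}(\rho),\E_{\S\T}(\eta))\ge 1-\epsilon/2$. Purifying $\tilde\rho$, which is simultaneously a purification of the fixed marginal of $\rho$, and applying Uhlmann's theorem yields a purification of the corresponding marginal of $\eta$ at the same fidelity; tracing the purifying system back down produces an extension $\tilde\eta$ of $\eta$ on the \emph{same} algebras $\tilde\S,\tilde\T$ with $\tfrac12\|\tilde\rho-\tilde\eta\|_1\le 2\sqrt\epsilon$. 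Because only $\s,\t$ and the expectations in $\s\t$ are constrained and the algebras are held fixed, $\tilde\eta$ is genuinely feasible.

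The crux is then an extension-independent continuity bound for the generalized CMI $I(\tilde\s:\tilde\t\mid\F)_{\tilde\rho}=H(\tilde\S)+H(\tilde\T)-H(\tilde\M)-H(\F)$. Using Lemma \ref{lem:entropyforms} together with the complementary-channel and commutant identities of Lemma \ref{lem:petzrecov}, Lemma \ref{lem:envcondexp} and the duality of Theorem \ref{thm:dual}, I would rewrite this combination as a bounded number of genuine conditional-entropy differences in which the conditioned ``uncertain'' system has dimension at most $|M|$, so that $\dim\F$ never appears. Applying AFW term by term with effective half-trace-distance $2\sqrt\epsilon$, and using $2I_{sq}=\inf I$ together with $\tilde\eta$ being feasible, then produces the stated bound $12\sqrt\epsilon\log|M|+3(1+2\sqrt\epsilon)h\big(\tfrac{1}{1+2\sqrt\epsilon}\big)$ for $I_{sq}$ after letting $\delta\to0$ and symmetrizing $\rho\leftrightarrow\eta$.

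For $I_{conv}$ I would run the analogous argument on decompositions rather than extensions, working inside the factor provided by Lemma \ref{lem:maxblock}: take a near-optimal ensemble $\E_{\S\T}(\rho)=\sum_x p_x\ketbra{\psi_x}$, use a purification-based coupling to build an ensemble $\{(p_x,\phi_x)\}$ for $\E_{\S\T}(\eta)$ whose terms are $O(\sqrt\epsilon)$-close on average, and bound the change in $\tfrac12 I(\S:\T)_{\psi_x}=H(\S)_{\psi_x}\le\log|M|$ on each pure component by AFW before averaging; since no auxiliary extension and hence no complementary-channel recasting is needed, the dimension coefficient halves to $6\sqrt\epsilon\log|M|$ while the same $h$-term survives. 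The step I expect to be the main obstacle is the extension-independent continuity bound for the generalized CMI: in the tensor case $I(A:B|C)=H(A|C)-H(A|BC)$ already exhibits the bounded system $A$, whereas here I must first \emph{manufacture} such a representation for the subalgebra quantity, verifying that each conditional-entropy difference coming out of the complementary-channel identities carries only the factor $\log|M|$ and that the transport step preserves feasibility exactly; getting this bookkeeping to reproduce the precise constants is where the real work lies.
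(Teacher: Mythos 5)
Your transport step for $I_{sq}$ is sound and is essentially the paper's own: the paper also follows Christandl--Winter \cite{christandl_squashed_2004}, producing $2\sqrt{\epsilon}$-close purifications of $\rho$ and $\eta$ via Fuchs--van de Graaf/Uhlmann and pushing a near-optimal extension of $\rho$ onto $\eta$ by a channel $\Lambda : M' \rightarrow C$ on the purifying system (your Uhlmann phrasing is an equivalent bookkeeping of the same move, and feasibility of $\tilde{\eta}$ holds as you say). The genuine gap is exactly at the step you yourself flag as ``where the real work lies'': you never produce the extension-independent representation of the generalized CMI, and the tools you point to for it --- the complementary-channel identities of Lemma \ref{lem:petzrecov} and Lemma \ref{lem:envcondexp}, and the duality of Theorem \ref{thm:dual} --- are not the right ones and play no role in the paper's proof. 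The needed identity is elementary: since $H(\N)_\sigma = H(\E_\N(\sigma))$ by definition and $\E_{\N \C} = \E_\N \otimes \id_\C$ preserves the $\C$-marginal, one has $H(\N \C)_\sigma - H(\C)_\sigma = H(\M|\C)_{\E_{\N\C}(\sigma)}$ for each $\N \in \{\S, \T, \S\T\}$, whence
\[ I(\S\C : \T\C)_\sigma = H(\M|\C)_{\E_{\S\C}(\sigma)} + H(\M|\C)_{\E_{\T\C}(\sigma)} - H(\M|\C)_{\E_{\S\T\C}(\sigma)} \pl, \]
three ordinary conditional entropies of the bounded system $\M$ (here the factor assumption on $\S\T = \M$ is used). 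Conditional expectations are channels, so they do not increase the trace distance between $\sigma$ and $\tilde{\eta}$, and the Alicki--Fannes--Winter bound \cite{alicki_continuity_2004, winter_tight_2016} applied to each of the three terms with parameter $2\sqrt{\epsilon}$ yields exactly $12\sqrt{\epsilon}\log|M| + 3(1+2\sqrt{\epsilon})\,h\big(\tfrac{1}{1+2\sqrt{\epsilon}}\big)$. Without this identity, your argument is a plan rather than a proof; with it, no complementary channels or duality are needed at all.

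Your route for $I_{conv}$ also deviates from the paper's in a way that creates real problems. The paper represents the convex roof as the \emph{same} infimum as $I_{sq}$ but with $\C$ restricted to a classical algebra (a decomposition $\rho = \sum_x p_x \rho_x$ is the cq extension $\sum_x p_x \rho_x \otimes \ketbra{x}$), then repeats the identical argument; the improvement from $12$ to $6$ comes solely from Winter's sharper continuity constant $\epsilon' \log|A|$ (in place of $2\epsilon'\log|A|$) for conditional entropies with a \emph{classical} conditioning system. Your per-component matching has two defects: the ensemble you obtain for $\eta$ by measuring the purifying system of the transported purification has components that are generically \emph{mixed}, so the identity $\tfrac12 I(\S:\T)_{\psi_x} = H(\S)_{\psi_x}$ is unavailable on the $\eta$ side; and the components are only close \emph{on average}, so applying AFW ``before averaging'' is not justified term by term. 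The clean fix is precisely to keep the ensemble packaged as a cq state and invoke the classical-conditioning refinement --- i.e., the paper's argument. Note also that your stated reason for the halved coefficient (``no auxiliary extension is needed'') is not the actual source of the factor of two.
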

\begin{proof}
We use Remark \ref{rem:sqfactor} to assume that the extension is via a tensor factor $\C$. Let $\M$ be a minimal factor such that $\S \T \subseteq \M$. Then
\begin{equation} \label{eq:isqcondentform}
\begin{split}
I_{sq}(\S : \T)_\rho & = \inf_{\C, \sigma} \big (H(\S \C)_\sigma + H(\T \C)_\sigma - H(\S \T \C)_\sigma - H(\C)_\sigma \big ) \\
	& = \inf_{\C, \sigma} \big ( H(\M | \C)_{\E_{\S \C}(\sigma)} + H(\M | \C)_{\E_{\T \C}(\sigma)}
		- H(\M | \C)_{\E_{\S \T} \C}(\sigma) \big )
\end{split}
\end{equation}
Since $\M$ is a factor, $H(\M | \C)$ is an ordinary conditional entropy. The rest of this proof follows Winter and Christandl's original argument as in \cite{christandl_squashed_2004}. This goes by first using the relations between fidelity and trace distance with purifications $\rho^M = \tr_{M'}(\ketbra{\psi}^{M M'})$ and $\eta^M = \tr_{M'}(\ketbra{\phi}^{M M'})$, such that $\|\ketbra{\psi} - \ketbra{\phi}^{M M'}\|_1 \leq 2 \sqrt{\epsilon}$.

For any $\sigma$ in the infimum, let $\Lambda : M' \rightarrow C$ be a quantum operation such that $(\id^M \otimes \Lambda)(\ketbra{\psi}) = \sigma^{M C}$, following the arguments of Christandl and Winter in \cite{christandl_squashed_2004}. Let $\tilde{\eta} = (\id^M \otimes \Lambda)(\ketbra{\phi})$, knowing $I_{sq}(\S : \T)_{} \leq I(\S' \C : \T' \C \subseteq \S \T \C)_{\tilde{\eta}}$. Then $\| \sigma^{M C} - \tilde{\eta}^{M C} \|_1 \leq 2 \sqrt{\epsilon}$ by monotonicity of the trace distance under application of quantum operations to both densities. We may further apply arbitrary conditional expectations to both densities. We then apply the Alicki-Fannes bound as derived in \cite{alicki_continuity_2004} and refined in \cite{winter_tight_2016} to each term in equation \eqref{eq:isqcondentform}, yielding
\begin{equation}
|I(\S' \C : \T' \C)_{\sigma} - I(\S' \C : \T' \C)_{\tilde{\eta}}| \leq 12 \sqrt{\epsilon} \log |M| + 3 (1 + 2 \sqrt{\epsilon}) h \Big (\frac{1}{1 + 2 \sqrt{\epsilon}} \Big ) \pl.
\end{equation}
We then use that $I(\S' \C : \T' \C)_{\tilde{\eta}} \geq I_{sq}(\S : \T)_\eta$ and assume without loss of generality that $I_{sq}(\S : \T)_\rho \leq I_{sq}(\S : \T)_\eta$.

We note that $I_{conv}$ admits a similar extension to that of $I_{sq}$, but where $\C$ is restricted to a classical algebra. We may thereby apply the same argument, but with the slightly better bound on classically conditioned entropies.
\end{proof}
Property \ref{entlprop6} follows from Proposition \ref{prop:extmonogamy}.

\begin{lemma} \label{lem:sqfaith} (property \ref{entlprop8})
Let $\A, \B \subset \S \T$ be factors such that $\S \subseteq \A$, $\T \subseteq \B$, $\A \otimes \B$ is valid, and $I_{sq}(\S' : \T')_\rho \leq \epsilon$. Then there exists a $k$-extension $\tilde{\sigma}^{A B_1 ... B_k}$, and for each $i \in 1...k$ there exists a subalgebra $\tilde{\T}_i \subseteq \B_i$ such that $\|\E_{\S \T}(\rho) - \E_{\S \tilde{\T}_i}(\tilde{\sigma})\|_1 \leq (k-1) \sqrt{\epsilon 2 \ln 2}$. Furthermore, there exists some separable $\sigma^{AB}$ such that $\|\E_{\S \T}(\rho^{AB}) - \E_{\S \T}(\sigma) \|_1 \leq  3.1 |B| \sqrt[4]{\epsilon}$, and some separable $\eta^{AB}$ such that $\|\E_{\S \T}(\rho^{AB}) - \E_{\S \T}(\eta) \|_2 \leq 12 \sqrt{\epsilon}$.
\end{lemma}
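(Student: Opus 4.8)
The plan is to mirror the Brand\~ao--Christandl--Yard faithfulness argument \cite{brandao_faithful_2011} and its Li--Winter refinement \cite{li_squashed_2018}, transported into the tensor structure $\A \otimes \B$ that the hypotheses supply and then pushed down through the channel $\E_{\S \T}$. First I would use Remark \ref{rem:sqfactor} to realize the infimum in $I_{sq}(\S' : \T')_\rho \leq \epsilon$ by an honest tensor extension $\tilde{\sigma}^{\S' \T' \C}$, so that the optimizing extension carries small generalized CMI $I(\s' : \t' | \S' \cap \T')_{\tilde{\sigma}}$ of order $\epsilon$ (the $\tfrac12$ in the definition of $I_{sq}$ absorbed). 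Because $\A \otimes \B$ is a valid tensor decomposition with $\S \subseteq \A$ and $\T \subseteq \B$, I can compare this generalized CMI to an ordinary conditional mutual information on the tensor legs after applying the appropriate conditional expectations, which is what makes the standard recovery and de Finetti tools available.

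Next I would build the approximate $k$-extension. Small CMI yields approximate recoverability: by Remark \ref{rem:recovery} (equivalently the universal recovery bound of \cite{junge_universal_2018}) there is a recovery map whose failure in fidelity, and hence in trace distance after Pinsker, is controlled by $\sqrt{2 \epsilon \ln 2}$. Iterating this recovery conditionally on $\C$ produces copies $\B_1, \dots, \B_k$ of the $\B$-leg, with subalgebras $\tilde{\T}_i \subseteq \B_i$, and telescoping the single-step errors along a triangle-inequality chain gives $\| \E_{\S \T}(\rho) - \E_{\S \tilde{\T}_i}(\tilde{\sigma}) \|_1 \leq (k-1)\sqrt{2 \epsilon \ln 2}$ for each $i$; this is exactly the first assertion and identifies $\tilde{\sigma}$ as the claimed $k$-extension.

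To obtain closeness to separable I would feed this approximately $k$-extendible state into the finite quantum de Finetti estimate used by Christandl and Winter \cite{christandl_squashed_2004}: a genuinely $k$-extendible state on $\A \otimes \B$ lies within $O(|B|^2/k)$ of the separable set, so combining with the $(k-1)\sqrt{\epsilon}$ extension error and optimizing the free parameter at $k \sim |B|\, \epsilon^{-1/4}$ balances the two contributions and yields a separable $\sigma$ with $\|\E_{\S\T}(\rho) - \E_{\S\T}(\sigma)\|_1 \leq 3.1\,|B|\sqrt[4]{\epsilon}$, using that $\E_{\S\T}$ is trace-contractive. The Hilbert--Schmidt bound $\|\E_{\S\T}(\rho) - \E_{\S\T}(\eta)\|_2 \leq 12\sqrt{\epsilon}$ I would obtain from the dimension-free $L_2$ version of the same de Finetti/separability estimate, which avoids the $k$-optimization and so scales as $\sqrt{\epsilon}$ without the factor $|B|$.

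The main obstacle I anticipate is the de Finetti step rather than the recovery step. In the subalgebra setting the copies $\tilde{\T}_i$ need not commute, so there is no literal symmetric extension to hand to the classical de Finetti theorem; the per-marginal closeness manufactured by iterated recovery must stand in for symmetric extendibility, and one must verify that the separable state produced in the ambient $\A \otimes \B$ descends correctly under $\E_{\S\T}$ while keeping the dimension dependence at $|B|$ rather than $|B|^2$. Tracking the commutant primes $\S', \T'$ through the duality of Theorem \ref{thm:dual} and Corollary \ref{cor:sqdual}, so that the object ultimately bounded is genuinely $\E_{\S\T}(\rho)$, is the remaining bookkeeping hazard.
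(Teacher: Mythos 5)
Your proposal contains a genuine gap, and it is precisely the step you flag yourself: the de Finetti/separability step. The deeper problem is that this gap should never arise, because you do not exploit the hypotheses $\S \subseteq \A$, $\T \subseteq \B$ with $\A \otimes \B$ a valid tensor decomposition the way the paper does. Under these hypotheses $\S$ and $\T$ commute and sit in tensor-separated factors, so $\S\T = \S \otimes \T$ and $\E_{\S\T} = \E_\S \otimes \E_\T$; and by Remark \ref{rem:sqfactor} the extensions optimized over in $I_{sq}(\S : \T)_\rho$ are exactly the tensor extensions of the ordinary bipartite state $\E_{\S\T}(\rho)$ on $A \otimes B$. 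Hence $I_{sq}(\S : \T)_\rho$ \emph{equals} the ordinary squashed entanglement $E_{sq}(A:B)$ of $\E_{\S\T}(\rho)$, and the paper's proof is a one-line reduction: apply the faithfulness results of Brand\~ao--Christandl--Yard and Li--Winter \cite{brandao_faithful_2011, li_squashed_2018} verbatim to this bipartite state. All three conclusions --- the $k$-extension with error $(k-1)\sqrt{2\epsilon\ln 2}$, the trace-norm bound $3.1\,|B|\sqrt[4]{\epsilon}$, and the $L_2$ bound $12\sqrt{\epsilon}$ --- are literally the statements of those papers; nothing is re-derived.

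Your plan instead attempts to re-prove Li--Winter from scratch: iterated recovery maps to manufacture approximate $k$-extensions, then a de Finetti estimate with an optimization over $k$. Reproducing that machinery is not wrong in principle (it is where the constants $3.1$ and $12$, the $\epsilon^{1/4}$ scaling, and the factor $|B|$ actually come from), but your proposal leaves the decisive step unresolved: you concede that you do not know how to run the de Finetti argument when the copies $\tilde{\T}_i$ ``need not commute'' and there is ``no literal symmetric extension.'' Under the lemma's hypotheses that worry is vacuous --- the $\B_i$ are honest tensor copies, so symmetric extendibility is available --- but since your argument never invokes the tensor hypotheses to dispose of it, the proposal as written is incomplete at exactly the step carrying all of the quantitative content. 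Recognizing the reduction both closes your gap and makes the entire recovery-plus-de-Finetti apparatus unnecessary.
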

\begin{proof}
We begin by noting that $E_{sq}(A : B)_\rho$ admits such a $k$-extension, and norm bounds \cite{brandao_faithful_2011, li_squashed_2018}. In this case, $\S \otimes \T$ is valid, and $\E_\S(\rho^A) \otimes \E_\T(\rho^B) = \E_{\S \T}(\rho)$. By remark \ref{rem:sqfactor}, we actually have that $I_{sq}(\S : \T)_\rho = E_{sq}(A : B)_{\E_\S(\rho^A) \otimes \E_{\T}(\rho^B)}$. Hence we may trivially port the faithfulness of squashed entanglement.
\end{proof}

\begin{proof} (of Theorem \ref{thm:yesquantumsq})
By convexity of $I_{sq}$, its maximum is achieved on pure states for any fixed algebras. Hence via Lemma \ref{lem:puresq},
\begin{equation} \label{eq:maxblock}
\max_{\S, \T, \rho} I_{sq}(\S : \T)_\rho = \max_{\S, \T, \ketbra{\psi}} I(\S : \T \subset \S \T)_{\ketbra{\psi}}.
\end{equation}
By \ref{lem:maxblock}, we may restrict to the largest diagonal block in $\M$, which we will denote by $\M_0$ as an algebra or $M_0$ as a space of densities.

We will momentarily consider the expression $I(\S' : \T' \subseteq \M_0)_\rho$. Any addition to the algebras $\S'$ and $\T'$ that preserves the commuting square will affect $H(\S') - H(\S' \cap \T')$, or $H(\T') - H(\S' \cap \T')$. Via Lemma \ref{lem:entropyforms} and data processing, this does not increase $I(\S' : \T' \subseteq \M_0)_\rho$. Hence $\S' = \T' = \CC 1$ achieves the maximum value with any pure state with support in $M_0$, which is $\log |M_0|$. Given a pair of mutually unbiased bases $\X,\Z$ on $\M_0$, we may transform $\S' \rightarrow \X$, $\T' \rightarrow \Z$, but take $\ketbra{\psi}$ prepared in a third mutually unbiased basis $\Y$ of $M_0$. This maintains the $\log |M_0|$ value. $\S = \S' = \X$, and $\T = \T' = \Z$ are available for the maximum in equation \eqref{eq:maxblock}. Taking into account the factor of $1/2$ in $I_{sq}$, and noting that $I_{sq} \leq I_{conv} \leq \log |M_0| / 2$, we complete the proof.
\end{proof}

\subsection{Non-Increasing Transformations for Non-classical Measures}
\begin{proof} (of Corollary \ref{cor:sqindivops})
Consider the form,
\begin{equation}
I_{*}(\S : \T)_\rho = \inf_{\sigma, C} \big ( H(\S \C)_\sigma + H(\T \C)_\sigma - H(\S \T \C)_\sigma - H(\C)_\sigma \big )
\end{equation}
where $I_{*} \in \{I_{sq}, I_{conv}\}$ with the infimum subject to corresponding constraints. For any $\sigma$ and $C$, an individual operation does not affect $\sigma^C$, so it remains an individual operation. Hence $I(\S \C : \T \C)_\sigma \geq I(\tilde{\S} : \tilde{\T})_{\tilde{\sigma}}$ under the individual operation transforming $\S \C \rightarrow \tilde{\S} \C$, $\T \C \rightarrow \tilde{\T} \C$, and $\sigma \rightarrow \tilde{\sigma}$. The latter is a candidate in the infimum for $I(\tilde{\S} : \tilde{\T})_{\tilde{\rho}}$ under the individual operation transforming $\S \rightarrow \tilde{\S}$, $\T \rightarrow \tilde{\T}$, $\rho \rightarrow \tilde{\rho}$.
\end{proof}

\begin{proof} (of Theorem \ref{thm:entlcovar})
Let $C$ be the auxiliary system in the extension, which we can assume by remark \ref{rem:sqfactor} is in tensor position with $\S \T$. We recall
\begin{equation}
I_{*}(\S : \T)_\rho = \inf_{\C, \tilde{\rho}} H(\S \otimes \C)_{\tilde{\rho}} + H(\T \otimes \C)_{\tilde{\rho}} - H(\S \T \otimes \C)_{\tilde{\rho}} - H((\S \cap \T) \otimes \C)_{\tilde{\rho}}
\end{equation}
with some restrictions depending on the particular form of $I_{*}$, including that $\tr_C(\tilde{\rho}) = \rho$. Each $U$ in the averaging for $\E_\R$ extends to $U \otimes \id^C$ on the extended state. Since $\E_\S$ extends to $\E_{\S \otimes \C} = \E_{\S} \otimes \id^C$, it still commutes with conjugation by $U \otimes \id^C$. Similarly, each $U \otimes \id^C$ commutes with $\E_{\T \otimes \C}$, $\E_{\S \T \otimes \C}$, and $\E_{(\S \cap \T) \otimes \C}$. Because $U$ commutes with these conditional expectations and is unitary, $I(\S \C : \T \C)_{U \tilde{\rho} U^\dagger} = I(\S \C : \T \C)_{\tilde{\rho}}$. For each $\tilde{\rho}$ in the optimization of $I_{*}(\S : \T)_\rho$, $I_{*}(\S : \T)_{U \rho U^\dagger}$ achieves the same value with $U \tilde{\rho} U^\dagger$, so $I_{*}(\S : \T)_{U \rho U^\dagger} \leq I(\S : \T)_\rho$. By invertibility of $U$, they are equal.

By convexity of $I_{*}$,
\begin{equation}
I_{*}(\S : \T)_{\E_\R(\rho)} = I_{*}(\S : \T)_{\int U \rho U^\dagger d\mu(U)} \leq \int I_{*}(\S : \T)_{U \rho U^\dagger} d\mu(U) = I_{*}(\S : \T)_\rho.
\end{equation}
Returning to the explicit form,
\begin{equation}
\begin{split}
& H(\S \C)_{\E_{\R \C}(\tilde{\rho})} + H(\T \C)_{\E_{\R \C}(\tilde{\rho})}
		 - H(\S \T \C)_{\E_{\R \C}(\tilde{\rho})} - H((\S \cap \T) \C)_{\E_{\R \C}(\tilde{\rho})} \\
	= & H((\S \cap \R) \C)_{\tilde{\rho}} + H((\T \cap \R) \C)_{\tilde{\rho}}
		\\ & - H((\S \T \cap \R) \C)_{\tilde{\rho}} - H((\S \cap \T \cap \R) \C)_{\tilde{\rho}} \\
	= & H((\tilde{\S} \cap \R) \C)_{\tilde{\rho}} + H((\tilde{\T} \cap \R) \C)_{\tilde{\rho}}
		\\ & - H((\tilde{\S} \tilde{\T} \cap \R) \C)_{\tilde{\rho}} - H((\tilde{\S} \cap \tilde{\T} \cap \R) \C)_{\tilde{\rho}} \\
	= & H(\tilde{\S} \C)_{\E_{\R \C}(\tilde{\rho})} + H(\tilde{\T} \C)_{\E_{\R \C}(\tilde{\rho})}
		 - H(\tilde{\S} \T \C)_{\E_{\R \C}(\tilde{\rho})} - H((\tilde{\S} \cap \T) \C)_{\E_{\R \C}(\tilde{\rho})}
\end{split}
\end{equation}
for any $\tilde{\rho}$ and $\C$. We also have by the restrictions in either form of $I_{*}$ that $\E_{\R \C}(\tilde{\rho}) = (\E_{\R} \otimes \id)(\tilde{\rho})$ is an extension of $\E_\R(\rho)$ in $\S \T \C$. Hence $I_{*}(\S : \T)_{\E_\R(\rho)}$ and $I_{*}(\tilde{\S} : \tilde{\T})_{\E_\R(\rho)}$ optimize over the same set of states, and they achieve the same values on each. Therefore, they are equal, completing the Theorem.
\end{proof}
\begin{lemma}
Let $U$ be a unitary such that $U \S U^\dagger = \S$. Then $U$-conjugation commutes with $\E_\S$.
\end{lemma}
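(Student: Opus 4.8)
The plan is to invoke the uniqueness of the trace-preserving conditional expectation stated in the Notation subsection. I would define the candidate map $\Phi : \M \to \M$ by $\Phi(a) = U^\dagger \E_\S(U a U^\dagger) U$, i.e. the conjugate of $\E_\S$ by the automorphism $\mathrm{Ad}_U(a) = U a U^\dagger$. The entire goal reduces to showing $\Phi = \E_\S$, since unwinding this identity gives exactly $\E_\S(U a U^\dagger) = U \E_\S(a) U^\dagger$, which is the claim that $\mathrm{Ad}_U$ commutes with $\E_\S$. The strategy is therefore to verify that $\Phi$ is a completely positive trace-preserving map into $\S$ satisfying the same characterizing orthogonality relation as $\E_\S$, and then appeal to uniqueness.

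First I would check that $\Phi$ lands in $\S$. Conjugating the hypothesis $U \S U^\dagger = \S$ by $U^\dagger$ yields $U^\dagger \S U = \S$, so since $\E_\S(U a U^\dagger) \in \S$ we get $\Phi(a) \in U^\dagger \S U = \S$. Next, $\Phi$ is completely positive as a composition of the $*$-automorphism $\mathrm{Ad}_U$ with the completely positive map $\E_\S$, and it is trace preserving because $\tr(\Phi(a)) = \tr(\E_\S(U a U^\dagger)) = \tr(U a U^\dagger) = \tr(a)$, using cyclicity of the trace together with trace preservation of $\E_\S$.

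The one substantive step is verifying $\tr(\Phi(a) b) = \tr(a b)$ for all $a \in \M$ and $b \in \S$. Here I would use cyclicity to transfer the conjugation onto $b$, writing $\tr(\Phi(a) b) = \tr(\E_\S(U a U^\dagger)\, U b U^\dagger)$, and then observe that $b' := U b U^\dagger \in U \S U^\dagger = \S$. The defining property of $\E_\S$ applied with $b' \in \S$ gives $\tr(\E_\S(U a U^\dagger)\, b') = \tr(U a U^\dagger\, b') = \tr(a b)$ after a final use of cyclicity. This is precisely where the normalizer hypothesis $U \S U^\dagger = \S$ is indispensable: it is what guarantees $b' \in \S$, so that the characterizing identity of the conditional expectation is applicable.

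Having shown that $\Phi$ is a completely positive, trace-preserving map into $\S$ obeying $\tr(\Phi(a) b) = \tr(a b)$ for all $a \in \M$, $b \in \S$, the uniqueness of the trace-preserving conditional expectation forces $\Phi = \E_\S$, which is the assertion. I expect no real obstacle here beyond bookkeeping of which side $U$ acts on; the only place that genuinely uses the normalizer assumption, as opposed to an arbitrary unitary, is the identification $U b U^\dagger \in \S$.
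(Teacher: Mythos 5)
Your proof is correct, and it takes a different route from the paper's. The paper argues by direct trace duality: it shows $\tr(\E_\S(U^\dagger a U)\, b) = \tr(U^\dagger \E_\S(a) U\, b)$ for \emph{all} $b \in \M$, via a chain of equalities that uses the self-adjointness of $\E_\S$ with respect to the trace pairing, cyclicity, and the normalizer hypothesis (to know $U \E_\S(b) U^\dagger \in \S$ and $U^\dagger \E_\S(a) U \in \S$), and then concludes operator equality because the pairings agree against every element. You instead package the argument through the uniqueness clause in the paper's definition of the conditional expectation: you form $\Phi = \mathrm{Ad}_{U^\dagger} \circ \E_\S \circ \mathrm{Ad}_U$, check it is a completely positive trace-preserving map into $\S$ satisfying $\tr(\Phi(a) b) = \tr(ab)$ for $b \in \S$, and conclude $\Phi = \E_\S$. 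The two proofs use the normalizer hypothesis at exactly the same point (conjugation of elements of $\S$ stays in $\S$, so the defining pairing identity applies), but yours only needs to test against $b \in \S$ and then cites uniqueness, whereas the paper tests against all of $\M$ and so never needs to invoke uniqueness explicitly; the paper's computation also records along the way the intermediate fact that conjugation can be moved in or out of $\E_\S$ on elements of $\S$, which your argument subsumes silently. Your version is more modular and arguably cleaner; the paper's is more self-contained. One small remark: the uniqueness you cite only requires the range condition and the trace pairing (given two such maps into $\S$, their difference pairs to zero against $\S$, hence vanishes), so your verification of complete positivity and trace preservation, while harmless and matching the paper's phrasing of uniqueness, is not logically needed.
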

\begin{proof}
By assumption, $U^\dagger \E_\S (a) U \in \S$. Obviously, $\E_\S(U^\dagger a U) \in \S$ as well. If $a \in \S$, then $U^\dagger \E_\S (a) U = U^\dagger a U$. Hence $\E_\S(U^\dagger a U) = \E_\S(U^\dagger \E_\S(a) U) = U^\dagger \E_\S(a) U$ by $U \S U^\dagger = \S$. Hence when $a \in \S$, we can move the $U$-conjugation inside or outside of $\E_\S$.

In general, for any $b \in \M$, $\tr(\E_\S(U^\dagger a U) b) = \tr(a U \E_\S(b) U^\dagger) = \tr(\E_\S(a) U \E_\S(b) U^\dagger) = \tr(U^\dagger \E_S(a) U \E_\S(b)) = \tr(\E_\S(U^\dagger \E_S(a) U) \E_\S(b)) = \tr(\E_\S(U^\dagger \E_S(a) U) b) = \tr(U^\dagger \E_S(a) U b)$.

Since $\tr(\E_\S(U^\dagger a U) b) = \tr(U^\dagger \E_\S(a) U b)$ for all $a, b \in \M$, we conclude that $\E_\S(U^\dagger a U) = U^\dagger \E_\S(a) U$ for all $a \in \M$.
\end{proof}

\subsubsection{Double-correlation Extraction}
\newcommand{\obs}{\mathcal{O}}
For qubit systems $A$ and $B$, let $C_{\obs_A \rightarrow V_B}$ denote the controlled gate that performs the unitary $V_B$ if the binary observable $\obs_A$ is in the ``$-1$" eigenstate. For example, $C_{\Z_A \rightarrow X_B}$ is the standard controlled not gate when $\Z$ eigenstates define the computational basis.
\begin{lemma}
Let $A$ and $B$ be a pair of quantum systems. Let $\obs_i \in \{\X_A,\Y_A,\Z_A\}$, $V_j \in \{X_B,Y_B,Z_B\}$, $n \neq m \neq i$, and $k \neq l \neq j$, where $i,j,k,l,n,m \in \{1,2,3\}$ index the Pauli matrices $X,Y,Z$. Then
\begin{equation} 
C_{\obs_i \rightarrow V_j} \implies (\obs_n \rightarrow \obs_n \V_j), (\obs_m \rightarrow \obs_m \V_j), (\V_k \rightarrow \obs_i \V_k), (\V_l \rightarrow \obs_i \V_l)
\end{equation}
and similarly for $C_{\W_A \rightarrow V_B}$, where 
\end{lemma}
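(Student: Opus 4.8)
The plan is to compute the Heisenberg-picture action of the controlled gate directly by conjugation, which reduces the entire statement to two short Pauli (anti)commutation identities. First I would write the gate in spectral form. Writing $P^{(i)}_\pm = \frac{1}{2}(\id_A \pm \obs_i)$ for the projectors onto the $\pm 1$ eigenspaces of the binary observable $\obs_i$ on $A$ (and dropping the $i$ superscript hereafter), the definition ``apply $V_j$ on $B$ iff $\obs_i$ is in its $-1$ eigenstate'' reads
\[ C_{\obs_i \rightarrow V_j} \pl = \pl P_+ \ten \id_B \pl + \pl P_- \ten V_j \pl. \]
Since $V_j$ is a Hermitian involution ($V_j^2 = \id_B$), the gate $C := C_{\obs_i \rightarrow V_j}$ is itself Hermitian and unitary, hence an involution $C = C^\dagger = C^{-1}$; the Heisenberg action $O \mapsto C O C$ is therefore unambiguous and I need only evaluate it on each generator.

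The two facts I would record are: for $n \neq i$ the Pauli $\obs_n$ anticommutes with $\obs_i$, so $\obs_n P_\pm = P_\mp \obs_n$; and for $k \neq j$ the Pauli $V_k$ anticommutes with $V_j$, so $V_j V_k V_j = -V_k$. Together with $P_\pm^2 = P_\pm$, $P_+ P_- = 0$, $P_+ + P_- = \id_A$, and $P_+ - P_- = \obs_i$, the computations are immediate. For a transverse $A$-observable, expanding $C(\obs_n \ten \id_B)C$ into four terms, the anticommutation relation gives $P_\pm \obs_n P_\pm = P_\pm P_\mp \obs_n = 0$, so only the two cross terms (each carrying $\ten V_j$) survive and
\[ C(\obs_n \ten \id_B) C \pl = \pl (P_+ + P_-)\obs_n \ten V_j \pl = \pl \obs_n \ten V_j \pl, \]
establishing $\obs_n \rightarrow \obs_n V_j$ and identically $\obs_m \rightarrow \obs_m V_j$. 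For a transverse $B$-observable, $P_+ P_- = 0$ kills the cross terms of $C(\id_A \ten V_k)C$, leaving
\[ C(\id_A \ten V_k) C \pl = \pl P_+ \ten V_k + P_- \ten V_j V_k V_j \pl = \pl (P_+ - P_-) \ten V_k \pl = \pl \obs_i \ten V_k \pl, \]
using $V_j V_k V_j = -V_k$; this gives $V_k \rightarrow \obs_i V_k$ and likewise $V_l \rightarrow \obs_i V_l$.

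For completeness I would check that the control and target observables are fixed: the same expansion with $\obs_i P_\pm = \pm P_\pm$ yields $C(\obs_i \ten \id_B)C = \obs_i \ten \id_B$, and since $V_j$ commutes with itself $C(\id_A \ten V_j)C = \id_A \ten V_j$; thus the gate entangles only the four transverse observables, as claimed. Because conjugation by the unitary $C$ is a $*$-automorphism, the displayed generator maps determine the images of the generated subalgebras, giving the stated $\obs_n \V_j$ and $\obs_i \V_k$ forms. The $C_{\W_A \rightarrow V_B}$ generalization requires no new work: the computation used only that the control is a norm-one Hermitian involution obeying the listed anticommutation pattern, so replacing $\obs_i$ by any binary observable $\W_A$ reproduces the result verbatim. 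The only real care is bookkeeping of signs under the stated $-1$-eigenstate convention and keeping the anticommutation relations straight; there is no structural obstacle.
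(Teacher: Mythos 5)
Your proof is correct, and it takes a recognizably different route from the paper's. The paper argues at the circuit level: since $\mathcal{O}_n$ anticommutes with the control observable $\mathcal{O}_i$, applying $\mathcal{O}_n$ before the gate flips the control value, so interchanging $\mathcal{O}_n$ with $C_{\mathcal{O}_i \rightarrow V_j}$ costs an extra application of $V_j$; the target-side relations are argued symmetrically, and invariance of $\mathcal{O}_i$ and $V_j$ is asserted from commutation. That argument is quick but entirely verbal --- it never writes the gate down. Your spectral decomposition $C = P_+ \otimes 1 + P_- \otimes V_j$ turns the same two inputs (Pauli anticommutation and the fact that everything in sight is a Hermitian involution) into explicit four-term projector computations in which every cancellation is visible: the identities $\mathcal{O}_n P_\pm = P_\mp \mathcal{O}_n$ and $P_+P_-=0$ do exactly the work that ``flipping the control'' does in the paper. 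What your version buys: the invariance of the control and target observables is computed rather than asserted; the passage from generator images to images of the generated subalgebras is justified by noting conjugation is a $*$-automorphism; and the generalization to an arbitrary binary control $\mathcal{W}_A$ (which the paper's truncated statement gestures at) is immediate, since your computation used only that the control is a Hermitian involution anticommuting with the transverse observables. What the paper's version buys is brevity and operational intuition --- flipping the control is the same as toggling the target --- which is precisely the physical content behind your identity $C(\mathcal{O}_n \otimes 1)C = \mathcal{O}_n \otimes V_j$.
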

\begin{proof}
Conveniently, $W$ and $\W$ are interchangeable $\forall W \in \{X,Y,Z\}$, since the Pauli matrices happen to be Hermitian and unitary. Hence we may ignore the distinction. Also conveniently, $U_i^2 = V_j^2 = 1$ $\forall i,j$. Applying the operation $U_n$ or $U_m$ prior to this controlled gate inverts the value of $\U_i$, which flips whether or not $W_B$ would have been performed. Hence interchanging $C_{\U_i \rightarrow V_j}$ with $U_n$ or $U_m$ inverts the application of $V_j$, which is equivalent to applying $V_j$. The observable $V_j$ is performed depending on the state of $\U_i$, inverting $\V_k$ and $\V_l$. Hence $\V_k \rightarrow \U_i \V_k$ and $\V_l \rightarrow \U_i \V_l$. $\U_i$ and $\V_j$ both commute with this gate and are invariant.
\end{proof}

\begin{proof} (of Corollary \ref{cor:final})
The main trick is to use Theorem \ref{thm:entlcovar} with $R = \braket{Y_A, Y_B}$. $\E_{\braket{Y_A, Y_B}}(\rho) = \E_{\braket{\A, Y_B}} \E_{\braket{Y_A, \B}}(\rho)$ by commuting squares. Furthermore, $\E_{\braket{Y_A, \B}} = \E_{\Y_A} \otimes \id^B$, and $\E_{\Y_A}(\rho) = \frac{1}{2}(\rho + Y_A \rho Y_A)$. $Y_A \S Y_A = \braket{-X_A, -Z_A Z_B} = \S$. Similar arguments hold for $Y_B$ with $\S$ and for $\T$. Hence $\R$ is generated by a convex combination of unitaries that commute with $\E_\S$ and $\E_\T$. It is easy to see that it commutes with $\E_{\S \T}$, and $\E_{\S \cap \T}$, as the former is the whole $\A \B$ algebra, and the latter is $\CC 1$. We then check that $\R \cap \braket{Z_A, Z_B} = \CC 1 = \R \cap \braket{X_A, Z_A Z_B}$. Similarly, $\R \cap \braket{X_A, X_B} = \CC 1 = \R \cap \braket{X_A X_B, Z_B}$. The latter algebras generate the same joint, and their intersection remains $\CC 1$. Hence this is a valid transformation.

Once we have $\S = X_A, Z_A Z_B$, with $\T = X_A X_B, Z_B$, we apply the unitary $C_{Z_B \rightarrow X_A}$. This changes the algebras to $\S = \A$, and $\T = \B$. For the state,
\begin{equation}
\begin{split}
\ket{\uparrow_Y \uparrow_Y} = &  \frac{1}{2} ((\ket{00} - \ket{11}) + i (\ket{01} + \ket{10})) \\
\rightarrow  & \frac{1}{2} ((\ket{00} - \ket{01}) + i (\ket{11} + \ket{10}))
=  \frac{1}{\sqrt{2}} (\ket{0-} + i \ket{1+}) .
\end{split}
\end{equation} \qd
\end{document}